\let\oldvec\vec 
\let\vec\oldvec 
\DeclareMathAlphabet{\mathcal}{OMS}{cmsy}{m}{n}
\newcommand{\N}{\mathbb{N}}
\newcommand{\R}{\mathbb{R}}
\definecolor{myblue}{RGB}{80,80,160}
\definecolor{mygreen}{RGB}{80,160,80}
\newenvironment{rtheorem}[3][]{%
\noindent\ifthenelse{\equal{#1}{}}{\bf #2 #3.}{\bf #2 #3 (#1)}%
\begin{it}}{\end{it}}
\DeclareMathAlphabet{\mathcal}{OMS}{cmsy}{m}{n}
\newcolumntype{x}[1]{>{\centering\arraybackslash\hspace{0pt}}p{#1}}
\title{Tight Inefficiency Bounds for Perception-Parameterized \newline 
Affine Congestion Games}
\author{Pieter Kleer\inst{1} \and Guido Sch\"afer\inst{1,2}}
\authorrunning{P. Kleer \and G.\ Sch\"afer} 
\institute{%
  Centrum Wiskunde \& Informatica (CWI), Networks and Optimization Group \\
  \and
  Vrije Universiteit Amsterdam, Department of Econometrics and Operations Research \\ Amsterdam, The Netherlands \\ 
  \email{kleer@cwi.nl, schaefer@cwi.nl}
}
\begin{document}

\maketitle

\begin{abstract}
Congestion games constitute an important class of non-cooperative games which was introduced by Rosenthal in 1973. In recent years, several extensions of these games were proposed to incorporate aspects that are not captured by the standard model. Examples of such extensions include the incorporation of risk sensitive players, the modeling of altruistic player behavior and the imposition of taxes on the resources. These extensions were studied intensively with the goal to obtain a precise understanding of the inefficiency of equilibria of these games. 

In this paper, we introduce a new model of congestion games that captures these extensions (and additional ones) in a unifying way. The key idea here is to parameterize both the perceived cost of each player and the social cost function of the system designer. Intuitively, each player perceives the load induced by the other players by an extent of $\rho \ge 0$, while the system designer estimates that each player perceives the load of all others by an extent of $\sigma \ge 0$.
The above mentioned extensions reduce to special cases of our model by choosing the parameters $\rho$ and $\sigma$ accordingly. As in most related works, we concentrate on congestion games with affine latency functions here. 

Despite the fact that we deal with a more general class of congestion games, we manage to derive tight bounds on the price of anarchy and the price of stability for a large range of parameters. Our bounds provide a complete picture of the inefficiency of equilibria for these perception-parameterized congestion games. As a result, we obtain tight bounds on the price of anarchy and the price of stability for the above mentioned extensions. Our results also reveal how one should ``design'' the cost functions of the players in order to reduce the price of anarchy. Somewhat counterintuitively, if each player cares about all other players to the extent of $0.625$ only (instead of $1$ in the standard setting) the price of anarchy reduces from $2.5$ to $2.155$ and this is best possible.
\end{abstract}

\newpage
\setcounter{page}{1}

\section{Introduction}\label{sec:intro}

Congestion games constitute an important class of non-cooperative games which was introduced by Rosenthal in 1973 \cite{Rosenthal1973}. In a congestion game, we are given a set of resource from which a set of players can choose. Each resource is associated with a cost function which specifies the cost of this resource depending  on the total number of players using it. Every player chooses a subset of resources (from a set of resource subsets available to her) and experiences a cost equal to the sum of the costs of the chosen resources. 
Congestion games are both theoretically appealing and practically relevant. For example, they have applications in network routing, resource allocation and scheduling problems.

Rosenthal \cite{Rosenthal1973} proved that every congestion game has a pure Nash equilibrium, i.e., a strategy profile such that no player can decrease her cost by unilaterally deviating to another feasible set of resources. 
This result was established through the use of an exact potential function (known as \emph{Rosenthal potential}) satisfying that the cost difference induced by a unilateral player deviation is equal to the potential difference of the respective strategy profiles. In fact, Monderer and Shapley \cite{MS96} showed that the class of games admitting an exact potential function is isomorphic to the class of congestion games. 

One of the main research directions in algorithmic game theory focusses on quantifying the inefficiency caused by selfish behavior. The idea is to assess the quality of a Nash equilibrium relative to an optimal outcome. Here the quality of an outcome is measured in terms of a given \emph{social cost} objective (e.g., the sum of the costs of all players). Koutsoupias and Papadimitriou \cite{Koutsoupias1999} introduced the \textit{price of anarchy} as the ratio between the worst social cost of a Nash equilibrium and the social cost of an optimum. Anshelevich et al.~\cite{Anshelevich2004} defined the \textit{price of stability} as the ratio between the best social cost of a Nash equilibrium and the social cost of an optimum. 


In recent years, several extensions of Rosenthal's congestion games were proposed to incorporate aspects which are not captured by the standard model. For example, these extensions include risk sensitivity of players in uncertain settings \cite{Piliouras2013}, altruistic player behavior \cite{Caragiannis2010Altruism,Chen2014} and congestion games with taxes \cite{Caragiannis2010,Singh}. We elaborate in more detail on these extensions in Section~\ref{sec:prelim}. These games were studied intensively with the goal to obtain a precise understanding of the price of anarchy.

In this paper, we introduce a new model of congestion games, which we term \emph{perception-parameterized congestion games}, that captures all these extensions (and more) in a unifying way. The key idea here is to parameterize both the perceived cost of each player and the social cost function. Intuitively, each player perceives the load induced by the other players by an extent of $\rho \ge 0$, while the system designer estimates that each player perceives the load of all others by an extent of $\sigma \ge 0$.
The above mentioned extensions reduce to special cases of our model by choosing the parameters $\rho$ and $\sigma$ accordingly. 

Despite the fact that we deal with a more general class of congestion games, we manage to derive tight bounds on the price of anarchy and the price of stability for a large range of parameters. Our bounds provide a complete picture of the inefficiency of equilibria for these perception-parameterized congestion games. As a consequence, we obtain tight bounds on the price of anarchy and the price of stability for the above mentioned extensions. While the price of anarchy bounds are (mostly) known from previous results, the price of stability results are new. 
As in \cite{Caragiannis2010,Caragiannis2010Altruism,Chen2014,Piliouras2013,Singh}, we focus here on congestion games with affine cost functions.\footnote{Conceptually, our model can easily be adapted to more general latency functions; only the derivation of the respective bounds is analytically much more involved.}

We illustrate our model by means of a simple example; formal definitions of our perception-parameterized congestion games are given in Section~\ref{sec:prelim}. Suppose we are given a set of $m$ resources and that every player has to choose precisely one of these resources. The cost of a resource $e \in [m]$\footnote{Given a positive integer $m$, we use $[m]$ to refer to the set $\{1, \dots, m\}$.} is given by a cost function $c_e$ that maps the \emph{load} on $e$ to a real value. In the classical setting, the load of a resource $e$ is defined as the total number of players $x_e$ using $e$. That is, the cost that player $i$ experiences when choosing resource $e$ is $c_e(x_e)$. In contrast, in our setting players have different perceptions of the load induced by the other players. More precisely, the \emph{perceived load} of player $i$ choosing resource $e$ is $1+ \rho(x_e - 1)$, where $\rho \ge 0$ is a parameter. Consequently, the perceived cost of player $i$ for choosing $e$ is $c_e(1+ \rho(x_e - 1))$. Note that as $\rho$ increases players care more about the presence of  other players.\footnote{In general, the parameter $\rho$ can be player- or resource-specific, but in this work we concentrate on the homogenous case (i.e., all players have the same perception parameter).} In addition, we introduce a similar parameter $\sigma \ge 0$ for the social cost objective. Intuitively, this can be seen as the system designer's estimate of how each player perceives the load of the other players. In our example, the social cost is defined as $\sum_{e \in [m]} c_e(1+\sigma(x_e-1)) x_e$.

\paragraph{Our Results.} In Section \ref{sec:poa}, we prove a bound of
\begin{equation}\label{eq:intro_poa}
\max\left\{\rho + 1, \frac{2\rho(1 + \sigma) + 1}{\rho + 1} \right\}
\end{equation}
on the price of anarchy of affine congestion games for a large range or parameters $(\rho,\sigma)$ (see also Figure~\ref{fig:c1} for an illustration). For general affine congestion games we prove that this bound is tight. Further, even for the special case of symmetric network congestion games we show that the bound $(2\rho(1+\sigma)+1)/(\rho + 1)$ is asymptotically tight (on the range for which it is attained). 

In Section \ref{sec:pos}, we give a bound of
\begin{equation}\label{eq:intro_pos}
\frac{\sqrt{\sigma(\sigma+2)} + \sigma}{\sqrt{\sigma(\sigma+2)} + \rho - \sigma} 
\end{equation}
on the price of stability of affine congestion games for a large range or pairs $(\rho,\sigma)$ (see below for details). For general affine congestion games we show that this bound is tight. For symmetric network congestion games we give a better (tight) bound on the price of stability for the case $\sigma = 1$ and $\rho \geq 0$ arbitrary (details are given in Section \ref{sec:misc}).

\begin{table}[t]
\centering
\begin{tabular}{|x{5.2cm}|x{2.6cm}|x{3.2cm}|x{1.8cm}|}
\hline
\textbf{Model} &  \textbf{Parameters} & \textbf{PoA} & \textbf{Reference} \\ 
\hline
\hline
Classical & $\rho = \sigma = 1$ & 5/2 & \cite{Christodoulou2005}  \\
Altruism (1)  & $\sigma = 1$, $1 \leq \rho \leq  2$  & $(4\rho+1)/(1+\rho)$ & \cite{Caragiannis2010Altruism,Chen2014} \\
Altruism (2)  & $\sigma = 1$, $2 \leq \rho \leq  \infty$  & $\rho + 1$ & \cite{Chen2014} \\
Uncertainty: risk neutral-players & $\sigma = \rho = 1/2$ & 5/3 &  \cite{Piliouras2013} \\
Uncertainty: Wald's minimax principle & $\sigma = 1/2$, $\rho = 1$ & 2 & \cite{Caragiannis2012,Piliouras2013} \\
Constant universal taxes & $\sigma = 1$, $\rho = h(1)$. & $2.155$ & \cite{Caragiannis2010}  \\
Generalized affine congestion games & $-$ & $\infty$ & $[^*]$ \\
\hline
\end{tabular}

\medskip

\begin{tabular}{|x{5.2cm}|x{2.6cm}|x{3.2cm}|x{1.8cm}|}
\hline
\textbf{Model} &  \textbf{Parameters} & \textbf{PoS} & \textbf{Reference} \\ 
\hline
\hline
Classical & $\rho = \sigma = 1$ & $1.577$ & \cite{Caragiannis2010}  \\
Altruism (1)  & $\sigma = 1$, $1 \leq \rho \leq  2$  & $(\sqrt{3} + 1)/(\sqrt{3} + \rho - 1)$ & $[^*]$ \\
Uncertainty: risk neutral-players & $\sigma = \rho = 1/2$ & $1.447$ &  $[^*]$ \\
Uncertainty: Wald's minimax principle & $\sigma = 1/2$, $\rho = 1$ & $1$ & $[^*]$ \\
Constant universal taxes & $\sigma = 1$, $\rho = h(1)$ & $2.013$ & $[^*]$  \\
Generalized affine congestion games & $-$ & $2$ & $[^*]$ \\
\hline
\end{tabular}
\medskip
\caption{\label{tab:poa}%
An overview of (tight) price of anarchy and price of stability results for certain values of $\rho$ and $\sigma$. Here $h(1) \approx 0.625$ (see Theorem~\ref{thm:poa_upper} for a formal definition). An asterisk indicates that this result is new.}
\end{table}

An overview of the price of anarchy and the price of stability results that we obtain from \eqref{eq:intro_poa} and \eqref{eq:intro_pos} for several applications known in the literature is given in Table~\ref{tab:poa}. The respective references where these bounds were established first are given in the rightmost column. The connection between these applications and our model is discussed in detail at the end of Section~\ref{sec:prelim}.

In light of the above bounds, we obtain an (almost) complete picture of the inefficiency of equilibria (parameterized by $\rho$ and $\sigma$); for example, see Figure~\ref{fig:sigma1} for an illustration of the price of anarchy if $\sigma = 1$. Note that the price of anarchy \emph{decreases} from $\frac{5}{2}$ for $\rho = 1$ to $2.155$ for $\rho = h(1) \approx 0.625$.\footnote{The price of anarchy for $\rho = h(1)$ was first established by Caragiannis et al. \cite{Caragiannis2010}. However, our bounds reveal that the price of anarchy is infact minimized at $\rho = h(1)$ (see also Figure~\ref{fig:sigma1}).} We refer the reader to Section~\ref{sec:misc} for further remarks and discussions of the results.


\section{Our Model, Applications and Related Work}\label{sec:prelim} 

We first formally introduce our model of congestion games with parameterized perceptions. We then show that our model subsumes several other models that were studied in the literature as special cases. 

\paragraph{Congestion Games.} 

A \emph{congestion game} $\Gamma$ is given by a tuple $(N,E,(\mathcal{S}_i)_{i\in N},(c_e)_{e\in E})$ where $N = [n]$ is the set of players, $E$ the set of resources (or facilities), $\mathcal{S}_i \subseteq 2^{E}$ the set of strategies of player $i$, and $c_e : \R_{\ge 0} \rightarrow \R_{\geq 0}$  the cost function of facility $e$. Given a strategy profile $s = (s_1,\dots,s_n) \in \times_i \mathcal{S}_i$, we define $x_e$ as the number of players using resource $e$, i.e., $x_e = x_e(s) = |\{i \in N : e \in s_i\}|$. 
If $\mathcal{S}_i = \mathcal{S}_j$ for all $i,j \in N$, the game is called \textit{symmetric}. For a given graph $G = (V,E)$, we call $\Gamma$ a \textit{(directed) network congestion game} if for every player $i$ there exist $s_i,t_i \in V$ such that $\mathcal{S}_i$ is the set of all (directed) $(s_i,t_i)$-paths in $G$. 
An \textit{affine congestion game} has cost functions of the form $c_e(x) = a_ex + b_e$ with $a_e,b_e \geq 0$. If $b_e = 0$ for all $e \in E$, the game is called \textit{linear}. 

\paragraph{Our Model.}
We introduce our unifying model of perception-parameterized congestion games with affine latency functions.
For a fixed parameter $\rho \geq 0$, we define the cost of player $i \in N$ by  
\begin{equation}\label{eq:playercost}
C_i^\rho(s) = \sum_{e \in s_i} c_e(1 + \rho(x_e - 1)) = a_e[1 + \rho(x_e-1)] + b_e
\end{equation}
for a given strategy profile $s = (s_1,\dots,s_n)$. For a fixed parameter $\sigma \geq 0$, the social cost of a strategy profile $s$ is given by 
\begin{equation}\label{eq:socialcost}
C^\sigma(s) = \sum_i C_i^\sigma(s) = \sum_{e \in E}x_e(a_e[1 + \sigma(x_e-1)] + b_e).
\end{equation}
 We refer to the case $\rho = \sigma = 1$ as the \textit{classical congestion game} with cost functions $c_e(x) = a_e x + b_e$ for all $e \in E$.

\paragraph{Inefficiency of Equilibria.} 

A strategy profile $s$ is a \emph{Nash equilibrium} if for all players $i \in N$ it holds that $C_i^\rho(s) \leq C_i^\rho(s_i',s_{-i})$ for all $s_i' \in \mathcal{S}_i$, where $(s_i',s_{-i})$ denotes the strategy profile in which player $i$ plays $s_i'$ and all the other players their strategy in $s$. The price of anarchy (PoA) and price of stability (PoS) of a game $\Gamma$ are defined as
$$
\text{PoA}(\Gamma,\rho,\sigma) = \frac{\max_{s \in NE} C^\sigma(s)}{\min_{s^* \in \times_i \mathcal{S}_i}C^\sigma(s^*)} \ \ \ \text{and} \ \ \  \text{PoS}(\Gamma,\rho,\sigma) = \frac{\min_{s \in NE} C^\sigma(s)}{\min_{s^* \in \times_i \mathcal{S}_i}C^\sigma(s^*)},
$$
where NE = NE$(\rho)$ denotes the set of Nash equilibria with respect to the player costs as defined in (\ref{eq:playercost}). For a collection of games $\mathcal{H}$ we define $\text{PoA}(\mathcal{H},\rho,\sigma) = \sup_{\Gamma \in \mathcal{H}} \text{PoA}(\Gamma,\rho,\sigma)$ and  $\text{PoS}(\mathcal{H},\rho,\sigma) = \sup_{\Gamma \in \mathcal{H}} \text{PoS}(\Gamma,\rho,\sigma)$.
Rosenthal \cite{Rosenthal1973} shows that classical congestion games (i.e., $\rho = \sigma = 1)$ have an exact potential function: $\Phi : \times_i \mathcal{S}_i \rightarrow \R$ is an \textit{exact potential function} for a congestion game $\Gamma$ if for every strategy profile $s$, for every $i \in N$ and every $s'_i \in \mathcal{S}_i$: \ $\Phi(s) - \Phi(s_{-i},s_i') = C_i(s) - C_i(s_{-i},s_i')$. The Rosenthal potential $\Phi(s) = \sum_{e \in E} \sum_{k=1}^{x_e} c_e(k)$ is an exact potential function for classical congestion games. 

\paragraph{Applications.}

We review various models that fall within, or are related to, the framework proposed above (for certain values of $\rho$ and $\sigma$). These models sometimes interpret the parameters differently than explained above.

\smallskip
\textit{Altruism \cite{Caragiannis2010Altruism,Chen2014}.} We can rewrite the cost of player $i$ as
$$
C^\rho_i(s) = \sum_{e \in s_i} (a_ex_e + b_e) + (\rho - 1)a_e(x_e-1).
$$
The term $(\rho-1)a_e(x_e-1)$ can be interpreted as a ``dynamic'' (meaning load-dependent) tax that players using resource $e$ have to pay.\footnote{In fact, for $\rho = 2$ this corresponds to the \emph{dynamic taxes} as proposed in a technical report by Singh \cite{Singh}.} 
For $1 \leq \rho \leq \infty$ and $\sigma = 1$, this model is equivalent to the altruistic player setting proposed by Caragiannis et al.~\cite{Caragiannis2010Altruism}. Chen et al. \cite{Chen2014} also study this model of altruism for $1 \leq \rho \leq 2$ and $\sigma = 1$. (The equivalence between the altruistic model and our model is proved in Lemma \ref{lem:alt_nash} in the appendix.)

\smallskip
\textit{Constant taxes \cite{Caragiannis2010}.} We can rewrite the cost of player $i$ as 
$$
C^\rho_i(s) = \sum_{e \in s_i} \rho a_e x_e + (1-\rho)a_e + b_e.
$$
Now a strategy profile $s$ is a Nash equilibrium if for every player $i$ and every $s'_i \in \mathcal{S}_i$
$$
\sum_{e \in s_i} \rho a_e x_e + (1-\rho)a_e + b_e \leq \sum_{e \in s_i \cap s_i'} \rho a_e x_e + (1-\rho)a_e + b_e + \sum_{e \in s_i'\setminus s_i} \rho a_e (x_e+1) + (1-\rho)a_e + b_e.
$$
Dividing by $\rho$ gives that $s$ is also a Nash equilibrium for the cost functions 
$$
T^\rho_i(s) = \sum_{e \in s_i}  a_e x_e  + \frac{b_e}{\rho} + \sum_{e \in s_i} \frac{1 - \rho}{\rho} a_e.
$$
That is, $s$ is a Nash equilibrium in a classical congestion game in which players take into account constant resource taxes of the form $(1-\rho)/\rho  \cdot a_e$. Caragiannis, Kaklamanis and Kanellopoulos \cite{Caragiannis2010} study this type of taxes, which they call \textit{universal tax functions}, for $\rho$ satisfying $(1-\rho)/\rho = 3/2\sqrt{3} - 2$. They consider these taxes to be \emph{refundable}, i.e., they are not taken into account in the social cost, which is equivalent to the case $\sigma = 1$. Note that the function $\tau: (0,1] \rightarrow [0,\infty)$ defined by $\tau(\rho) = (1-\rho)/\rho$ is bijective.\footnote{This relation between altruism (or spite) and constant taxes is also mentioned by Caragiannis et al. \cite{Caragiannis2010Altruism}.}

\smallskip
\textit{Risk sensitivity under uncertainty \cite{Piliouras2013}.} Nikolova, Piliouras and Shamma \cite{Piliouras2013} consider congestion games in which there is a (non-deterministic) order of the players on every resource. A player is only affected by players in front of her. That is, the load on resource $e$ for player $i$ in a strict ordering $r$, where $r_e(i)$ denotes the position of player $i$, is given by $x_e(i) = |\{j \in N : r_e(j) \leq r_e(i)\}|$. The cost of player $i$ is then $C_i(s) = \sum_{e \in s_i} c_e(x_e(i))$. Note that $x_e(i)$ is a random variable if the ordering is non-deterministic. The social cost of the model is defined by the sum of all player costs, 
$$
C^{\frac{1}{2}}(s) = \sum_{e \in E} a_e \frac{x_e(x_e+1)}{2} + b_e
$$
which is independent of the ordering $r$. (This holds because in every ordering there is always one player first, one player second, and so on.) Note that the social cost corresponds to the case $\sigma = \frac{1}{2}$ in our framework. Nikolova et al. \cite{Piliouras2013} study various risk attitudes towards the ordering $r$ that is assumed to have a uniform distribution over all possible orderings. The two relevant attitudes are that of \textit{risk-neutral} players and players applying \emph{Wald's minimax principle}. Risk-neutral players define their cost as the expected cost under the ordering $r$, which correspond to the case $\rho = \frac{1}{2}$ in (\ref{eq:playercost}). This can roughly be interpreted as that players expect to be scheduled in the middle on average. Wald's minimax principle implies that players assume a worst-case scenario, i.e., being scheduled last on all the resources. This corresponds to the case $\rho = 1$.

\smallskip
\textit{Approximate Nash equilibria \cite{Christodoulou2011}.} Suppose that $s$ is a Nash equilibrium under the cost functions defined in (\ref{eq:playercost}). Then, in particular, we have
$$
C^1_i(s) \leq C^\rho_i(s) \leq C^\rho_i(s_i', s_{-i}) \leq \rho C^1_i(s_i', s_{-i})
$$
for any player $i$ and $s_i' \in \mathcal{S}_i$ and $\rho \geq 1$. That is, we have $C^1_i(s) \leq \rho\cdot C^1_i(s_i', s_{-i})$ which means that the strategy profile $s$ is a $\rho$-approximate equilibrium, as studied by Christodoulou, Koutsoupias and Spirakis \cite{Christodoulou2011}. In particular, this implies that any upper bound on the price of anarchy, or price of stability, in our framework yields an upper bound on the price of stability for $\rho$-approximate equilibria for the same class of games.


\smallskip
\textit{Generalized affine congestion games.} Let $\mathcal{A}'$ denote the class of all congestion games $\Gamma$ for which all resources have the same cost function $c(x) = a x + b$, where  $a = a(\Gamma)$ and $b = b(\Gamma)$ satisfy $a \geq 0$ and $a+b > 0$. The class of affine congestion games with non-negative coefficients is contained in $\mathcal{A}'$ since every such game can always be transformed\footnote{This transformation can be done in such a way that both PoA and PoS of the game do not change. For a proof the reader is referred to, e.g., \cite[Lemma 4.3]{Chen2014}.} into a game $\Gamma'$ with $a_e = 1$ and $b_e = 0$ for all resources $e \in E'$, where $E'$ is the resource set of $\Gamma'$.
Without loss of generality  we can assume that $a + b = 1$, since the cost functions can be scaled by $1/(a+b)$. The cost functions of  $\Gamma \in \mathcal{A}'$ can then equivalently be written as $c(x) = \rho x + (1 - \rho)$ for $\rho \geq 0$. This is precisely the definition of $C^\rho_i(s)$ (with $a_e = 1$ and $b_e = 0$ taken there). In particular, if we take $\sigma = \rho$, meaning that $ C^\rho(s) = \sum_{i \in N} C^\rho_i(s)$, we have
$$
\text{PoA}(\mathcal{A}') = \sup_{\rho \geq 0} \ \text{PoA}(\mathcal{A},\rho,\rho)
\quad\text{and}\quad
\text{PoS}(\mathcal{A}') = \sup_{\rho \geq 0} \ \text{PoS}(\mathcal{A},\rho,\rho),
$$
where $\mathcal{A}$ denotes the class of affine congestion games with non-negative coefficients.

\smallskip\noindent
Due to page limitations some material is omitted from the main text below and can be found in the appendix.

\section{Price of Anarchy}\label{sec:poa}

In this section we derive an upper bound on the price of anarchy for affine congestion games of
$$
\max\left\{\rho + 1, \frac{2\rho(1 + \sigma) + 1}{\rho + 1} \right\}
$$
for a wide range of pairs $(\rho,\sigma)$ that captures all known (to us) price of anarchy results in the literature that fall within our model. We show that this bound is tight for general congestion games. We also show that the bound $(2\rho(1 + \sigma) + 1)/(\rho + 1)$ is (asymptotically) tight for symmetric network congestion games, for the range of $(\rho,\sigma)$ on which it is attained. 

\begin{figure}[t]
\centering
\scalebox{0.8}{
\begin{tikzpicture}

\draw[->] (0,0) -- (8.5,0) node[anchor=north] {$\sigma$};
\draw	(0,0) node[anchor=north] {0}
		(2,0) node[anchor=north] {$\frac{1}{2}$}
		(4,0) node[anchor=north] {$1$};

\draw[->] (0,0) -- (0,6) node[anchor=east] {$\rho$};
\draw[very thick] (2,0) -- (2,6);

\draw[dotted] (2,1) -- (8,4);
\draw (9,4) node {$\rho = \sigma$};
\draw[very thick] (2,2) -- (6,6);
\draw (7,6) node {$\rho = 2\sigma$};

\draw (2,0.5) edge[bend left=10, very thick] (8,2.75);   
\draw (9,2.75) node {$\rho = h(\sigma)$};

\draw (3,5) node {$\rho + 1$}; 
\draw (6,4) node {$\dfrac{2\rho(1 + \sigma) + 1}{\rho + 1}$}; 
\draw (6,1) node {$?$}; 

\draw[dotted] (2.5,0.8) -- (2.5,1.25);
\draw[dotted] (3,1.1) -- (3,1.5);
\draw[dotted] (3.5,1.3) -- (3.5,1.75);
\draw[dotted] (4,1.55) -- (4,2);
\draw[dotted] (4.5,1.75) -- (4.5,2.25);
\draw[dotted] (5,2) -- (5,2.5);
\draw[dotted] (5.5,2.2) -- (5.5,2.75);
\draw[dotted] (6,2.4) -- (6,3);
\draw[dotted] (6.5,2.5) -- (6.5,3.25);
\draw[dotted] (7,2.6) -- (7,3.5);
\draw[dotted] (7.5,2.7) -- (7.5,3.75);
\draw[dotted] (8,2.8) -- (8,4);


\end{tikzpicture}}
\caption{The bound $\rho + 1$ holds for $\rho \geq 2\sigma \geq 1$. The bound $(2\rho(1+\sigma)+1)/(1 + \rho)$ holds for $\sigma \leq \rho \leq 2\sigma$. Roughly speaking, this bound also holds for $h(\sigma) \leq \rho \leq \sigma$, but our proof of Theorem \ref{thm:poa_upper} only works for a discretized range of $\sigma$ (hence the vertical dotted lines in this area). The function $h$ is given in Theorem \ref{thm:poa_upper}.} 
\label{fig:c1}
\end{figure}
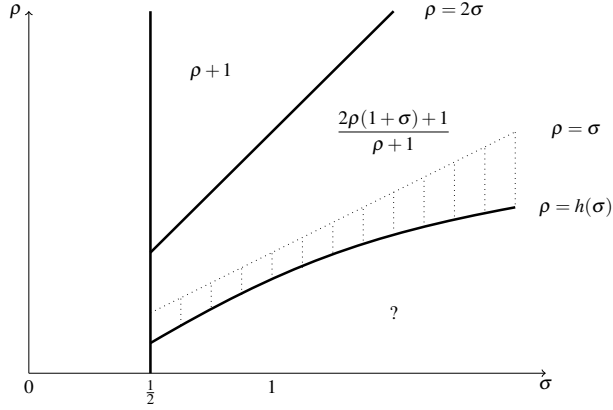
We need the following technical result for the proof of the main result in Theorem \ref{thm:poa_upper}: 

\begin{lemma} \label{lem:main_ineq}
Let $s$ be a Nash equilibrium under the cost functions ${C}_i^\rho(s)$ and let $s^*$ be a minimizer of $C^\sigma(\cdot)$. For $\rho, \sigma \geq 0$ fixed, if there exist $\alpha(\rho,\sigma), \beta(\rho,\sigma) \geq 0$ such that
$$
(1 + \rho\cdot x)y - \rho(x - 1)x - x \leq - \beta(\rho,\sigma)(1 + \sigma(x - 1))x + \alpha(\rho,\sigma)(1 + \sigma(y - 1))y
$$
for all non-negative integers $x$ and $y$, then $\beta(\rho,\sigma) C^\sigma(s) \le \alpha(\rho,\sigma) C^\sigma(s^*)$.
\end{lemma}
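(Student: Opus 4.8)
The plan is to run a \emph{smoothness}-type argument: start from the Nash condition, sum it over all players after bounding each unilateral deviation cost from above, rewrite everything as a sum over resources, and then invoke the pointwise inequality from the hypothesis resource by resource.

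First I would fix the Nash equilibrium $s$ and, for each player $i$, use the feasible deviation $s_i^*$ (her strategy in a social optimum $s^*$) in the equilibrium condition, giving $C_i^\rho(s) \le C_i^\rho(s_i^*, s_{-i})$. Writing $x_e := x_e(s)$ and $y_e := x_e(s^*)$, the crucial observation is that for every $e \in s_i^*$ the load after the deviation is at most $x_e + 1$; since $c_e$ is affine with non-negative coefficients and $\rho \ge 0$, the perceived cost of $e$ to player $i$ after the deviation is therefore at most $a_e(1 + \rho x_e) + b_e$. Summing the resulting inequalities over all players and exchanging the order of summation (each resource $e$ is counted $x_e$ times on the left and $y_e$ times on the right) yields
\begin{equation}\label{eq:plan_nash}
\sum_{e \in E} x_e \bigl( a_e[1 + \rho(x_e - 1)] + b_e \bigr) \;\le\; \sum_{e \in E} y_e \bigl( a_e[1 + \rho x_e] + b_e \bigr).
\end{equation}

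Next I would apply the hypothesis with $x = x_e$, $y = y_e$, multiply through by $a_e \ge 0$, and sum over $e \in E$. Since $\rho(x_e-1)x_e + x_e = x_e[1 + \rho(x_e - 1)]$, the left-hand side of the summed inequality is exactly $\sum_e a_e\bigl(y_e[1+\rho x_e] - x_e[1+\rho(x_e-1)]\bigr)$, which by \eqref{eq:plan_nash} is bounded below by $\sum_e b_e(x_e - y_e)$. Chaining the two estimates and rearranging gives
\[
\beta \sum_{e \in E} a_e x_e[1+\sigma(x_e-1)] \;+\; \sum_{e \in E} b_e x_e \;\le\; \alpha \sum_{e \in E} a_e y_e[1+\sigma(y_e-1)] \;+\; \sum_{e \in E} b_e y_e .
\]
To conclude I would observe that specializing the hypothesis to $(x,y) = (1,0)$ forces $\beta \le 1$ and to $(x,y) = (0,1)$ forces $\alpha \ge 1$; hence $\beta \sum_e b_e x_e \le \sum_e b_e x_e$ and $\sum_e b_e y_e \le \alpha \sum_e b_e y_e$, so the last display becomes exactly $\beta\,C^\sigma(s) \le \alpha\,C^\sigma(s^*)$.

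I do not expect a real obstacle here: this is the canonical primal--dual/smoothness computation, and the only points needing care are the monotonicity step that replaces the post-deviation load by $x_e + 1$ (legitimate precisely because all coefficients are non-negative) and the bookkeeping of the additive terms $b_e$, which is settled by the innocuous-looking inequalities $\beta \le 1 \le \alpha$ read off from two degenerate instances of the hypothesis.
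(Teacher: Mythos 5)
Your proof is correct and follows essentially the same route as the paper's: sum the Nash inequalities over players using the optimal strategies as deviations, rewrite both sides as sums over resources, and apply the hypothesis termwise with $(x,y)=(x_e,x_e^*)$. The only difference is that the paper normalizes to $a_e=1$, $b_e=0$ by a ``without loss of generality'' appeal to a standard transformation, whereas you keep the affine constants and dispose of them via the observations $\beta\le 1\le\alpha$ read off from the hypothesis at $(1,0)$ and $(0,1)$ --- a slightly more self-contained treatment of the same computation.
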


\begin{theorem}\label{thm:poa_upper}
Let $s$ be a Nash equilibrium under the cost functions ${C}_i^\rho(s)$ and let $s^*$ be a minimizer of $C^\sigma(\cdot)$. Then
\begin{equation}\label{eq:main_poa_result}
\frac{C^\sigma(s)}{C^\sigma(s^*)} \leq \frac{2\rho(1 + \sigma) + 1}{\rho + 1} 
\end{equation}
\begin{enumerate}[i)]
\item if $\frac{1}{2} \leq \sigma \leq \rho \leq 2\sigma$,
\item if $\sigma = 1$ and $h(\sigma) \leq \rho \leq 2 \sigma$,
\end{enumerate}
where $h(\sigma) = g(1 + \sigma + \sqrt{\sigma(\sigma + 2)}, \sigma)$ is the optimum of the function
$$
g(a,\sigma) = \frac{\sigma(a^2 - 1)}{(1 + \sigma)a^2 - (2\sigma + 1)a + 2\sigma(\sigma+1)}.
$$
Furthermore, there exists a function $\Delta = \Delta(\sigma)$ (specified in the appendix) satisfying for every fixed $\sigma_0 \geq 1/2$: if $\Delta(\sigma_0) \geq 0$, then (\ref{eq:main_poa_result}) is true for all $h(\sigma_0) \leq \rho \leq 2\sigma_0$. 
\end{theorem}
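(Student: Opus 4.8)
The plan is to invoke Lemma~\ref{lem:main_ineq}: it is enough to exhibit non-negative numbers $\alpha=\alpha(\rho,\sigma)$ and $\beta=\beta(\rho,\sigma)$ satisfying the displayed two-variable inequality there with ratio $\alpha/\beta=\bigl(2\rho(1+\sigma)+1\bigr)/(\rho+1)$; since $\beta>0$, Lemma~\ref{lem:main_ineq} then yields \eqref{eq:main_poa_result} at once. The correct candidates are read off from the classical case $\rho=\sigma=1$, where the worst configurations are $(x,y)=(1,1)$ and $(x,y)=(2,1)$. Forcing equality in the inequality of Lemma~\ref{lem:main_ineq} at these two points gives the linear system $\alpha-\beta=\rho$ and $\alpha-2(1+\sigma)\beta=-1$, with solution
\[
\beta=\frac{\rho+1}{2\sigma+1},\qquad\alpha=\frac{2\rho(1+\sigma)+1}{2\sigma+1},
\]
which are non-negative and have exactly the required ratio, so only the two-variable inequality remains to be checked for this choice.

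Writing $G(x,y)$ for the difference (right-hand side minus left-hand side) of the inequality of Lemma~\ref{lem:main_ineq}, we must show $G(x,y)\ge 0$ for all non-negative integers $x,y$. For fixed $x$, the map $y\mapsto G(x,y)$ is a convex parabola with leading coefficient $\alpha\sigma\ge 0$, so it suffices to bound $G$ at $y=0$ and at the integers nearest the real minimizer $\hat y(x)=\bigl(\rho x+1-\alpha(1-\sigma)\bigr)/(2\alpha\sigma)$. At $y=0$ one computes $G(x,0)=x\bigl[(1-\beta)+(\rho-\beta\sigma)(x-1)\bigr]$, which is non-negative for every integer $x\ge 0$ precisely because $\beta\le 1$ (equivalently $\rho\le 2\sigma$) and $\rho(\sigma+1)\ge\sigma$, both consequences of the hypotheses. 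Plugging $\hat y(x)$ into $G$ and rounding to the bracketing integers leaves a one-variable inequality in the integer $x$; its large-$x$ behaviour is again governed by $\rho\le 2\sigma$, the points $x\in\{1,2\}$ are tight by construction, and only a bounded window of intermediate $x$ is left, which is dispatched by direct computation. In the band $\tfrac12\le\sigma\le\rho\le 2\sigma$ this rounding analysis is uniform in $(\rho,\sigma)$, which proves~(i).

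The subtle regime is $h(\sigma)\le\rho<\sigma$ (the strip between the curves $\rho=h(\sigma)$ and $\rho=\sigma$ in Figure~\ref{fig:c1}), and this is the main obstacle. There $\hat y(x)$ may fall strictly between two consecutive integers for several small $x$, so the rounded inequality is strictly stronger than its continuous relaxation; performing the elimination at the two bracketing integers and solving for $\rho$ produces, for each critical value $a$ of $x$, a constraint of the form $\rho\ge g(a,\sigma)$, whose worst case occurs at $a=1+\sigma+\sqrt{\sigma(\sigma+2)}$ and gives the threshold $h(\sigma)=\max_a g(a,\sigma)$. The residual integrality correction is collected into an explicit quantity $\Delta(\sigma)$ (built from the same substitution evaluated at the integers adjacent to $\hat y(a)$): whenever $\Delta(\sigma_0)\ge 0$ the rounded inequality holds for all $h(\sigma_0)\le\rho\le 2\sigma_0$, and checking $\Delta(1)\ge 0$ by hand yields~(ii) and the final assertion. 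Below $\rho=h(\sigma)$ the rounded inequality genuinely fails, consistently with the price of anarchy exceeding the stated value there.
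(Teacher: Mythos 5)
Your setup is the same as the paper's: you invoke Lemma~\ref{lem:main_ineq} with $\beta=(\rho+1)/(2\sigma+1)$ and $\alpha=(2\rho(1+\sigma)+1)/(2\sigma+1)$, obtained by forcing equality at $(x,y)\in\{(1,1),(2,1)\}$, and these are indeed the multipliers the paper uses (the paper likewise observes that $f_1=f_2=0$ exactly at those two pairs). Your treatment of $y=0$ and the observation that $G(x,\cdot)$ is convex with leading coefficient $\alpha\sigma$ are correct, and for case~(i) a ``minimize over real $y$, then handle integrality'' argument could in principle replace the paper's cleaner reduction (which uses $f_1\ge 0$ and $\rho\ge\sigma$ to reduce to the single inequality $f_1\sigma+f_2\ge 0$, verified by completing squares). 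But even there you assert rather than prove the two things that matter: that the residual window of $x$ is bounded and that the check is ``uniform in $(\rho,\sigma)$''.

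The genuine gap is in case~(ii) and in your account of where $h(\sigma)$ comes from. You attribute the threshold to an integrality correction at small $x$ (``$\hat y(x)$ may fall strictly between two consecutive integers for several small $x$''), with the worst case at a ``critical value $a$ of $x$'' equal to $1+\sigma+\sqrt{\sigma(\sigma+2)}$. This inverts the actual structure. In the paper, $a$ is the \emph{slope} $x/y$, not a value of $x$: one writes $x=ay$, shows $-f_2(ay,y,\sigma)/f_1(ay,y,\sigma)$ is monotone in $y$ for each fixed $a$, and finds that the supremum is attained in the limit $y\to\infty$ along the line of slope $a^*=1+\sigma+\sqrt{\sigma(\sigma+2)}$; this asymptotic, purely continuous worst case is what defines $g(a,\sigma)$ and $h(\sigma)=g(a^*,\sigma)$. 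Consequently, near $\rho=h(\sigma)$ the inequality becomes tight only as $x,y\to\infty$, so your claim that ``only a bounded window of intermediate $x$ is left, which is dispatched by direct computation'' fails exactly in the regime the theorem is about; likewise the large-$x$ behaviour of $G(x,\hat y(x))$ is governed by the quadratic-form condition $4\alpha\sigma(\rho-\beta\sigma)\ge\rho^2$ (equivalently $\rho\ge\sup_a g(a,\sigma)$), not by $\rho\le 2\sigma$. Finally, $\Delta(\sigma)$ in the paper is not a rounding residue at small $x$: it compares the asymptotic branch $h_1(a^*(\sigma),\sigma)$ against the boundary branch $h_2(b^*(\sigma),\sigma)$ arising from slopes $a$ in the interval where $-f_2/f_1$ is \emph{decreasing} in $y$ (so the worst case sits at $y=2$); for $\sigma=1$ that interval is empty, which is why case~(ii) goes through exactly. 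As written, your argument does not establish the threshold $h(\sigma)$, and the step that would need to carry the proof is precisely the one you leave to ``direct computation''.
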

\begin{proof}[Sketch] For the functions $\alpha(\rho,\sigma) = (2\rho(1+\sigma)+1)/(1+2\sigma)$ and $\beta(\rho,\sigma) = (1+\rho)/(1+2\sigma)$, we prove the inequality in Lemma \ref{lem:main_ineq}. We show that, for certain functions $f_1$ and $f_2$, the smallest $\rho$ satisfying the inequality of Lemma \ref{lem:main_ineq} is given by the quantity
$$
h(\sigma) = \sup_{x,y \in \N: f_1(x,y,\sigma) > 0} - \frac{f_2(x,y,\sigma)}{f_1(x,y,\sigma)}.
$$
We divide the set $(x,y) \in \N \times \N$ in lines of the form $x = ay$ and determine the supremum over every line.  After that we take the supremum over all lines, which then gives the desired result. We first show that the case $x \leq y$ which is trivial. We then focus on $y > x$. In this case we want to determine
$$
h(\sigma) = \sup_{a  \in \R_{> 1}} \sup_{ y > 1} - \frac{f_2(ay,y,\sigma)}{f_1(ay,y,\sigma)}.
$$
We show that $h(\sigma) = \max\{\gamma_1(\sigma),\gamma_2(\sigma)\}$ for certain functions $\gamma_1$ and $\gamma_2$. Numerical experiments suggest that $\Delta(\sigma) := \gamma_1(\sigma) - \gamma_2(\sigma) \geq 0$, that is, the maximum is always attained for $\gamma_1$ (which is the definition of $h$ given in the statement of the theorem). In particular, this means that if, for a fixed $\sigma$, the non-negativity of $\Delta(\sigma)$ is checked, then the proof indeed yields the inequality of Lemma \ref{lem:main_ineq} for $h(\sigma) \leq \rho \leq 2\sigma$. \qed
\end{proof}

Numerical experiments suggest that $\Delta(\sigma)$ is non-negative for all $\sigma \geq 1/2$.
We emphasize that for a fixed $\sigma$, with $\Delta(\sigma) \geq 0$, the proof that the inequality holds for all $h(\sigma) \leq \rho \leq 2\sigma$ is exact in the parameter $\rho$. The first two cases of Theorem \ref{thm:poa_upper} capture all the price of anarchy results from the literature. 

We next show that the bound of Theorem \ref{thm:poa_upper} is also an (asymptotic) lower bound for linear symmetric network congestion games.\footnote{In the appendix (Theorem \ref{thm:poa_lower}) we show (non-asymptotic) tightness for general congestion games.} This improves a result in the risk-uncertainty model of Piliouras et al. \cite{Piliouras2013}, who only prove asymptotic tightness for symmetric linear congestion games (for their respective values of $\rho$ and $\sigma$). It also improves a result in the altruism model by Chen et al. \cite{Chen2014}, who show tightness only for general congestion games. The proof is a generalization of the construction of Correa et al. \cite{Correa2015}, who proved that for the classical case $\rho = \sigma = 1$, the Price of Anarchy upper bound of $5/2$, shown in \cite{Christodoulou2005}, is asymptotically tight for symmetric network congestion games.

\begin{theorem}\label{thm:poa_network_lower}
For $\rho, \sigma > 0$ fixed, and players with cost functions ${C}_i^\rho(s)$, there exist symmetric network linear congestion games such that
$$
\frac{C^\sigma(s)}{C^\sigma(s^*)} \geq \frac{2\rho(1 + \sigma) + 1}{\rho + 1} - \epsilon
$$
for any $\epsilon > 0$, where $s$ is a Nash equilibrium, and $s^*$ a socially optimal strategy profile.
\end{theorem}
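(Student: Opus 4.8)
The plan is to generalize the layered‑network construction of Correa et al.~\cite{Correa2015} (which handles the classical case \(\rho=\sigma=1\), PoA \(5/2\), cf.~\cite{Christodoulou2005}), replacing the fixed edge coefficients by coefficients that depend on \(\rho\) and \(\sigma\), and then redoing the analysis with the perceived player costs \(C_i^\rho\) and the parameterized social cost \(C^\sigma\). The first step is to start from a general (asymmetric) linear lower‑bound instance attaining the bound. By (the proof of) Theorem~\ref{thm:poa_lower} — or a direct adaptation of the Christodoulou--Koutsoupias construction — for every \(\epsilon>0\) there is a linear congestion game with \(n\) players, each having exactly two strategies \(P_i\) (the ``equilibrium'' strategy) and \(Q_i\) (the ``optimal'' strategy), such that the all‑\(P\) profile is a Nash equilibrium with respect to \(C_i^\rho\), the all‑\(Q\) profile is socially optimal, and \(C^\sigma(\text{all-}P)/C^\sigma(\text{all-}Q)\ge \frac{2\rho(1+\sigma)+1}{\rho+1}-\epsilon\) for \(n\) large; one may further assume this instance has the cyclic/layered structure that the network gadget below can emulate, paying only an extra \(\epsilon\) for the massaging.

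The second step is to build the symmetric network. Following Correa et al., one takes a directed graph with a single source \(u\) and a single sink \(v\) organized into \(n\) ``stages''; between consecutive stages sits a bundle of parallel edges whose linear coefficients \(a_e x\) form a geometric‑like sequence whose ratio is a function of \(\rho\) and \(\sigma\), tuned so that the stage‑by‑stage exchange argument balances. All \(N\) players share the strategy set of \emph{all} \(u\)--\(v\) paths. Inside the graph one singles out two families of \(u\)--\(v\) paths: the ``diagonal'' paths \(\widehat P_1,\dots,\widehat P_N\), which will carry the equilibrium, and the ``horizontal'' paths \(\widehat Q_1,\dots,\widehat Q_N\), which will carry the optimum, chosen so that the edge‑load vector induced by the \(\widehat P\)'s (resp.\ \(\widehat Q\)'s) matches that of the \(P\)'s (resp.\ \(Q\)'s) in the underlying asymmetric instance.

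The third step is to verify the equilibrium property: put player \(i\) on \(\widehat P_i\) for all \(i\) and check \(C_i^\rho(\widehat P)\le C_i^\rho(R,\widehat P_{-i})\) for every \(u\)--\(v\) path \(R\). The layered structure guarantees that any path is a concatenation of at most one ``switch'' per stage, so the only deviations that can possibly be improving are essentially the \(\widehat Q_i\)'s, and for those the inequality is exactly the equilibrium inequality of the underlying asymmetric instance; the parameter \(\rho\) enters precisely through the perceived load \(1+\rho(x_e-1)\) on each edge. The fourth step is to compute the two social costs: with \(C^\sigma(s)=\sum_e x_e(1+\sigma(x_e-1))a_e\), summing stage by stage gives closed forms for \(C^\sigma(\widehat P)\) and \(C^\sigma(\widehat Q)\) in which the two boundary stages contribute only an additive term bounded independently of \(n\); hence the ratio tends to \(\frac{2\rho(1+\sigma)+1}{\rho+1}\) as \(n\to\infty\), and picking \(n\) large beats \(-\epsilon\).

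The main obstacle will be getting the coefficient sequence right so that the Nash inequalities hold simultaneously for \emph{all} \(\rho,\sigma>0\) — not just at \((1,1)\) — and, relatedly, respecting the \emph{network} constraint, i.e.\ ruling out that some unanticipated \(u\)--\(v\) path in the gadget offers a profitable deviation under the \(\rho\)-perceived cost. This is exactly where the structure of the Correa et al.\ gadget is essential, and tracking how \(\rho\) and \(\sigma\) propagate through the stage‑wise estimates (while keeping the boundary corrections \(o(1)\) relative to the total cost) is the technical heart of the argument. A secondary nuisance is coordinating the number of edges/players per stage so that the intended load vectors are actually realized; as in \cite{Correa2015} this is done by taking sufficiently many parallel copies at each stage, which only affects lower‑order terms.
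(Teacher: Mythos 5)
Your high-level plan --- generalize the layered network of Correa et al.~\cite{Correa2015} and re-parameterize the edge coefficients by $\rho$ --- is indeed the route the paper takes, but as written the proposal has two genuine gaps. First, the construction is never actually specified: you posit edge coefficients forming a ``geometric-like sequence whose ratio is a function of $\rho$ and $\sigma$'', but the instance that works is not of this form (and does not depend on $\sigma$ at all, since $\sigma$ never enters the equilibrium condition). What is needed is an alternating pattern on each of the $n$ principal $s$--$t$ paths: cost $(1+\rho)x$ on the first and last arc, cost $\rho x$ on the odd internal layers (which carry load $2$ in the equilibrium profile) and cost $x$ on the even layers (load $1$), with zero-cost connecting arcs. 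This choice is what makes a deviating player indifferent layer by layer and what produces the ratio $\bigl(n(1+2\rho(1+\sigma))-2\rho\sigma\bigr)/\bigl(n(1+\rho)+\rho\bigr) \to (2\rho(1+\sigma)+1)/(\rho+1)$. Without exhibiting these coefficients, the plan does not yet contain the construction.

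Second --- and this is the technical heart --- your claim that ``the only deviations that can possibly be improving are essentially the $\widehat Q_i$'s'' is exactly the statement that requires proof, and it occupies most of the actual argument. In a symmetric network game every $s$--$t$ path is a feasible deviation, so one must compare, for an arbitrary deviation $Q$, the $\rho$-perceived cost of the subpath of the equilibrium path up to the first arc shared with $Q$ against that of the corresponding subpath of $Q$ (on which odd layers carry load $3$ and even layers load $2$ after the deviation); the comparison reduces to $1+2\rho(1+\rho) \le (2\rho+1)(1+\rho)$, and a separate counting argument is needed for deviations that leave and rejoin the equilibrium path in its interior. None of this is present. Finally, the detour in your first step through the asymmetric instance of Theorem~\ref{thm:poa_lower} (``emulate its load vector, paying an extra $\epsilon$ for the massaging'') is unsubstantiated and unnecessary: there is no general principle that converts an asymmetric two-strategy instance into a symmetric network one at cost $\epsilon$, and the paper builds and analyzes the network instance directly.
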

\begin{proof}
We construct a symmetric network linear congestion game with  $n$ players. 
We first describe the graph topology used in the proof of Theorem 5 in \cite{Correa2015} (using similar notation and terminology). 
The graph $G$ consists of $n$ principal disjoint paths, called $P_1,\dots,P_n$ from $s$ to $t$ (horizontal paths in Figure \ref{fig:poa_network_lower_sketch}), each consisting $2n-1$ arcs (and hence $2n$ nodes). With $e_{i,j}$ the $j$-th arc on path $i$ is denoted for $i = 1,\dots,n$ and $j = 1,\dots,2n-1$. Also, $v_{i,j}$ denotes the $j$-th node on path $i$ for $i = 1,\dots,n$ and $j = 1,\dots,2n$. There are also $n(n-1)$ connecting arcs: for every path $i$ there is an arc $(v_{i,2k+1}, v_{i-1,2k})$ for $k = 1,\dots,n-1$, where $i - 1$ is taken modulo $n$ (the diagonal arcs in Figure \ref{fig:poa_network_lower_sketch}). For $j \geq 1$ fixed, we say that the arcs $e_{i,j}$ for $i = 1,\dots,n$ form the $(j-1)$-th layer of $G$ (see Figure \ref{fig:poa_network_lower_sketch}).

\begin{figure}[t]
\centering \scalebox{1}{
\begin{tikzpicture}
\begin{scope}


\node at (-2,2) [circle,fill=black, scale = 0.4] (s) {};
\node at (-2.3,2) [] {$s$};
\node at (9.1,2) [circle,fill=black, scale = 0.4] (t) {};
\node at (9.4,2) [] {$t$};
\foreach \y in {0,...,4} 
	\foreach \x in {0,...,7}
		 {\node at (\x,\y)  [circle,fill=black, scale = 0.4] (\x\y) {};} 
       
\foreach \y  in {0,...,4}       
 \foreach \x [count=\xi] in {0,...,6}
		\draw (\x\y) edge[->] (\xi\y);

\node at (-1.5,3.5) [] {$(1+\rho)x$};
\node at (0.5,4.3) [] {$\rho x$};
\node at (1.5,4.3) [] {$x$};
\node at (2.5,4.3) [] {$\rho x$};
\node at (3.5,4.3) [] {$x$};
\node at (4.5,4.3) [] {$\rho x$};
\node at (5.5,4.3) [] {$x$};
\node at (6.5,4.3) [] {$\rho x$};
\node at (8.5,3.5) [] {$(1+\rho)x$};

\draw (-2,2) edge[->] (-0.05,0);
\draw (-2,2) edge[->] (-0.05,1);
\draw (-2,2) edge[->] (-0.05,2);
\draw (-2,2) edge[->] (-0.05,3);
\draw (-2,2) edge[->] (-0.05,4);

\draw (7,0) edge[->] (t);
\draw (7,1) edge[->] (t);
\draw (7,2) edge[->] (t);
\draw (7,3) edge[->] (t);
\draw (7,4) edge[->] (t);

\draw (1,0) edge[->] (0,3.95);
\draw (3,0) edge[->] (2,3.95);
\draw (5,0) edge[->] (4,3.95);
\draw (7,0) edge[->] (6,3.95);

\foreach \y in {1,...,4} 
\draw (1,\y) edge[->] (0.07,\y-0.95);

\foreach \y in {1,...,4} 
\draw (3,\y) edge[->] (2.07,\y-0.95);

\foreach \y in {1,...,4} 
\draw (5,\y) edge[->] (4.07,\y-0.95);

\foreach \y in {1,...,4} 
\draw (7,\y) edge[->] (6.07,\y-0.95);

\draw (-2,2) edge[->,blue, dashed, very thick] (0,3);
\draw (0,3) edge[->,blue, dashed, very thick] (1,3);
\draw (1,3) edge[->,blue, dashed, very thick] (0,2); 
\draw (0,2) edge[->,blue, dashed, very thick] (1,2); 
\draw (1,2) edge[->,blue, dashed, very thick] (2,2); 
\draw (2,2) edge[->,blue, dashed, very thick] (3,2); 
\draw (3,2) edge[->,blue, dashed, very thick] (2,1); 
\draw (2,1) edge[->,blue, dashed, very thick] (3,1); 
\draw (3,1) edge[->,blue, dashed, very thick] (4,1); 
\draw (4,1) edge[->,blue, dashed, very thick] (5,1); 
\draw (5,1) edge[->,blue, dashed, very thick] (4,0); 
\draw (4,0) edge[->,blue, dashed, very thick] (5,0); 
\draw (5,0) edge[->,blue, dashed, very thick] (6,0); 
\draw (6,0) edge[->,blue, dashed, very thick] (7,0); 
\draw (7,0) edge[->,blue, dashed, very thick] (6,4); 
\draw (6,4) edge[->,blue, dashed, very thick] (7,4);  
\draw (7,4) edge[->,blue, dashed, very thick] (9,2);  

\draw (-2,2+0.13) edge[->,red, very thick] (0,1+0.13); 
\draw (0,1+0.13) edge[->,red, very thick] (1,1+0.13); 
\draw (1,1+0.13) edge[->,red,  very thick] (0,0+0.13); 
\draw (0,0+0.13) edge[->,red, very thick] (1,0+0.13); 
\draw (1,0+0.13) edge[->,red, very thick] (2,0+0.13); 
\draw (2,0+0.13) edge[->,red,  very thick] (3,0+0.13);  
\draw (3,0+0.13) edge[->,red,  very thick] (4,0+0.13);  

\draw (1,3+0.13) edge[->,red, very thick] (2,3+0.13); 
\draw (2,3+0.13) edge[->,red, very thick] (3,3+0.13); 
\draw (3,3+0.13) edge[->,red,  very thick] (4,3+0.13); 
\draw (4,3+0.13) edge[->,red, very thick] (5,3+0.13); 
\draw (5,3+0.13) edge[->,red, very thick] (4,2+0.13); 
\draw (4,2+0.13) edge[->,red,  very thick] (5,2+0.13);  
\draw (5,2+0.13) edge[->,red,  very thick] (6,2+0.13);  
\draw (6,2+0.13) edge[->,red,  very thick] (7,2+0.13);  
\draw (7,2+0.13) edge[->,red,  very thick] (6,1+0.13);  
\draw (6,1+0.13) edge[->,red,  very thick] (7,1+0.13);  
\draw (7,1+0.13) edge[->,red,  very thick] (6,0+0.13);  

\node at (-1,-0.3) [] {$0$};
\node at (0.5,-0.3) [] {$1$};
\node at (1.5,-0.3) [] {$2$};
\node at (2.5,-0.3) [] {$3$};
\node at (3.5,-0.3) [] {$4$};
\node at (4.5,-0.3) [] {$5$};
\node at (5.5,-0.3) [] {$6$};
\node at (6.5,-0.3) [] {$7$};
\node at (8,-0.3) [] {$8$};
 
\node at (1,3.3) [] {$v_1$};
\node at (1.9,3.3) [] {$w_1$};
\node at (7.2,0.8) [] {$v_2$};
\node at (5.8,0.2) [] {$w_2$};

\end{scope}
\end{tikzpicture}}
\caption{Illustration of the instance for $n = 5$. The dashed (blue) path indicates the strategy of player $2$ in the Nash equilibrium. For every principal path, the first and last arc have cost $(1+\rho)x$, and in between the costs alternate between $\rho x$ and $x$ (starting and ending with $\rho x$). The diagonal connecting arcs have cost zero. The numbers at the bottom indicate the layers. The bold (red) subpaths indicate the two deviation situations that are analyzed to prove that $s$ is indeed a Nash equilibrium.} 
\label{fig:poa_network_lower_sketch}
\end{figure}
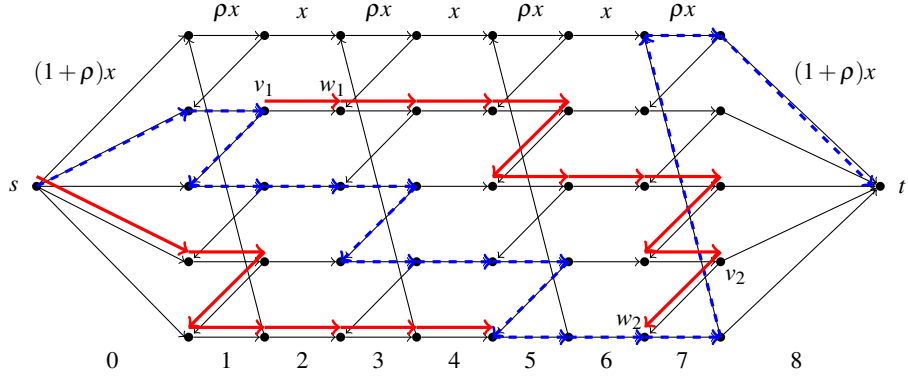

The cost functions are as follows. All arcs leaving $s$ (the arcs $e_{i,1}$ for $i = 1,\dots,n$) and all arcs entering $t$ (the arcs $e_{i,2n}$ for $i = 1,\dots,n$) have latency $c_e(x) = (1+\rho)x$. For all $i = 1,\dots,n$, the arcs $e_{i,2k-1}$ for $k = 1,\dots,n-1$ have cost function $c_e(x) = \rho x$, whereas the arcs $e_{i,2k}$ for $k = 1,\dots,n-2$ have cost function $c_e(x) = x$. All other arcs (the diagonal connecting arcs) have cost zero. 

The feasible strategy profile $t$ in which player $i$ uses principal path $P_i$, for all $i = 1,\dots,n$ has social cost $C^\sigma(t) = n(2(1+\rho) + (n-1)\rho + (n-2)) = n((1+\rho)n +\rho)$. A Nash equilibrium is given by the strategy profile in which every player $k$ uses the following path: she starts with arcs $e_{k,1}$ and $e_{k,2}$, then uses all arcs of the form $e_{k+j,2j}, e_{k+j,2j+1}, e_{k+j,2j+2}$ for $j = 1,\dots,n-1$, and ends with arcs $e_{k+n-1,2n-2}, e_{k+n-1,2n-1}$ (and uses all connecting arcs in between).\footnote{This is similar to the construction in Theorem 5 \cite{Correa2015}.} Note that all the (principal) arcs of layer $j$ have load $1$ is $j$ is even, and load $2$ if $j$ is odd. The social cost of this profile is given by
$C^\sigma(s) = n(2(1+\rho) + (n-1) \cdot 2 \cdot \rho(1+\sigma(2-1)) + n-2) = n((1+2\rho(1+\sigma))n - 2\rho\sigma)$. It follows that $C^\sigma(s)/C^\sigma(t) \uparrow (1+2\rho(1+\sigma))/(1+\rho)$ as $n \rightarrow \infty$. We now show that the above mentioned strategy profile $s$ is indeed a Nash equilibrium. 

Fix some player, say player $2$, as in Figure \ref{fig:poa_network_lower_sketch}, and suppose that this player deviates to some path $Q$. Let $j$ be the first layer in which $P_2$ and $Q$ overlap. Note that $j$ must be odd. The cost $C_2^\rho(s)$ of player $2$, on the subpath of $P_2$ leading to the first overlapping arc with $Q$, is at most
$$
(1+\rho) + \frac{j-1}{2}\cdot [2\cdot[\rho(1 + \rho(2-1))] + 1] + \rho(1 + \rho(2-1)) =(1+\rho)^2 + \frac{j-1}{2}(1 + 2\rho(1+\rho))
$$
The subpath of $Q$ leading to the first overlapping arc with $P_2$ has $C^\rho_i(Q,s_{-i})$ as follows. She uses at least one arc in every odd layer (before the overlapping layer) with a load of $3$ and one arc of every even layer (before the overlapping arc) with load $2$, meaning that the cost of player $i$ on the subpath of $Q$ is at least
$$
(1+\rho)(1+\rho(2 - 1)) + \frac{j-1}{2}\cdot [(\rho(1 + \rho(3-1))) + (1 + \rho(2-1))] = (1+\rho)^2 + \frac{j-1}{2}\cdot (2\rho + 1)(1+\rho)
$$
Since $1 + 2\rho(1+\rho) < (2\rho+1)(1+\rho)$ for all $\rho \geq 0$, if follows that the cost of player $i$ on the subpath of $P_2$ is no worse than that of the subpath of $Q$, when player $2$ deviates from $P_2$ to $Q$. If follows that it suffices to show that $P_2$ is an equilibrium strategy in $s$ with respect to deviations $Q$ that overlap on the first arc $e_{2,1}$ with $P_2$. A similar argument shows that it also suffices to look at deviations $Q$ for which $Q$ and $P_2$ overlap on the last arc $e_{2,2n-1}$ of $P_2$.

Now suppose that $P_2$ and $Q$ do not overlap on some internal part of $P_2$. Note that the first arc of $Q$ that is not contained in $P_2$, say $(v_1,w_1)$ must be in an even layer, and also that the last arc, say $(v_2,w_2)$ (which is a connecting arc) is in an odd layer (note that $v_1 \neq s$ and $w_2 \neq t$ w.l.o.g. by what is said in the previous paragraph). It is not hard to see that the subpath of $Q$ from $v_1$ to $w_2$ contains the same number of even-layered arcs as the subpath of $P_2$, and the same number of odd-layered arcs as the subpath of $P_2$. However, the load on all the odd-layered arcs on the subpath of deviation $Q$ is $3$, whereas the load on odd-layered arcs in the subpath of $P_2$ between $v_1$ and $w_2$ (in strategy $s$) is $2$. Similarly, the load on every even-layered arc on the subpath of deviation $Q$ is $2$, whereas the load on ever even-layered arc in the subpath of $P_2$ is $1$. Hence the subpath of deviation $Q$ between $v_1$ and $w_2$ can never be profitable.\qed
\end{proof}

For $\rho \geq 2\sigma$, we can obtain a tight bound of $\rho + 1$ on the price of anarchy. Remarkably, the bound itself does not depend on $\sigma$, only the range of $\rho$ and $\sigma$ for which it holds. For the parameters $\sigma = 1$ and $\rho \geq 2$ in the altruism model of Caragiannis et al. \cite{Caragiannis2010Altruism}, this bound is known to be tight for non-symmetric singleton congestion games (where all strategies consist of a single resource). We only provide tightness for general congestion games, but the construction is significantly simpler.

\begin{theorem}
Let $s$ be a Nash equilibrium under the cost functions ${C}_i^\rho(s)$ and let $s^*$ be a minimizer of $C^\sigma(\cdot)$. Then
$C^\sigma(s)/C^\sigma(s^*) \leq \rho + 1$ for $1 \leq 2\sigma \leq \rho$. Furthermore, this bound is tight.
\label{thm:above_twosigma}
\end{theorem}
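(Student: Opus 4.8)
The plan is to invoke Lemma~\ref{lem:main_ineq} with the constants $\beta(\rho,\sigma)=1$ and $\alpha(\rho,\sigma)=\rho+1$, so that its conclusion $\beta\,C^\sigma(s)\le\alpha\,C^\sigma(s^*)$ becomes exactly $C^\sigma(s)\le(\rho+1)\,C^\sigma(s^*)$, i.e.\ the claimed bound. It therefore suffices to verify the pointwise inequality of the lemma for these constants, i.e.\ to show that
\[
P(x,y):=(\rho+1)\bigl(1+\sigma(y-1)\bigr)y-\bigl(1+\sigma(x-1)\bigr)x-\bigl[(1+\rho x)y-\rho(x-1)x-x\bigr]\ge 0
\]
for all non-negative integers $x,y$. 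The key algebraic step I would carry out first is to expand and regroup this into the integer-friendly form
\[
P(x,y)=(\rho-\sigma)\,x(x-1)+(\rho+1)\sigma\,y(y-1)-\rho\,y(x-1),
\]
in which the two quadratic blocks $x(x-1)$ and $y(y-1)$ are non-negative on the integers (and $\rho-\sigma\ge0$ since $\rho\ge2\sigma\ge\sigma$), so the mixed term is the only source of negativity.

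\paragraph{The integer case analysis.} For $x\in\{0,1\}$ the term $x(x-1)$ vanishes and $P$ is manifestly non-negative: for $x=0$ one has $P=(\rho+1)\sigma\,y(y-1)+\rho y\ge0$, and for $x=1$, $P=(\rho+1)\sigma\,y(y-1)\ge0$. For $x\ge2,\ y=0$ only the first block survives. For $x\ge2,\ y=1$ the inequality collapses to $(x-1)\bigl[(\rho-\sigma)x-\rho\bigr]\ge0$, which holds precisely because $x\ge2$ and $\rho\ge2\sigma$ give $(\rho-\sigma)x\ge2(\rho-\sigma)\ge\rho$. The remaining and main case is $x,y\ge2$: here I would lower-bound the two positive quadratic blocks by AM--GM, reducing the claim to
\[
\frac{x}{x-1}\cdot\frac{y-1}{y}\ \ge\ \frac{\rho^2}{4(\rho-\sigma)(\rho+1)\sigma}.
\]
The left-hand side is strictly larger than $\tfrac12$ for all integers $x,y\ge2$, so it is enough that the right-hand side is at most $\tfrac12$, equivalently $\rho^2\le2(\rho-\sigma)(\rho+1)\sigma$. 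I expect this last inequality to be the main obstacle, and exactly the place where the hypotheses are consumed; I would settle it by showing that $h(\rho):=2(\rho-\sigma)(\rho+1)\sigma-\rho^2=(2\sigma-1)\rho^2+2\sigma(1-\sigma)\rho-2\sigma^2$ satisfies $h(2\sigma)=2\sigma^2(2\sigma-1)\ge0$ and is increasing in $\rho$ on $[2\sigma,\infty)$ (its derivative at $\rho=2\sigma$ equals $2\sigma(3\sigma-1)>0$ and $h''=2(2\sigma-1)\ge0$) for every $\sigma\ge\tfrac12$. This uses $2\sigma\le\rho$ and $\tfrac12\le\sigma$ precisely.

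\paragraph{Tightness.} For the matching lower bound I would exhibit one small, $\sigma$-independent instance attaining the ratio \emph{exactly}. Take two players and four resources: $o_0,o_1$ with cost $c(w)=w$ and $p_0,p_1$ with cost $c(w)=\rho w$. Let player $0$ choose between $\{o_1,p_0\}$ and $\{o_0\}$, and player $1$ between $\{o_0,p_1\}$ and $\{o_1\}$. In the profile $s=(\{o_1,p_0\},\{o_0,p_1\})$ every resource carries load $1$, each player has cost $1+\rho$, and a unilateral deviation to the alternative singleton raises the corresponding $o$-resource to load $2$ at perceived cost $1+\rho$; hence $s$ is a (weak) Nash equilibrium. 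Its social cost is $C^\sigma(s)=2(1+\rho)$, whereas the profile $s^*=(\{o_0\},\{o_1\})$ keeps all loads at $1$ on the cheap resources with social cost $2$; a direct check of the two remaining (mixed) profiles, each of which puts some $o$-resource at load $2$ at cost $2(1+\sigma)+\rho>2$, shows $s^*$ is optimal. This gives $C^\sigma(s)/C^\sigma(s^*)=\rho+1$, establishing tightness. The same gadget generalises to a cycle of $n\ge2$ players (player $i$ using $o_{i+1}$ together with a private $\rho w$-resource in the equilibrium, and $o_i$ in the optimum) if a larger family is preferred, but the two-player instance already yields exact tightness for every $\rho\ge2\sigma\ge1$.
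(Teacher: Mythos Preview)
Your proof is correct, and the tightness construction is essentially identical to the paper's (same two-player, four-resource gadget). The upper bound, however, is argued along a genuinely different route.

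The paper rewrites the key inequality as a function that is affine in $\rho$,
\[
\bigl[\,y+\sigma y(y-1)-xy+x(x-1)\,\bigr]\rho+\sigma\bigl[y(y-1)-x(x-1)\bigr]\ge 0,
\]
shows the coefficient of $\rho$ is non-negative for $\sigma\ge\tfrac12$, reduces to the boundary $\rho=2\sigma$, and then reduces again to the boundary $\sigma=\tfrac12$, where a completed-square identity finishes the argument. Your approach instead regroups into $(\rho-\sigma)x(x-1)+(\rho+1)\sigma\,y(y-1)-\rho y(x-1)$, handles the small cases $x\le1$ or $y\le1$ directly, and controls the main case $x,y\ge2$ by AM--GM together with the one-variable polynomial estimate $h(\rho)=2(\rho-\sigma)(\rho+1)\sigma-\rho^2\ge0$ on $[2\sigma,\infty)$. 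Both arguments consume exactly the hypotheses $\sigma\ge\tfrac12$ and $\rho\ge2\sigma$; the paper's double monotonicity is a bit more mechanical and avoids the AM--GM loss, while your decomposition makes the role of the two integer blocks $x(x-1)$ and $y(y-1)$ more transparent and isolates the parameter dependence into a single clean inequality in $(\rho,\sigma)$.
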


\section{Price of Stability}\label{sec:pos} 
In this section we give a tight bound for the price of stability of 
$$
\frac{\sqrt{\sigma(\sigma+2)} + \sigma}{\sqrt{\sigma(\sigma+2)} + \rho - \sigma} 
$$
for a large range of pairs $(\rho,\sigma)$. 
We show this bound to be (asymptotically) tight. 

We need the following technical lemma.

\begin{lemma}\label{lem:main_ineq_pos}
For all non-negative integers $x$ and $y$, and $\sigma \geq 0$ arbitrary, we have
$$
\left(x - y + \frac{1}{2}\right)^2- \frac{1}{4} + 2\sigma x(x - 1) + (\sqrt{\sigma(\sigma+2)} + \sigma)[y(y-1) - x(x-1)] \geq 0. 
$$
\end{lemma}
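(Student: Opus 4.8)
The plan is to reduce the claimed inequality to a one-line estimate via a change of variables tailored to the constant $\sqrt{\sigma(\sigma+2)}+\sigma$. First I would set $\tau := \sigma + \sqrt{\sigma(\sigma+2)} \ge 0$ and record the identity it satisfies: squaring $\tau - \sigma = \sqrt{\sigma(\sigma+2)}$ gives $2\sigma(\tau+1) = \tau^2$, i.e.\ $2\sigma = \tau^2/(\tau+1)$. This $\tau$ is exactly the value of the parameter for which the quadratic form in the lemma becomes a perfect square. Indeed, substituting $2\sigma = \tau^2/(\tau+1)$ and using $(x-y+\tfrac12)^2 - \tfrac14 = (x-y)^2 + (x-y)$, the one routine computation I would carry out shows that the whole left-hand side equals
$$
F(x,y) \;:=\; \frac{w\,(w+2\tau+1)}{\tau+1} \;+\; \tau y, \qquad w := x-(\tau+1)y ,
$$
since the quadratic part is $\tfrac{1}{\tau+1}x^2 - 2xy + (\tau+1)y^2 = \big(\tfrac{x}{\sqrt{\tau+1}} - \sqrt{\tau+1}\,y\big)^2$. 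The goal thus becomes: $F(x,y)\ge 0$ for all integers $x,y\ge 0$.

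The case $y=0$ is immediate, since then $w = x \ge 0$ and $w(w+2\tau+1)\ge 0$. For $y\ge 1$ I would complete the square in $w$; using $\tfrac{(2\tau+1)^2}{4(\tau+1)} = \tau + \tfrac{1}{4(\tau+1)}$ this rewrites $F$ as
$$
F(x,y) \;=\; \frac{(x-\theta)^2}{\tau+1} + \tau(y-1) - \frac{1}{4(\tau+1)}, \qquad \theta := \tau(y-1) + y - \tfrac12 ,
$$
so after multiplying by $\tau+1>0$ it suffices to prove $(x-\theta)^2 \ge \tfrac14 - \tau(\tau+1)(y-1)$. When the right-hand side is $\le 0$ there is nothing to do; the real content is the small-$\sigma$ regime, where completing the square alone fails and integrality of \emph{both} $x$ and $y$ has to be used.

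The key step — which I expect to be the main obstacle to discover — is the following. Since $x\in\Z$, $(x-\theta)^2\ge\operatorname{dist}(\theta,\Z)^2$; and since $y\in\Z$, $\theta$ differs from $\tau(y-1)-\tfrac12$ by the integer $y$, so $\operatorname{dist}(\theta,\Z) = \operatorname{dist}(\tau(y-1)-\tfrac12,\Z) = \tfrac12 - d$, where $d := \operatorname{dist}(\tau(y-1),\Z)\in[0,\tfrac12]$. Substituting, the claim reduces to $(\tfrac12-d)^2\ge\tfrac14-\tau(\tau+1)(y-1)$, i.e.\ $\tau(\tau+1)(y-1)\ge d(1-d)$. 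I would close this with three elementary bounds: $d \le \tau(y-1)$ (because $0\in\Z$ and $\tau(y-1)\ge 0$), $d(1-d)\le d$ (because $d\le\tfrac12$), and $\tau(\tau+1)(y-1)\ge\tau(y-1)$ (because $\tau+1\ge 1$, $y\ge 1$); chaining them yields $\tau(\tau+1)(y-1)\ge\tau(y-1)\ge d\ge d(1-d)$, as required.

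In summary, the proof hinges on two observations — that $\tau=\sigma+\sqrt{\sigma(\sigma+2)}$ turns the form into a perfect square, and that $\operatorname{dist}(\theta,\Z)=\tfrac12-\operatorname{dist}(\tau(y-1),\Z)$ by integrality of $y$ — after which everything is elementary bookkeeping. The one place where care is genuinely needed is to resist bounding $F$ by its unconstrained minimum over real $x$, which is negative for small $\sigma$: the integrality of $x$ (not just of $y$) is essential.
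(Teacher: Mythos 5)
Your proposal is correct; I checked the algebra ($2\sigma=\tau^2/(\tau+1)$, the identity $F(x,y)=\tfrac{w(w+2\tau+1)}{\tau+1}+\tau y$, the completion of the square with $\theta=\tau(y-1)+y-\tfrac12$, and the chain $\tau(\tau+1)(y-1)\ge\tau(y-1)\ge d\ge d(1-d)$) and each step holds. It is, however, a genuinely different route from the paper's. The paper first disposes of $y\ge x$, then multiplies through by $c=1+\sigma-\sqrt{\sigma(\sigma+2)}=1/(\tau+1)$ and, for $x\ge 2$, reads the resulting expression $x(x-1)c^2+2x(1-y)c+y(y-1)$ as a quadratic in the \emph{parameter} $c$, verifying non-negativity at its real minimizer $c^*=(y-1)/(x-1)$ (where the value is $(y-1)(x-y)/(x-1)\ge 0$); the cases $x\in\{0,1\}$ are checked by hand. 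You instead fix the parameter, complete the square in the \emph{variable} $x$, and exploit integrality of $x$ via the distance to the nearest integer, with integrality of $y$ entering through $\operatorname{dist}(\theta,\Z)=\tfrac12-\operatorname{dist}(\tau(y-1),\Z)$. Both proofs pivot on the same observation — that at $\tau=\sigma+\sqrt{\sigma(\sigma+2)}$ the pure quadratic part becomes a perfect square — but your version is more uniform (no split into $y\ge x$ versus $y<x$, no separate treatment of $x\in\{0,1\}$) and makes explicit exactly where integrality of each of $x$ and $y$ is indispensable, while the paper's is more pedestrian and self-contained, needing nothing beyond checking a handful of small cases. Either could replace the other without affecting the rest of the paper.
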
\medskip

\begin{theorem}\label{thm:pos_upper} Let $s$ be a best Nash equilibrium under the cost functions ${C}_i^\rho(s)$ and let $s^*$ be a minimizer of $C^\sigma(\cdot)$. We have
$$
\frac{C^\sigma(s)}{C^\sigma(s^*)} \leq \frac{\sqrt{\sigma(\sigma+2)} + \sigma}{\sqrt{\sigma(\sigma+2)} + \rho - \sigma} 
$$
for all $\sigma > 0$ and
$$
\frac{2\sigma}{1 + \sigma + \sqrt{\sigma(\sigma+2)}} \leq \rho \leq 2\sigma.
$$
\end{theorem}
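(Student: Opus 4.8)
The plan is to use the standard ``potential $+$ smoothness'' route for price of stability bounds. Since the player cost $C_i^\rho(s)=\sum_{e\in s_i}(a_e[1+\rho(x_e-1)]+b_e)$ is affine in the loads, the induced game is an exact potential game with Rosenthal-type potential
$$
\Phi(s)=\sum_{e\in E}\sum_{k=1}^{x_e}\bigl(a_e[1+\rho(k-1)]+b_e\bigr)=\sum_{e\in E}\Bigl(\tfrac{\rho}{2}a_e\,x_e(x_e-1)+(a_e+b_e)\,x_e\Bigr).
$$
First I would let $\hat s$ be a global minimizer of $\Phi$; it is a pure Nash equilibrium, so the best equilibrium $s$ satisfies $C^\sigma(s)\le C^\sigma(\hat s)$ and it suffices to bound $C^\sigma(\hat s)/C^\sigma(s^*)$. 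Writing $x_e=x_e(\hat s)$ and $y_e=x_e(s^*)$, I would record two facts: the potential inequality $\Phi(\hat s)\le\Phi(s^*)$, and the equilibrium inequality obtained by testing $\hat s$ against the deviations $s^*_i$ and bounding each deviated load on $e\in s_i^*$ by $x_e+1$, namely
$$
\sum_{e\in E}\bigl(\rho a_e x_e(x_e-1)+(a_e+b_e)x_e\bigr)\ \le\ \sum_{e\in E}\bigl(\rho a_e x_e y_e+(a_e+b_e)y_e\bigr).
$$

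Next, with $\tau:=\sqrt{\sigma(\sigma+2)}$, I would introduce the multipliers $c=\tfrac{2\sigma-\rho}{2(\tau+\rho-\sigma)}$, $\lambda=1+2c$ (which is exactly the claimed ratio $\tfrac{\tau+\sigma}{\tau+\rho-\sigma}$), $p=\tfrac{2c}{\rho}$ and $q=\tfrac{2}{\rho}\bigl(\sigma-c(1-\sigma+\tau)\bigr)$, and add $p$ times the (nonnegative) slack of the equilibrium inequality and $q$ times the (nonnegative) slack $\Phi(s^*)-\Phi(\hat s)$ to $C^\sigma(\hat s)-\lambda C^\sigma(s^*)$; this can only increase that quantity. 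Regrouping per resource turns the result into $\sum_{e}\bigl(a_e M_1(x_e,y_e)+(a_e+b_e)M_2(x_e,y_e)\bigr)$ with $M_1(x,y)=(\sigma-p\rho-\tfrac{q\rho}{2})x(x-1)+p\rho\,xy+(\tfrac{q\rho}{2}-\lambda\sigma)y(y-1)$ and $M_2(x,y)=(1-p-q)x+(p+q-\lambda)y$. The arithmetic above is calibrated so that $p+q=\lambda$, hence $M_2(x,y)=(1-\lambda)x\le 0$, and so that the identity
$$
M_1(x,y)+M_2(x,y)=-c\Bigl[\bigl(x-y+\tfrac12\bigr)^2-\tfrac14+2\sigma x(x-1)+(\tau+\sigma)\bigl(y(y-1)-x(x-1)\bigr)\Bigr]
$$
holds; by Lemma~\ref{lem:main_ineq_pos} the bracket is $\ge 0$, so $M_1+M_2\le 0$ for all non-negative integers $x,y$. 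Then each summand obeys $a_eM_1+(a_e+b_e)M_2=a_e(M_1+M_2)+b_eM_2\le 0$ because $a_e,b_e\ge 0$, which yields $C^\sigma(\hat s)\le\lambda C^\sigma(s^*)$ and therefore the stated price of stability bound. Note this splitting of the resource-wise term into its $a_e$- and $b_e$-parts lets me avoid first reducing to unit cost functions.

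I expect the two genuinely substantive points to be the following. (i) Discovering the right multipliers $c,p,q$: this amounts to solving the ``dual'' of the linear program implicit in the resource-wise inequality, and it is precisely the feasibility of this dual that forces the value $\lambda=\tfrac{\tau+\sigma}{\tau+\rho-\sigma}$ (and, with $p+q=\lambda$, makes the linear-in-$x$ remainder from the lemma cancel so that the quadratic certificate of Lemma~\ref{lem:main_ineq_pos} suffices). (ii) Checking that $c,p,q\ge 0$ holds \emph{precisely} when $\tfrac{2\sigma}{1+\sigma+\tau}\le\rho\le 2\sigma$: here $c\ge 0$ is equivalent to $\rho\le 2\sigma$, $p\ge 0$ is automatic, and $q\ge 0$ reduces (after clearing denominators and using $\tau^2=\sigma^2+2\sigma$) to $2\sigma-\rho(1+\sigma+\tau)\le 0$, i.e.\ $\rho\ge\tfrac{2\sigma}{1+\sigma+\tau}$ — which is exactly how the parameter range in the statement arises. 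Everything else is routine: the polynomial inequality itself is supplied by Lemma~\ref{lem:main_ineq_pos}, so no case analysis in $x,y$ is required, and at the endpoint $\rho=2\sigma$ one has $c=0$, $\lambda=1$, recovering the degenerate case where $\Phi$ coincides with $C^\sigma$ up to the terms that vanish under the minimizer.
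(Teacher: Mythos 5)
Your proposal is correct and is essentially the paper's own proof: the same Rosenthal-type potential, the same pair of inequalities (Nash deviation to $s^*$ plus global potential minimality), multipliers that coincide with the paper's ($p=(\lambda-1)/\rho=\delta$ and $q=((\rho-1)\lambda+1)/\rho=2\gamma$), the same reduction of the per-resource inequality to Lemma~\ref{lem:main_ineq_pos}, and the same derivation of the range $\frac{2\sigma}{1+\sigma+\sqrt{\sigma(\sigma+2)}}\le\rho\le 2\sigma$ from nonnegativity of the potential multiplier. The only (harmless) cosmetic difference is that you keep general $a_e,b_e$ and split each resource term into its $a_e$- and $b_e$-parts instead of invoking the standard reduction to unit coefficients.
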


An overview of the implications of this bound is given in Table \ref{tab:poa}. 
The bound of $2$ for generalized affine congestion games requires some additional arguments (see appendix).

\begin{proof}
Our proof is similar to a technique of Christodoulou, Koutsoupias and Spirakis \cite{Christodoulou2011} for upper bounding the price of stability of $\rho$-approximate equilibria (we elaborate on the conceptual difference in Section \ref{sec:misc}). However, for a general $\sigma$ the analysis is more involved. The main technical contribution comes from establishing the inequality in Lemma \ref{lem:main_ineq_pos}.

We can write $C^\rho_i(s) = a_e x_e + b_e + (\rho - 1)a_ex_e$, which, by Rosenthal \cite{Rosenthal1973}, implies that
$$
\Phi^\rho(s) := \sum_{e \in E} a_e\frac{x_e(x_e + 1)}{2}  + b_e x_e + (\rho - 1)\sum_{e \in E} a_e\frac{(x_e - 1)x_e}{2} 
$$
is an exact potential for $C^\rho_i(s)$. 

The idea of the proof is to combine the Nash inequalities, and the fact that the global minimum of $\Phi^\rho(\cdot)$ is a Nash equilibrium (because it is an exact potential). Let $s$ denote the global minimum of $\Phi^\rho$, and $s^*$ a socially optimal solution. We can without loss of generality assume that $a_e = 1$ and $b_e = 0$. The Nash inequalities (as in the price of anarchy analysis) yield
$$
\sum_{e\in E} x_e(1 + \rho(x_e - 1)) \leq \sum_{e \in E} (1+\rho x_e)x_e^*
$$
whereas the fact that $s$ is a global optimum of $\Phi^\rho(\cdot)$ yields $\Phi^\rho(s) \leq \Phi^\rho(s^*)$, which reduces to
$$
\sum_{e \in E} \rho x_e^2 + (2 - \rho)x_e \leq \sum_{e \in E} \rho (x_e^*)^2 + (2 - \rho)x_e^*. 
$$
If we can find $\gamma, \delta \geq 0$, and some $K \geq 1$, for which
$$
\big( 0 \leq \big)\ \  \gamma\left[ \rho(x_e^*)^2 + (2 - \rho)x_e^* - \rho x_e^2 - (2 - \rho)x_e \right] + \delta\left[ (1+\rho x_e)x_e^* - x_e(1 + \rho(x_e - 1)\right]
$$
\begin{equation}\label{eq:pos_proof}
\leq K \cdot x_e^*[1 + \sigma(x_e^*-1)] - x_e[1 + \sigma(x_e-1)],
\end{equation}
then this implies that $C^\sigma(s)/C^\sigma(s^*) \leq K$. We take
$$
\delta = \frac{K - 1}{\rho} \ \ \ \ \text{ and } \ \ \ \ \gamma = \frac{(\rho - 1)K + 1}{2\rho}.
$$
It is not hard to see that $\delta \geq 0$ always holds, however, for $\gamma$ we have to be more careful. We will later verify for which combinations of $\rho$ and $\sigma$ the parameter $\gamma$ is indeed non-negative. Rewriting the expression in (\ref{eq:pos_proof}) yields that we have to find $K$ satisfying
$$
K \geq \frac{f_2(x_e,x_e^*,\sigma)}{f_1(x_e,x_e^*,\rho,\sigma)} :=  \frac{(x_e^*)^2 - 2x_ex_e^* + (1+2\sigma)x_e^2 - x_e^* + (1-2\sigma)x_e}{\left[(1 - \rho + 2\sigma)(x_e^*)^2 - 2x_ex_e^* + (1+\rho)x_e^2 + (\rho - 1 - 2\sigma)x_e^* - (\rho - 1)x_e\right]}.
$$
Note that this reasoning is only correct if $f_1(x_e,x_e^*,\rho,\sigma) \geq 0$. This is true since 
$$
f_1(x_e,x_e^*,\rho,\sigma) = \left(x_e - x_e^* + \frac{1}{2}\right)^2 - \frac{1}{4} + (2\sigma - \rho)x_e^*(x_e^*-1) + \rho x_e(x_e-1)
$$
which is non-negative for all $x_e,x_e^* \in \N$, $\sigma \geq 0$ and $0 \leq \rho \leq 2\sigma$. Furthermore, the expression is zero if and only if $(x_e,x_e^*) \in \{(0,1),(1,1)\}$, but for these pairs the nominator is also zero, and hence, the expression in (\ref{eq:pos_proof}) is therefore satisfied for those pairs. We can write
$$
f_2(x_e,x_e^*,\sigma) = \left(x_e - x_e^* + \frac{1}{2}\right)^2- \frac{1}{4} + 2\sigma x_e(x_e - 1)
$$
and therefore $f_2/f_1 = \frac{A}{A + (2\sigma - \rho)B}$, where
$$
A = \left(x_e - x_e^* + \frac{1}{2}\right)^2- \frac{1}{4} + 2\sigma x_e(x_e - 1),  \ \ \ B =  x_e^*(x_e^*-1) - x_e(x_e-1).
$$
Note that if $\rho = 2\sigma$, we have $f_2/f_1 = 1$, and hence we can take $K = 1$. Otherwise,
$$
\frac{A}{A + (2\sigma - \rho)B} \leq \frac{\sqrt{\sigma(\sigma+2)} + \sigma}{\sqrt{\sigma(\sigma+2)} + \rho - \sigma} =: K \ \ \ \Leftrightarrow \ \ \ A + (\sqrt{\sigma(\sigma+2)}+\sigma)B \geq 0
$$
The inequality on the right is true by Lemma \ref{lem:main_ineq_pos}.

To finish the proof, we determine the pairs $(\rho,\sigma)$ for which the parameter $\gamma$ is non-negative. This holds if and only if
$$
(\rho - 1)K + 1 = (\rho - 1)\frac{\sqrt{\sigma(\sigma+2)} + \sigma}{\sqrt{\sigma(\sigma+2)} + \rho - \sigma} + 1 \geq 0.
$$
Rewriting this yields the bound on $\rho$ in the statement of the theorem. \qed
\end{proof}

\bigskip

In the next theorem we provide a lower bound on the price of stability for arbitrary non-negative pairs $(\rho,\sigma)$. The proof is similar to a construction of Christodoulou et al. \cite{Christodoulou2011} used to give a lower bound on the price of stability for $\rho$-approximate equilibria. The key difference is to tune the parameters $\alpha, \beta$ in the proof with respect to the Nash definition based on the cost function $C_i^\rho(\cdot)$, rather than the $\rho$-approximate Nash definition.

\begin{theorem}
Let $\rho, \sigma > 0$ fixed, with $\rho < 2\sigma$, and $\epsilon > 0$ arbitrary. Then there exists a linear congestion game, with player cost functions $C_i^\rho(s)$, such that
$$
\frac{C^\sigma(s)}{C^\sigma(s^*)} \geq \frac{\sqrt{\sigma(\sigma+2)} + \sigma}{\sqrt{\sigma(\sigma+2)} + \rho - \sigma} - \epsilon.
$$
Here, $s$ is a best Nash equilibrium, and $s^*$ a social optimum.
\end{theorem}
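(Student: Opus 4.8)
The plan is to adapt the price-of-stability lower-bound construction of Christodoulou, Koutsoupias and Spirakis~\cite{Christodoulou2011} for $\rho$-approximate equilibria, but to re-tune its free parameters against the \emph{exact} Nash condition for the perception costs $C_i^\rho(\cdot)$ instead of against the $\rho$-approximate condition. For each $n$ I would build a linear congestion game $\Gamma_n$ on $n$ players in which every player $i$ has exactly two strategies, a ``good'' strategy $s_i^*$ and a ``bad'' strategy $s_i$, assembled recursively: the resources are organized so that the bad strategy of player $i$ collides (shares load) with the good strategies of players $i+1,\dots,n$; equivalently, $\Gamma_n$ is obtained from $\Gamma_{n-1}$ by adding one player whose bad strategy raises the load on the resources used by the optimum of $\Gamma_{n-1}$. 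The cost coefficients are linear, $c_e(x)=a_e x$, with the $a_e$ carrying two tuning parameters $\alpha,\beta>0$ (plus a geometric scaling across ``layers''), to be fixed only at the very end.

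First I would verify that $s^*=(s_1^*,\dots,s_n^*)$ is a social optimum (or within a $1+o(1)$ factor of one) and compute $C^\sigma(s^*)$; by construction $s^*$ places only a small constant load on each resource. The core of the argument is then a backward induction on $i=n,n-1,\dots,1$ showing that in \emph{every} Nash equilibrium of $\Gamma_n$ player $i$ uses her bad strategy: assuming players $i+1,\dots,n$ already play badly, the inequality $C_i^\rho(s_i,s_{-i})\le C_i^\rho(s_i^*,s_{-i})$ is forced as soon as $\alpha,\beta,\rho$ satisfy a single explicit inequality — this is precisely where $\rho$, rather than a $\rho$-approximation factor, enters, and where the re-tuning relative to~\cite{Christodoulou2011} takes place. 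Since this pins down the unique Nash equilibrium $s=(s_1,\dots,s_n)$, it is in particular the best one, so $\text{PoS}(\Gamma_n,\rho,\sigma)=C^\sigma(s)/C^\sigma(s^*)$, and I would compute $C^\sigma(s)$, noting that the bad profile creates load roughly $2$ (or the relevant small integer) on the interior resources.

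Finally I would let $n\to\infty$: the ratio $C^\sigma(s)/C^\sigma(s^*)$ converges to a closed form in $\alpha,\beta,\rho,\sigma$, and maximizing it over the feasible region cut out by the incentive inequality of the induction step reduces (after the scaling parameter divides out) to a one-dimensional quadratic optimization whose optimum is attained exactly when that inequality is tight. This optimization is dual to Lemma~\ref{lem:main_ineq_pos} — the optimal parameter realizes its equality case — which is why $\sqrt{\sigma(\sigma+2)}$ appears and why the optimal value is $\big(\sqrt{\sigma(\sigma+2)}+\sigma\big)\big/\big(\sqrt{\sigma(\sigma+2)}+\rho-\sigma\big)$; this quantity is finite and strictly larger than $1$ precisely for $0<\rho<2\sigma$, matching the hypotheses, and on the sub-range $\rho\ge 2\sigma/(1+\sigma+\sqrt{\sigma(\sigma+2)})$ it coincides with the upper bound of Theorem~\ref{thm:pos_upper}. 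Choosing $n$ large and $(\alpha,\beta)$ $\epsilon$-close to the optimum then gives a game with $C^\sigma(s)/C^\sigma(s^*)\ge K-\epsilon$, as required.

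I expect the main obstacle to be exactly the simultaneous tuning in the middle step: $(\alpha,\beta)$ must be chosen so that (i) the backward-induction incentive inequality holds for all $n$ players at once, so that the bad profile is a genuine and \emph{unique} Nash equilibrium under the costs $C_i^\rho$, while (ii) the limiting social-cost ratio is driven all the way up to $K$. The tension between (i) and (ii) forces the parameter to the boundary of the feasible region, and checking that this boundary value is actually admissible — that the prescribed loads are realizable non-negative integers in a bona fide congestion game, and that no other strategy profile is an equilibrium — is the delicate point; the residual integrality and boundary effects are what the additive $\epsilon$ absorbs. \qed
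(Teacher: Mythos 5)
Your high-level plan is the right one and matches the paper's: take the price-of-stability lower-bound gadget of Christodoulou et al.~\cite{Christodoulou2011} and re-tune its parameters against the exact Nash condition for $C_i^\rho(\cdot)$ rather than the $\rho$-approximate condition, then let the number of players grow and optimize a one-dimensional ratio. However, your picture of the gadget and, more importantly, of \emph{which} quantities are the free optimization variables, diverges from the actual construction in a way that would block the computation. In the paper's proof the game has two groups of players: $n_1$ players with a choice between $A_i$ and $P_i$, and $n_2$ ``dummy'' players locked into a single strategy $D$ that loads one globally shared resource $\gamma$ with cost $x$. The cost coefficients are \emph{not} free parameters driven to a boundary: the indifference condition $C^\rho_A(k)=C^\rho_P(k-1)$ pins them down exactly ($\beta=1/2$ and $\alpha=\rho(n_1/2+n_2-1/2)+1$), after which an arbitrarily small increase of $\alpha$ makes $A_i$ strictly dominant, so $A$ is the unique Nash equilibrium by dominance --- no backward induction, no layered or recursive structure, and no geometric scaling are needed. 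The parameter that is actually optimized to produce $\sqrt{\sigma(\sigma+2)}$ is the group-size ratio $a=n_2/n_1$: sending $n_1\to\infty$ gives $f(a)=\bigl(2\sigma(1+a)^2+1\bigr)/\bigl(\rho(1+2a)+1+2\sigma a^2\bigr)$, maximized at $a^*=-\tfrac12+\sqrt{\tfrac14+\tfrac{1}{2\sigma}}$ (this is where the hypothesis $\rho<2\sigma$ is used).

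The concrete gap is that your plan has no analogue of the dummy group $G_2$ and instead hopes to reach the target value $K$ by pushing the cost coefficients to the boundary of the incentive-feasible region. That cannot work: once the incentive constraints are made tight they determine the coefficients, and without the extra load $n_2$ on the shared resource the limiting ratio is only $f(0)=(2\sigma+1)/(\rho+1)$, which is strictly below $\bigl(\sqrt{\sigma(\sigma+2)}+\sigma\bigr)/\bigl(\sqrt{\sigma(\sigma+2)}+\rho-\sigma\bigr)$ on the relevant range (e.g.\ $1.5$ versus $\approx 1.577$ at $\rho=\sigma=1$). Your intuition that the extremal instance realizes the equality case of Lemma~\ref{lem:main_ineq_pos} and matches Theorem~\ref{thm:pos_upper} on the stated sub-range is correct, but to actually attain it you need the second player group and the optimization over $a=n_2/n_1$, not over the cost coefficients.
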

\begin{proof}
We describe the construction of Theorem 9 \cite{Christodoulou2011} using similar notation. We have a game of $n = n_1 + n_2$ players divided into two sets $G_1$ and $G_2$ with size resp. $n_1$ and $n_2$. Each player $i \in G_1$ has two strategies: $A_i$ and $P_i$. The players in $G_2$ have a unique strategy $D$. The strategy profile $A = (A_1,\dots,A_{n_1},D,\dots,D)$ will be the unique Nash equilibrium, and the strategy profile $P = (P_1,\dots,P_{n_1},D,\dots,D)$ will be the social optimum. We have three types of resources:
\begin{itemize}
\item $n_1$ resources $\alpha_i$, $i = 1,\dots n_1$, with cost function $c_{\alpha_i}(x) = \alpha x$. The resource $\alpha_i$ only belongs to strategy $P_i$.
\item $n_1(n_1 - 1)$ resources\footnote{The proof of Theorem 9 \cite{Christodoulou2011} contains a typo here: it says there are $n(n-1)$ resources of this type, instead of $n_1(n_1-1)$.} $\beta_{ij}$, $i,j = 1,\dots,n_1$ with $i \neq j$, with cost function $c_{\beta_{ij}}(x) = \beta x$. The resource $\beta_{ij}$ belongs only to strategies $A_i$ and $P_j$.
\item One resource $\gamma$ with cost function $c_{\gamma}(x) = x$, that belongs to $A_i$ for $i = 1,\dots, n_1$ and to $D$.
\end{itemize}

The idea is to set the parameters $\alpha$ and $\beta$ in such a way that $A$ becomes the unique Nash equilibrium. For any strategy profile $s$, there are $k$ players playing strategy $A_i$ and $n_1 - k$ players playing strategy $P_i$ in the set $G_1$, for some $0 \leq k \leq n_1$. By symmetry, it then suffices to look at profiles $S_k = (A_1,\dots,A_k,P_{k+1},\dots,P_{n_1},D,\dots,D)$ for $0 \leq k \leq n_1$. Furthermore, the first $k$ players playing $A_i$ all have the same cost, and also, the $n_1 - k$ players playing $P_i$ have the same cost. We can therefore focus on the costs of player $1$, denoted by $C^\rho_A(k)$, and that of player $n_1$, denoted by $C^\rho_P(k)$. We have
\begin{eqnarray}
C^\rho_A(k) &= &\beta (k-1) + \beta (1 + \rho(2-1))(n_1 - k) + 1 + \rho(n_2+k - 1) \nonumber \\
&=& (\beta - \beta(1+\rho) + \rho)k + (-\beta + \beta(1+\rho)n_1 + 1 + \rho(n_2 - 1)) \nonumber \\
&=& \rho(1 - \beta)\cdot k + (1 - \beta - \rho) + \beta(1+\rho)n_1 + \rho n_2 \nonumber
\end{eqnarray}
and
\begin{eqnarray}
C^\rho_P(k) &= & \alpha + \beta (n_1 - 1 - k)+ \beta(1 + \rho(2-1))k \nonumber \\
&=& \beta \rho \cdot k + \alpha + \beta(n_1 - 1)
\end{eqnarray}
\end{proof} 
We can set the parameters $\alpha$ and $\beta$ such that $C^\rho_A(k) = C^\rho_P(k-1)$, meaning that $S_k$ is a Nash equilibrium for every $k$ (we will create a unique Nash equilibrium in a moment), that is we take
$$
\rho(1-\beta) = \beta \rho \ \ \ \ \text{and} \ \ \ \ (1 - \beta - \rho) + \beta(1+\rho)n_1 + \rho n_2 = \alpha + \beta(n_1 - 1) - \beta \rho
$$
Note that the $-\beta \rho$ term on the far right of the second equation comes from the fact that we evaluate $C^\rho_P(\cdot)$ in $k-1$ (remember that $k$ denotes the number of players playing strategy $A_i$, so if a player would switch to $P_i$ this number decreases by $1$). Solving the left equation leads to $\beta = 1/2$. Inserting this in the right equation, and solving for $\alpha$, gives 
$$
\alpha = \rho \left( \frac{n_1}{2} + n_2 - \frac{1}{2} \right) + 1.
$$
We emphasize that $\alpha, \beta > 0$ for all $\rho \geq 0$. In order to make $A$ the unique Nash equilibrium, we can slightly increase $\alpha$ such that we get $C^\rho_A(k) < C^\rho_P(k-1)$
for all $k$ (which means that $A_i$ is a dominant strategy for player $i$). Note that this increase in $\alpha$ can be arbitrary small.
We have
$$
\frac{C^\sigma(A)}{C^\sigma(P)} = \frac{n_1\left[1 + \sigma(n_1+n_2 - 1) + \frac{1}{2}(n_1 - 1)\right] + n_2\left[1 + \sigma(n_1+n_2 - 1)\right]}{n_1\left[\rho(\frac{n_1 + 1}{2} + n_2 - 1) + 1 + \frac{1}{2}(n_1-1) \right] + n_2\left[1 + \sigma(n_2 - 1)\right]}
$$
Inserting $n_2 = a \cdot n_1$ for some rational $a > 0$, and sending $n_1 \rightarrow \infty$ gives a lower bound of
$$
f(a) = \frac{2\sigma(1+a)^2 + 1}{\rho(1+2a) + 1 + 2\sigma a^2}
$$
on the price of stability. Optimizing over $a > 0$ (this only works if $\rho < 2\sigma$) gives $a^* = -\frac{1}{2} + \sqrt{\frac{1}{4} + \frac{1}{2\sigma}}$ and $f(a^*)$ then yields the bound in the statement of the theorem. \qed

\section{Remarks and Additional Insights}\label{sec:misc}

We discuss the results of Sections \ref{sec:poa} and \ref{sec:pos} for the applications mentioned in Section \ref{sec:prelim} and obtain some additional application-specific insights and results. We conclude this section by showing that the price of stability results are not tight for the special case of symmetric network congestion games (as opposed to the price of anarchy results, which are tight for this class).

\begin{figure}[h!]
\centering
\scalebox{0.8}{
\begin{tikzpicture}

\draw[->] (0,0) -- (11,0) node[anchor=north] {$\rho$};
\draw	(0,0) node[anchor=north] {0}
		(3,0) node[anchor=north] {$h(1) \approx 0.625$}
		(5,0) node[anchor=north] {$1$}
		(8,0) node[anchor=north] {$2$};

\draw[->] (0,0) -- (0,6) node[anchor=east] {$\text{PoA}(\Gamma, \rho,1)$};

\draw[very thick] (8,4) -- (10,6);
\draw (10,5) edge[->] (9.5,5);   
\draw (10.5,5) node {$\rho + 1$};

\draw (3,2) edge[bend left=10, very thick] (8,4);   
\draw (6,3) edge[->] (6,3.3);   
\draw (6,2.6) node {$\dfrac{4\rho+1}{\rho+1}$};

\draw[dotted, very thick] (0,2) -- (3,2);
\draw (-0.5,2) node {$2.155$};

\draw (0.2,6) edge[bend right=25, very thick] (2.5,2);   
\draw (1.5,4.5) edge[->] (1,4);   
\draw (2.25,4.5) node {$\dfrac{4}{\rho(4 - \rho)}$};

\draw[dotted] (3,0) -- (3,2);
\draw[dotted] (8,0) -- (8,4);

\end{tikzpicture}}
\caption{Lower bounds on the price of anarchy for $\sigma = 1$. The bounds $(4\rho + 1)/(\rho + 1)$ and $\rho + 1$ are also tight upper bounds. The dotted horizontal line indicates the lower bound following from Theorem 3.7 \cite{Caragiannis2010Altruism}. The bound $4/(\rho(4 - \rho))$ is a lower bound for symmetric singleton congestion games proven in Theorem \ref{thm:pos_upper_network}. For general congestion games, a tight bound for $\text{PoA}(\Gamma, \rho,1)$ when $0 < \rho \leq h(1)$ is still an open problem.} 
\label{fig:sigma1}
\end{figure}
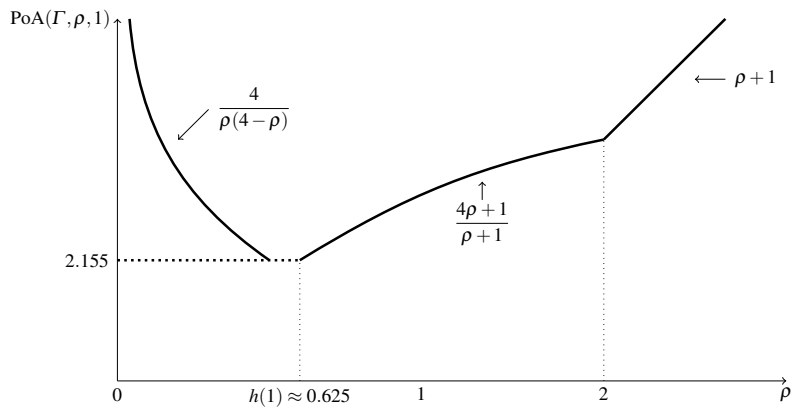

\smallskip
\textit{Altruism \cite{Caragiannis2010Altruism,Chen2014}.}  For $\sigma = 1$, we see that the resulting bound $(4\rho+1)/(\rho+1)$ is increasing in $\rho$. Especially in the context of altruism or taxes, this is not desirable. A natural question to ask is therefore if there exist collections $\mathcal{H}$ for which the price of anarchy is non-increasing as a function of $\rho$. 

The next theorem gives a sufficient condition for a class of instances to have the property of non-increasing price of anarchy, which can be interpreted as follows. For $\rho = 2$, it can be shown that the social optimum becomes a Nash equilibrium  under the player cost function $C^\rho_i(s)$, which implies that the price of stability is $1$. Nevertheless, it might happen that  worse Nash equilibria arise as well. The condition $\text{PoA}(\mathcal{H},2,1) = 1$ states that this does not happen, and even stronger, that all Nash equilibria of the game for $\rho = 2$ become social optima (it is said that the social optimum is \textit{strongly enforceable}).
\begin{theorem}\label{thm:poa_se}
Let $\mathcal{H}$ be a collection of congestion games. If $\text{PoA}(\mathcal{H},2,1) = 1$, then $\text{PoA}(\mathcal{H},\rho,1)$ is a non-increasing function for $1 \leq \rho \leq 2$.
\end{theorem}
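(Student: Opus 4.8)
Since $\text{PoA}(\mathcal{H},\rho,1)=\sup_{\Gamma\in\mathcal{H}}\text{PoA}(\Gamma,\rho,1)$ and, by hypothesis, $\text{PoA}(\Gamma,2,1)=1$ for every $\Gamma\in\mathcal{H}$, it suffices to prove that $\rho\mapsto\text{PoA}(\Gamma,\rho,1)$ is non-increasing on $[1,2]$ for each fixed $\Gamma$ with $\text{PoA}(\Gamma,2,1)=1$, because a supremum of non-increasing functions is non-increasing. Fix such a $\Gamma$. For $\sigma=1$ the social cost $C^1(\cdot)$ is independent of $\rho$, so $\opt:=\min_{s}C^1(s)$ is a fixed number and the statement reduces to: the worst-equilibrium cost $W(\rho):=\max_{s\in\mathrm{NE}(\rho)}C^1(s)$ is non-increasing on $[1,2]$. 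I would encode equilibria by a potential. Using the exact potential $\Phi^\rho(s)=\sum_{e}a_e\tfrac{x_e(x_e+1)}{2}+b_ex_e+(\rho-1)\sum_{e}a_e\tfrac{x_e(x_e-1)}{2}$ from the proof of Theorem~\ref{thm:pos_upper}, a direct computation gives $\Phi^\rho=\tfrac{\rho}{2}C^1+\tfrac{2-\rho}{2}B$ with $B(s):=\sum_{e}(a_e+b_e)x_e$ (in particular $\Phi^2=C^1$). Thus, up to the positive factor $\rho/2$, the potential of the $\rho$-game equals $C^1+\mu\,B$ with $\mu=\mu(\rho)=(2-\rho)/\rho$, a continuous strictly decreasing bijection from $[1,2]$ onto $[0,1]$. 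As equilibria are exactly the local minima (with respect to single-player deviations) of an exact potential, $\mathrm{NE}(\rho)$ equals the set $\mathcal{M}(\mu(\rho))$ of local minima of $C^1+\mu(\rho)B$, and the goal becomes: $\widetilde{W}(\mu):=\max\{C^1(s):s\in\mathcal{M}(\mu)\}$ is non-decreasing in $\mu$ on $[0,1]$. Here $\mu=0$ is $\rho=2$, so the hypothesis says precisely $\widetilde{W}(0)=\opt$ — every local minimum of $C^1$ is a global minimum (the optimum is \emph{strongly enforceable}) — whereas $\widetilde{W}(\mu)\ge\opt$ holds trivially for all $\mu$.

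The engine is a one-step observation: if $0\le\mu_1<\mu_2\le 1$, $s\in\mathcal{M}(\mu_1)$ and $s\notin\mathcal{M}(\mu_2)$, pick any single-player deviation $s\to s'$ strictly decreasing $C^1+\mu_2B$ and add this strict inequality to the inequality $(C^1+\mu_1B)(s)\le(C^1+\mu_1B)(s')$ coming from $s\in\mathcal{M}(\mu_1)$; the terms $C^1(s)+C^1(s')$ cancel and one is left with $(\mu_1-\mu_2)\bigl(B(s)-B(s')\bigr)<0$, whence $B(s')<B(s)$ and, feeding this back into the first inequality, $C^1(s')>C^1(s)$ (the case $\mu_1=0$ being harmless, since then $s$ is already a global minimum of $C^1$). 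In words: any equilibrium of a smaller-$\rho$ game that \emph{destroys} a given equilibrium has strictly larger social cost. I would then fix $0\le\mu_1<\mu_2\le1$, take a worst element $s\in\mathcal{M}(\mu_1)$, run improvement dynamics of the potential $C^1+\mu_2B$ starting from $s$, and track across each threshold at which an equilibrium disappears the binding neighbour supplied by the observation (which always has strictly larger $C^1$); using that, by strong enforceability, every state with $C^1>\opt$ still admits a $C^1$-improving move — so these binding neighbours cannot escape downward in $C^1$ — one reaches an element of $\mathcal{M}(\mu_2)$ of social cost at least $\widetilde{W}(\mu_1)$, giving $\widetilde{W}(\mu_2)\ge\widetilde{W}(\mu_1)$ and hence the theorem.

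The main obstacle is exactly this last chaining step. The clean one-step inequality crucially uses that $s$ is a local minimum of $C^1+\mu_1B$, a property the intermediate states of an improvement path need not share, so the bookkeeping that the social cost never drops is delicate — and it is here that $\text{PoA}(\mathcal{H},2,1)=1$ is genuinely needed, since without it the statement is false: a four-state ``path'' instance possessing a non-global (spurious) local minimum of $C^1$ makes $\widetilde{W}(\mu)$ first decrease and then increase as $\mu$ grows. A cleaner route than following a single trajectory is to argue directly at the finitely many thresholds of $\widetilde{W}$: just below a threshold $\mu^{*}$ a worst equilibrium $s$ with $C^1(s)=\widetilde{W}(\mu^{*-})$ must be leaving; its binding neighbour $s'$ satisfies $C^1(s')>C^1(s)$ and, via the relation $\mu^{*}\bigl(B(s')-B(s)\bigr)=C^1(s)-C^1(s')$, re-enters the local-minimum region for $\mu$ just above $\mu^{*}$ unless some other neighbour of $s'$ blocks it, in which case one recurses on that neighbour; strong enforceability together with finiteness of the range of $C^1$ should then force the recursion to terminate at a state of $C^1$-value at least $\widetilde{W}(\mu^{*-})$ lying in $\mathcal{M}(\mu)$, which is what rules out a downward jump of $\widetilde W$.
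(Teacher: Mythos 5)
Your reduction to a single game, the identity $\Phi^\rho=\tfrac{\rho}{2}\bigl(C^1+\mu B\bigr)$ with $\mu=(2-\rho)/\rho$, and the one-step observation are all correct, and your route is genuinely different from the paper's: the paper disposes of the theorem in five lines by setting $\alpha=\rho-1$ and citing a quasi-convexity result (Theorem 10.2 of \cite{Chen2014}), which gives $\text{PoA}(y)\le\max\{\text{PoA}(x),\text{PoA}(1)\}$ whenever $y$ is a convex combination of $x$ and $1$, and then combining this with $\text{PoA}(1)=1\le\text{PoA}(x)$ to rule out any increase. Your attempt to replace that citation by a self-contained structural argument is the interesting part --- but the chaining step you flag as ``delicate'' is not merely delicate; it is a genuine gap, and it cannot be closed with the ingredients you have assembled.

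The reason is that your argument only uses: (i) $\mathrm{NE}(\rho)=\mathcal{M}(\mu)$, the local minima of $C^1+\mu B$ on the deviation graph; (ii) the one-step inequality; (iii) strong enforceability, i.e.\ that $\mathcal{M}(0)$ consists of global minima of $C^1$; and (iv) finiteness. These properties do \emph{not} imply that $\widetilde{W}$ is non-decreasing. Consider four abstract states on a path $o-a-b-c$ (neighbours being adjacent states) with $C^1=(0,5,3,0)$ and $B=(0,0,4,16)$. Every local minimum of $C^1$ is global ($o$ and $c$, both of value $0$), so (iii) holds; yet $b\in\mathcal{M}(\mu)$ exactly for $\mu\in[1/4,1/2]$ with $C^1(b)=3$, while for $\mu\in(1/2,1]$ the only local minimum is $o$, so $\widetilde{W}$ drops from $3$ to $0$. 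Your recursion behaves exactly as predicted and then fails: the binding neighbour of $b$ for $\mu>1/2$ is $a$, with $C^1(a)=5>3$ and $B(a)<B(b)$ as the one-step observation demands, but $a$ is itself blocked by $o$, and the recursion terminates at $o$ with $C^1=0<\widetilde{W}(1/2)$. Strong enforceability does not rescue you here, because the $C^1$-improving move guaranteed at the intermediate state $a$ points \emph{downhill} in $C^1$, which is exactly the escape you need to exclude. This path is of course not the deviation graph of an actual congestion game, but that is precisely the point: any correct completion must use structure of congestion games (or of the class $\mathcal{H}$) beyond what your argument invokes --- which is what the quasi-convexity theorem cited in the paper supplies.
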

In a technical report, Singh \cite{Singh} shows that the social optimum is strongly enforceable for symmetric network congestion games on series-parallel graphs. We can therefore conclude that the (altruistic) price of anarchy will be a non-increasing function of $\rho$. This is a remarkable result since, to the best of our knowledge, the classical price of anarchy is unknown (the best lower bound is given by Fotakis \cite{Fotakis2010}). 

\smallskip
\textit{Constant taxes \cite{Caragiannis2010}}. Caragiannis et al. \cite{Caragiannis2010} showed that the price of anarchy can be decreased to $2.155$ by the usage of universal tax functions (see also Section 1 and Figure \ref{fig:sigma1}), which improves significantly the classical bound of $2.5$. However, the price of stability increases from $1.577$ (for classical games) to $2.013$, for this specific set of tax functions.  Furthermore, from Theorem 3.7 \cite{Caragiannis2010} it follows that the price of anarchy can never be better than $2.155$ for $0 \leq \rho \leq h(1)$. In Theorem \ref{thm:pos_upper_network} we even show that the price of anarchy goes to infinity as $\rho \rightarrow 0$.

\smallskip
\textit{Risk sensitivity under uncertainty \cite{Piliouras2013}.} We do not only re-obtain the price of anarchy results for risk-neutral players and players applying Wald's minimax principle (worst-case players), but our results also give a tight bound for any convex combination (in terms of player costs) of risk-neutral and worst-case risk attitudes. Furthermore, we also obtain tight price of stability results for this model.

\smallskip
\textit{Approximate Nash equilibria \cite{Christodoulou2011}.} For $\sigma = 1$ and $1 \leq \rho \leq 2$, we obtain a bound of $(\sqrt{3} + 1)/(\sqrt{3} + \rho - 1)$ on the price of stability. In particular, this also yields the same bound on the price of stability for $\rho$-approximate equilibria. This bound was previously obtained by Christodoulou et al. \cite{Christodoulou2011}. Conceptually our approach is different since we obtain correctness of the bound through the observation that every Nash equilibrium in our framework yields an approximate equilibrium. In particular, this immediately yields a potential function  that can be used to carry out the technical details (namely the potential function that is exact for our congestion game). Nevertheless, the framework of Christodoulou et al. \cite{Christodoulou2011} is somewhat more general and might be used to obtain a tight bound for the price of stability of approximate equilibria (which is not known to the best of our knowledge).

\subsection{Price of stability for symmetric network congestion games}\label{sec:pos_network}
The price of anarchy bound of $(1+2\rho(1+\sigma))/(1 + \rho)$ obtained in Section \ref{sec:poa} is tight even for symmetric network congestion games with linear cost functions, as was shown in Theorem \ref{thm:poa_network_lower}. This is not true for the price of stability, which we will show here for the case $\sigma = 1$.

\begin{theorem}\label{thm:pos_upper_network} Let $\Gamma$ be a linear symmetric network congestion game, then 
$$
\text{PoS}(\Gamma, \rho, 1) \leq \left\{ \begin{array}{ll} 
4/(\rho(4-\rho)) & \ \ \ \text{ if } 0  \leq \rho  \leq 1 \\
4/(2+\rho) &  \ \ \ \text{ if } 1  \leq \rho \leq 2 \\ 
(2+\rho)/4 & \ \ \ \text{ if } 2  \leq \rho  < \infty
\end{array} \right.
$$
In particular, if $\Gamma$ is a symmetric congestion game on an extenstion-parallel\footnote{A graph $G$ is extension parallel if it consists of either (i) a single edge, (ii) a single edge and an extension-parallel graph composed in series, (iii) two extension-parallel graphs composed in parallel.} graph $G$, then the upper bounds even hold for the price of anarchy. All bounds are tight.
\end{theorem}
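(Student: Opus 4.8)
The plan is to prove the price-of-stability bound by analysing the global minimiser $\hat s$ of the exact potential $\Phi^\rho$ introduced in the proof of Theorem~\ref{thm:pos_upper}. Since $\hat s$ is a Nash equilibrium (by Rosenthal~\cite{Rosenthal1973}) and the best equilibrium can only be cheaper, it suffices to bound $C^1(\hat s)/C^1(s^*)$, where $s^*$ is a social optimum and $\sigma = 1$ throughout. The organising observation is that for $\sigma = 1$ the social cost equals the potential at $\rho = 2$, i.e.\ $\Phi^2(s) = \sum_e a_e x_e^2 = C^1(s)$; hence at $\rho = 2$ the potential minimiser is exactly the social optimum and the price of stability is $1$. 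The three regimes in the statement describe the behaviour of $\hat s$ as $\rho$ is perturbed away from $2$, and I expect each branch to be governed by the pointwise ratio $\Phi^\rho_e(k)/\Phi^2_e(k) = (\rho k + 2 - \rho)/(2k)$ of the two separable objectives, whose value at load $k = 2$ is $(\rho + 2)/4$.

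First I would record the two discrete optimality conditions available: $\Phi^\rho(\hat s) \le \Phi^\rho(s^*)$ because $\hat s$ minimises $\Phi^\rho$, and the analogous statement that no cycle augmentation improves $C^1 = \Phi^2$ at $s^*$. Both $\hat s$ and $s^*$ are integral $s$--$t$ flows of value $n$, so their difference is a circulation; I would decompose it into augmenting cycles and feed each cycle into the two no-improving-augmentation conditions, using that the relevant marginals are $m^\rho_e(k) = a_e(1 + \rho(k-1))$ for the players and $m^2_e(k) = a_e(2k-1)$ for the designer. The point of passing through the flow structure, rather than the unconstrained edge-by-edge argument behind Theorem~\ref{thm:pos_upper}, is that flow conservation around the cycles excludes the globally unconstrained load pairs $(x_e, x_e^*)$ that force the weaker general bound, restricting the extremal configurations to small loads. (A nonatomic, heavy-traffic limit sends the perceived cost to $a_e\rho x_e$, whose equilibrium is insensitive to the scaling $\rho$ and only recovers the classical $4/3$, so the discreteness at loads $1,2,3$ is essential to see the $\rho$-dependence.) Combining the cycle inequalities with the pointwise ratio above should yield $(2+\rho)/4$ for $\rho \ge 2$ and $4/(2+\rho)$ for $1 \le \rho \le 2$, while for $\rho \le 1$ a second, genuinely different extremal configuration takes over and an explicit optimisation over the cycle parameter produces $4/\big(\rho(4-\rho)\big) = 4/\big(4 - (\rho - 2)^2\big)$.

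For the extension-parallel case I would upgrade the price-of-stability bound to a price-of-anarchy bound by invoking the structural fact that on extension-parallel graphs every Nash equilibrium of a symmetric network congestion game induces the same edge loads (essentially unique equilibria). Consequently every equilibrium has cost $C^1(\hat s)$, so the bound on the best equilibrium is simultaneously a bound on the worst. Finally I would establish tightness by exhibiting, for each of the three regimes, an explicit symmetric extension-parallel instance whose load pattern realises the extremal configuration identified in the upper-bound analysis (loads $1$ and $2$ for $\rho \ge 1$, together with the additional small gadget for $\rho \le 1$), matching the corresponding branch in the limit of many players.

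The main obstacle I anticipate is the flow-comparison step: converting the discrete flow-optimality conditions into the three sharp, regime-dependent constants. The delicate points are controlling integrality during cycle cancellation, since loads change as cycles are applied one at a time, and isolating the separate worst-case configuration responsible for the non-obvious branch $4/\big(\rho(4-\rho)\big)$ in the regime $\rho \le 1$, which is \emph{not} merely a potential ratio at an integer load. Pinning down exactly which small-load cycle configuration is extremal in each regime, and proving that no other configuration does worse, is where the real work lies.
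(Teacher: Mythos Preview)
Your high-level plan---bound the global potential minimiser $\hat s$ and exploit the network-flow structure rather than the unconstrained edge-by-edge argument of Theorem~\ref{thm:pos_upper}---is the paper's route. Two concrete divergences matter.

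First, the paper does not work cycle by cycle. It invokes a lemma of Fotakis~\cite{Fotakis2010} characterising global minima of a separable flow potential by a \emph{single aggregate} inequality: an acyclic flow $f$ minimises $\Phi^\rho$ iff for every feasible $g$,
\[
\sum_{e:\,f_e>g_e}(f_e-g_e)\,d_e(f_e)\ \le\ \sum_{e:\,f_e<g_e}(g_e-f_e)\,d_e(f_e+1),
\]
with $d_e(k)=a_e(1+\rho(k-1))$. Taking $g=s^*$ and adding $\sum_{f_e\le g_e}f_e a_e[1+\sigma(f_e-1)]$ to both sides reduces everything to two \emph{pointwise} inequalities in integers $x=f_e$, $y=g_e$ (one for $x>y$, one for $x\le y$). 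For $\sigma=1$, $0<\rho\le 1$ these are $\rho xy+(y-x)\le(\rho^2/4)x^2+y^2$ and $\rho xy+(1-\rho)(y-x)+(1-\rho)x^2\le(1-\rho+\rho^2/4)x^2+y^2$, proved for \emph{all} $x,y\in\N$. So the improvement over Theorem~\ref{thm:pos_upper} is not that flow conservation ``restricts to small loads'' but that Fotakis's inequality has a different, sharper shape than the summed Nash conditions; your cycle-cancellation scheme with intermediate loads is strictly harder than needed, and the ratio heuristic $\Phi^\rho_e(2)/\Phi^2_e(2)=(\rho+2)/4$ plays no formal role. Note also that only optimality of $\hat s$ is used; the no-improving-augmentation condition at $s^*$ that you propose is not needed. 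The branches $4/(2+\rho)$ and $(2+\rho)/4$ for $\rho\ge 1$ follow from the same $(h,g)$ scheme with other constants (the paper cites Caragiannis et al.~\cite{Caragiannis2010Altruism} here).

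Second, for the extension-parallel upgrade to PoA the paper does not invoke ``essentially unique edge loads''. It uses another Fotakis result: on extension-parallel graphs every Nash equilibrium is a \emph{global} minimiser of $\Phi^\rho$. Since the upper bound was derived for any global potential minimiser, it then applies to every equilibrium directly. Your route would also work if the atomic uniqueness claim holds, but it is a different structural fact. Tightness is as you expect: two-resource parallel-link instances with costs $x$ and $(1+\rho\pm\epsilon)x$ handle $\rho\ge 1$, and a singleton instance with one resource of cost $x$ against many of cost $(1+\rho(n-1)+\epsilon)x$ handles $0<\rho\le 1$.
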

For $\rho \geq 1$, the bounds were previously shown by Caragiannis et al. \cite{Caragiannis2010Altruism} for the price of anarchy of singleton symmetric congestion games (which can be modeled on an extension-parallel graph).

Since any Nash equilibrium under the player cost $C_i^\rho(\cdot)$ is in particular a $\rho$-approximate Nash equilibrium, we also obtain the following result.
 
\begin{corollary}
The price of stability for $\rho$-approximate equilibria, with $1 \leq \rho \leq 2$, is upper bounded by $4/(2+\rho)$ for linear symmetric network congestion games.

\end{corollary}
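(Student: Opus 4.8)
The plan is to derive this corollary directly from Theorem~\ref{thm:pos_upper_network} together with the observation, already recorded in the discussion of approximate Nash equilibria in Section~\ref{sec:prelim}, that every Nash equilibrium with respect to the perceived costs $C_i^\rho(\cdot)$ is a $\rho$-approximate Nash equilibrium with respect to the true costs $C_i^1(\cdot)$ whenever $\rho \ge 1$. The point is that for the price of \emph{stability} the relevant quantity is a minimum over an equilibrium set, so enlarging that set can only help.

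Concretely, I would fix a linear symmetric network congestion game $\Gamma$ and note that the social cost relevant for $\rho$-approximate equilibria is the standard one, the sum of the true player costs, i.e.\ exactly the $\sigma = 1$ instantiation $C^1(\cdot)$ of $C^\sigma(\cdot)$ in our framework. Writing $\mathrm{NE}(\rho)$ for the set of Nash equilibria under $C_i^\rho$ and $\mathrm{ANE}(\rho)$ for the set of $\rho$-approximate Nash equilibria under $C_i^1$, the chain
$$
C_i^1(s) \le C_i^\rho(s) \le C_i^\rho(s_i', s_{-i}) \le \rho\, C_i^1(s_i', s_{-i}),
$$
valid for all $i$, all $s_i' \in \mathcal{S}_i$ and all $\rho \ge 1$, shows that $\mathrm{NE}(\rho) \subseteq \mathrm{ANE}(\rho)$. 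Hence, since minimizing over a larger set yields a smaller value,
$$
\frac{\min_{s \in \mathrm{ANE}(\rho)} C^1(s)}{\min_{s^*} C^1(s^*)} \le \frac{\min_{s \in \mathrm{NE}(\rho)} C^1(s)}{\min_{s^*} C^1(s^*)} = \mathrm{PoS}(\Gamma, \rho, 1).
$$
Applying Theorem~\ref{thm:pos_upper_network} in the regime $1 \le \rho \le 2$, where it gives $\mathrm{PoS}(\Gamma, \rho, 1) \le 4/(2+\rho)$, yields the claimed bound, and taking the supremum over all linear symmetric network congestion games preserves it.

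There is essentially no technical obstacle here: the only thing requiring care is to use the set inclusion in the direction consistent with a best-case (minimization) quantity — for the price of anarchy one would need the reverse inclusion, but for the price of stability $\mathrm{NE}(\rho) \subseteq \mathrm{ANE}(\rho)$ is exactly what is needed — and to check that the restriction $\rho \ge 1$, used both in the approximate-equilibrium inclusion above and in the middle case of Theorem~\ref{thm:pos_upper_network}, is consistent with the stated range $1 \le \rho \le 2$, which it is.
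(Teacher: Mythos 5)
Your proposal is correct and follows exactly the route the paper intends: the corollary is stated as an immediate consequence of Theorem~\ref{thm:pos_upper_network} together with the observation (from the discussion of approximate equilibria in Section~\ref{sec:prelim}) that every Nash equilibrium under $C_i^\rho$ is a $\rho$-approximate equilibrium under $C_i^1$, so the best $\rho$-approximate equilibrium is at least as good as the best equilibrium in $\mathrm{NE}(\rho)$. Your write-up simply makes explicit the set inclusion and the direction of the minimization, which is the same (one-line) argument the paper relies on.
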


\noindent \textbf{Acknowledgements.} We thank the anonymous referees for useful comments and one referee in particular for very detailed suggestions and pointing out important typos.

\newpage 
\bibliographystyle{splncs03}
\bibliography{references}

\newpage
\appendix

\section{Omitted Material of Section \ref{sec:prelim}}
The following lemma shows the equivalence between dynamic taxes and altruistic players.
\begin{lemma}
For $1 \leq \rho \leq 2$, a strategy profile $s  \in \times_i \mathcal{S}_i$ is a Nash equilibrium under the cost
$$
{C}_i^{\rho}(s) = \sum_{e \in s_i} c_e(x_e) + (\rho - 1) \sum_{e \in s_i} (x_e - 1)[c_e(x_e) - c_e(x_e-1)]
$$ if and only if it is a Nash equilibrium under the altruistic cost 
$$
\mathcal{A}_i^\rho(s) = (2 - \rho) C^0_i(s) + (\rho - 1) C^0(s)
$$
\label{lem:alt_nash}
\end{lemma}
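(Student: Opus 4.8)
The plan is to show that, for every player $i$ and every fixed choice $s_{-i}$ of the other players' strategies, the two cost functions $C_i^\rho(\cdot,s_{-i})$ and $\mathcal{A}_i^\rho(\cdot,s_{-i})$ differ only by an additive term that does not depend on player $i$'s own strategy; equivalence of the two notions of Nash equilibrium is then immediate, since unilateral deviation gains coincide under both costs. Here $C_i^0(s)=\sum_{e\in s_i}c_e(x_e)$ is the standard (selfish) cost and $C^0(s)=\sum_j C_j^0(s)=\sum_{e\in E}x_e c_e(x_e)$ the associated social cost, so that $\mathcal{A}_i^\rho=(2-\rho)C_i^0+(\rho-1)C^0$.

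First I would fix $i$ and $s_{-i}$ and set $y_e=|\{j\neq i: e\in s_j\}|$, the load on $e$ induced by the other players, which is independent of $s_i$. Using $x_e=y_e+1$ for $e\in s_i$, rewrite $C_i^\rho(s)=\sum_{e\in s_i}\left[c_e(y_e+1)+(\rho-1)\,y_e\,(c_e(y_e+1)-c_e(y_e))\right]$. Next, expand $\mathcal{A}_i^\rho(s)=(2-\rho)\sum_{e\in s_i}c_e(y_e+1)+(\rho-1)C^0(s)$ and split the social-cost term by whether a resource lies in $s_i$: writing $C^0(s)=\sum_{e\in s_i}(y_e+1)c_e(y_e+1)+\sum_{e\notin s_i}y_e c_e(y_e)$ and then $\sum_{e\notin s_i}y_e c_e(y_e)=K-\sum_{e\in s_i}y_e c_e(y_e)$ with $K:=\sum_{e\in E}y_e c_e(y_e)$ a constant (it depends only on $s_{-i}$), we get $C^0(s)=K+\sum_{e\in s_i}\left[(y_e+1)c_e(y_e+1)-y_e c_e(y_e)\right]$.

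The only real calculation is then to simplify the per-resource contribution to $\mathcal{A}_i^\rho(s)$ for $e\in s_i$, namely $(2-\rho)c_e(y_e+1)+(\rho-1)\left[(y_e+1)c_e(y_e+1)-y_e c_e(y_e)\right]$: collecting the coefficient of $c_e(y_e+1)$ gives $(2-\rho)+(\rho-1)(y_e+1)=1+(\rho-1)y_e$, so this contribution equals exactly $c_e(y_e+1)+(\rho-1)y_e\,(c_e(y_e+1)-c_e(y_e))$, i.e.\ precisely the $e$-th summand of $C_i^\rho(s)$. Hence $\mathcal{A}_i^\rho(s)=C_i^\rho(s)+(\rho-1)K$ with $K=K(s_{-i})$ independent of $s_i$, so $\mathcal{A}_i^\rho(s_i,s_{-i})-\mathcal{A}_i^\rho(s_i',s_{-i})=C_i^\rho(s_i,s_{-i})-C_i^\rho(s_i',s_{-i})$ for all $s_i,s_i'\in\mathcal{S}_i$; applying this to every player yields the claimed equivalence of Nash equilibria.

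I would close by noting that this identity in fact holds for all $\rho\ge 0$; the restriction $1\le\rho\le 2$ is only what makes $\mathcal{A}_i^\rho$ a genuine convex combination $(1-\xi)C_i^0+\xi C^0$ with altruism level $\xi=\rho-1\in[0,1]$, i.e.\ the regime in which the model matches the altruistic-player setting of Caragiannis et al.\ and Chen et al. There is essentially no obstacle here; the one point requiring care is that $C^0(s)$ sums over all resources while $C_i^0(s)$ and any deviation of player $i$ only touch the resources in $s_i$ --- this is exactly what makes the constant $K$ appear (and drop out of the deviation gains).
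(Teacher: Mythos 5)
Your proof is correct, and the underlying algebra is the same as the paper's; the difference is purely in packaging. The paper writes out the Nash inequality $\mathcal{A}_i^\rho(s)\le\mathcal{A}_i^\rho(s_{-i},s_i^*)$, discards the resources outside the symmetric difference of $s_i$ and $s_i^*$, and rearranges term by term until the Nash inequality for $C_i^\rho$ appears. You instead prove the slightly stronger pointwise identity $\mathcal{A}_i^\rho(s)=C_i^\rho(s)+(\rho-1)K(s_{-i})$ with $K$ independent of $s_i$, from which equality of all unilateral deviation gains (hence of the equilibrium sets, and in fact of best-response dynamics) follows at once; this also makes transparent that the restriction $1\le\rho\le 2$ plays no role in the equivalence itself. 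One caveat worth flagging: you correctly read $C_i^0$ and $C^0$ in the lemma as the \emph{selfish} player cost $\sum_{e\in s_i}c_e(x_e)$ and the utilitarian social cost $\sum_e x_e c_e(x_e)$ (altruism level zero), rather than as the $\rho=0$ instance of the paper's equations (1)--(2), which would give $\sum_{e\in s_i}c_e(1)$; this is the intended reading and is the one the paper's own proof uses, but it is a genuine notational ambiguity in the statement that your write-up silently (and correctly) resolves.
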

\begin{proof}
Let $s$ be a Nash equilibrium under the perceived costs $\mathcal{A}_i^\gamma(s)$, and let $s^*$ be any other strategy profile. Furthermore, define $\gamma = \rho - 1$. Leaving out all the resources that are \textit{not} in the symmetric difference\footnote{Resources used by at most on strategy.} of $s_i$ and $s_i^*$ (for the social cost term), we find that the Nash condition is equivalent to
\begin{eqnarray}
(1 - \gamma)\sum_{e \in s_i \setminus s_i^*} c_e(x_e)  + \gamma \left( \sum_{e \in s_i \setminus s_i^*} x_ec_e(x_e) +  \sum_{e \in s_i^* \setminus s_i} x_ec_e(x_e)\right) & \leq &(1 - \gamma)\sum_{e \in s_i^* \setminus s_i} c_e(x_e + 1)   \nonumber \\
&  & + \ \gamma \sum_{e \in s_i \setminus s_i^*} (x_e - 1)c_e(x_e - 1)  \nonumber \\
& & + \ \gamma \sum_{e \in s_i^* \setminus s_i} (x_e + 1)c_e(x_e + 1) \nonumber
\end{eqnarray}
which is equivalent to
$$
\sum_{e \in s_i \setminus s_i^*} c_e(x_e)  + \gamma \sum_{e \in s_i \setminus s_i^*} (x_e - 1)[c_e(x_e) - c_e(x_e - 1)] \leq \sum_{e \in s_i^* \setminus s_i} c_e(x_e + 1) + \ \gamma \sum_{e \in s_i^* \setminus s_i} x_e[c_e(x_e + 1) - c_e(x_e)]
$$
which is equivalent to ${C}_i^{\gamma}(s) \leq {C}_i^{\gamma}(s_{-i},s_i^*)$ since the terms for $ e \in s_i \cap s_i^*$ do not change. \qed
\end{proof}

\newpage
\section{Omitted Material of Section \ref{sec:poa}}
\begin{rtheorem}{Lemma}{\ref{lem:main_ineq}}
Let $s$ be a Nash equilibrium under the cost functions ${C}_i^\rho(s)$ and let $s^*$ be a minimizer of $C^\sigma(\cdot)$. For $\rho, \sigma \geq 0$ fixed, if there exist $\alpha(\rho,\sigma), \beta(\rho,\sigma) \geq 0$ such that
$$
(1 + \rho\cdot x)y - \rho(x - 1)x - x \leq - \beta(\rho,\sigma)(1 + \sigma(x - 1))x + \alpha(\rho,\sigma)(1 + \sigma(y - 1))y
$$
for all non-negative integers $x$ and $y$, then
$$
\frac{C^\sigma(s)}{C^\sigma(s^*)} \leq \frac{\alpha(\rho,\sigma)}{\beta(\rho,\sigma)}.
$$
\end{rtheorem}
\begin{proof}
Without loss of generality, we may assume that $a_e = 1$ and $b_e = 0$. We then have
\begin{eqnarray}
\sum_i {C}_i^\rho(s) &=& \sum_e \rho (x_e - 1)x_e + \sum_e x_e \nonumber \\
 &=& \sum_e \rho [1 - \sigma + \sigma](x_e - 1)x_e +\rho x_e - \rho x_e + \sum_e x_e \nonumber \\
  &=& \rho \sum_e [1 + \sigma(x_e - 1)]x_e + \rho \sum_e (1 - \sigma)(x_e-1)x_e - x_e + \sum_e x_e \nonumber \\
    &=& \rho C^\sigma(s) + \rho \sum_e (1 - \sigma)(x_e-1)x_e + (1 - \rho) \sum_e x_e. \nonumber 
\end{eqnarray} Rewriting this, we find
\begin{eqnarray}
\rho \cdot C^\sigma(s) &= & \sum_i {C}_i^\rho(s) + \rho(\sigma - 1)\sum_e (x_e-1)x_e  + (\rho - 1) \sum_e x_e\nonumber \\
& \leq & \sum_i {C}_i^\rho(s^*_i,s_{-i}) + \rho(\sigma - 1)\sum_e (x_e-1)x_e  + (\rho - 1) \sum_e x_e\nonumber \\
& \leq & \sum_e [1 + \rho(x_e - 1 + 1)]x_e^* + \rho(\sigma - 1)\sum_e (x_e-1)x_e  + (\rho - 1) \sum_e x_e\nonumber \\
& = & \sum_e [1 + \rho x_e]x_e^* + \rho(\sigma - 1)(x_e - 1)x_e + (\rho - 1)x_e  \nonumber \\
& = & \sum_e [1 + \rho x_e]x_e^* + \rho[ 1 + \sigma(x_e - 1)]x_e - \rho (x_e - 1)x_e - x_e  \nonumber \\
& = & \sum_e [1 + \rho x_e]x_e^* - \rho (x_e - 1)x_e - x_e  +  \rho C^\sigma(s) \nonumber \\
& \leq & - \beta(\rho,\sigma)  C^\sigma(s) + \alpha(\rho,\sigma) C^\sigma(s^*) + \rho C^\sigma(s) \nonumber  
\end{eqnarray} 
Rearranging terms then gives the desired result.\qed 
\end{proof}

 The following proposition is used in the proof of Theorem \ref{lem:main_ineq} below.
\begin{proposition}\label{prop:f1}
For every $(x,y) \in \N^2 \setminus \{(1,0)\}$, we have
$$
f_1(x,y,\sigma) = 2y(y-1)\sigma^2 + [x^2 + 2y^2 - 2xy -x] \sigma + [x^2 - xy +2(y - x)] \geq 0
$$
for $\sigma \geq \sigma^* = 1/2 $.
\end{proposition}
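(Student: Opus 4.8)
\textbf{Proof plan for Proposition~\ref{prop:f1}.} The goal is to show that the quadratic-in-$\sigma$ expression
$$
f_1(x,y,\sigma) = 2y(y-1)\sigma^2 + [x^2 + 2y^2 - 2xy -x]\sigma + [x^2 - xy + 2(y-x)]
$$
is non-negative for all integer pairs $(x,y)\neq(1,0)$ once $\sigma \ge 1/2$. The plan is to treat this as a one-variable problem in $\sigma$ on the ray $\sigma \ge 1/2$, exploiting that the leading coefficient $2y(y-1)$ is non-negative for every integer $y$ (it is $0$ for $y\in\{0,1\}$ and positive for $y\ge 2$). First I would dispose of the degenerate cases $y=0$ and $y=1$ separately, where $f_1$ is affine (or constant) in $\sigma$: for $y=0$ one gets $f_1 = -x\sigma + x^2 - 2x = x(x-2-\sigma)$, which needs a short integer argument (the only problematic small values of $x$ are $x=1$, excluded by hypothesis, and $x=2$, giving exactly $0$; for $x\ge 3$ and $\sigma\le\ldots$ — wait, this is \emph{increasing} obstruction as $\sigma$ grows, so actually one must check it does not fail, and indeed for $x\ge 3$, $x-2-\sigma$ could be negative; this needs care — see the obstacle paragraph). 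For $y=1$, $f_1 = [x^2 - 2x + 2 - x]\sigma + x^2 - x + 2 - 2x = (x^2-3x+2)\sigma + x^2 - 3x + 2 = (x-1)(x-2)(\sigma+1) \ge 0$ for all integer $x$, which is immediate.

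For the main case $y \ge 2$, the leading coefficient is positive, so $f_1(x,y,\cdot)$ is an upward parabola in $\sigma$; it is non-negative on $[1/2,\infty)$ if and only if either its vertex lies at $\sigma \le 1/2$ and $f_1(x,y,1/2)\ge 0$, or the discriminant is non-positive. I would therefore compute $f_1(x,y,1/2)$ explicitly — multiplying through by $2$ gives $y(y-1) + [x^2 + 2y^2 - 2xy - x] + 2[x^2 - xy + 2(y-x)] = 3x^2 - 4xy + 3y^2 + 3y - 5x$ (up to arithmetic I'd double-check) — and show this is $\ge 0$ for all integers $x$ and all $y \ge 2$, together with the observation that the vertex $\sigma_{\text{vertex}} = -\frac{x^2+2y^2-2xy-x}{4y(y-1)}$ satisfies $\sigma_{\text{vertex}} \le 1/2$, i.e. $x^2 + 2y^2 - 2xy - x \ge -2y(y-1)$, equivalently $x^2 - 2xy + 4y^2 - 2y - x \ge 0$. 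Both of these are quadratic forms that are positive definite in $(x,y)$ plus lower-order terms, so completing the square in $x$ reduces each to checking a one-variable inequality in $y$ over the integers $y\ge 2$ (the relevant discriminant in $x$ being negative, or the minimizing real $x$ rounding to an integer that still satisfies the bound). Either route — "vertex to the left, boundary value non-negative" or "discriminant $\le 0$" — closes the $y\ge 2$ case.

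The step I expect to be the main obstacle is handling the boundary/degenerate cases cleanly, in particular $y=0$ (and to a lesser extent the small values $x\in\{0,1,2,3\}$ paired with small $y$), because there the sign of $f_1$ genuinely depends on the interplay between $x$ and $\sigma$ rather than being forced by a positive-definite quadratic form, and because the single excluded pair $(1,0)$ is precisely the place where $f_1$ would be negative for large $\sigma$ — so the argument must be tight there. Concretely, for $y=0$, $f_1 = x^2 - x(2+\sigma)$; for $x\ge 3$ this is $x(x-2-\sigma)$, which is negative once $\sigma > x-2$, seemingly contradicting the claim — so I must recheck: with $y=0$ the coefficient of $\sigma$ in the original display is $x^2 + 0 - 0 - x = x^2-x = x(x-1) \ge 0$ for $x\ge 1$, and the constant term is $x^2 + 0 + 2(0-x) = x^2 - 2x = x(x-2)$, giving $f_1 = x(x-1)\sigma + x(x-2)$, which for $x\ge 2$ is clearly $\ge 0$ and for $x=0$ is $0$; only $x=1$ gives $f_1 = \sigma - 1 < 0$ when $\sigma < 1$, which is exactly why $(1,0)$ is excluded. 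So the "obstacle" is really just bookkeeping — making sure the coefficients are read off correctly and that the excluded pair is the unique failure — after which the $y\ge 1$ cases go through by completing the square and reducing to finitely many integer checks.
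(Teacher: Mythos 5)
Your proposal is correct and follows essentially the same route as the paper: reduce to the boundary value $\sigma = 1/2$ (the paper does this by showing both the $\sigma^2$- and $\sigma$-coefficients are non-negative, hence $f_1$ is non-decreasing in $\sigma$; your vertex-left-of-$1/2$ condition for $y\ge 2$ is a weaker consequence of the same fact), and then verify $3x^2-4xy+3y^2+3y-5x\ge 0$ by completing the square and finitely many integer checks, exactly as in the paper. Your separate factorizations for $y=0$ and $y=1$, including the identification of $(1,0)$ as the unique failing pair, are consistent with the paper's casework.
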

\begin{proof}
Note that $2y(y-1) \geq 0$ for all $y \in \N$. Furthermore, 
$$
x^2 + 2y^2 - 2xy -x = (x - y)^2 + y^2 - x \geq (x-y)^2 + (y-x) \geq 0
$$
for all $(x,y) \in \N^2$, using the fact that $a^2 - a \geq 0$ for all $a \in \N$. This means that $f_1(x,y,\sigma)$ is non-decreasing, and hence it suffices to prove the statement for $\sigma^* = 1/2$. We need to prove
$$
\frac{1}{2}y(y-1) + \frac{1}{2}[x^2 + 2y^2 - 2xy -x] + [x^2 - xy +2(y - x)] \geq 0,
$$
or equivalent,
$$
y(y-1) + x^2 + 2y^2 - 2xy - x + 2x^2 - 2xy + 4(y-x) \geq 0.
$$
Simplifying gives
$$
3x^2 + 3y^2 - 4xy + 3y - 5x \geq 0
$$
which is equivalent to 
$$
2(x -y)^2 + x(x-5) + y(y+3) \geq 0
$$
and this last formulation is clearly true for all pair $(x,y)$ with $x \geq 5$. For $x = 4$, we find $2(4 - y)^2 - 4 + y(y+3) \geq 0$ which is clearly true for $y \geq 1$, and for $y = 0$ in can be checked through inspection. For $x = 3$, we find $2(3-y)^2 - 6 + y(y+3) \geq 0$ which is clearly true for $y \geq 2$. For $y \in \{0,1\}$, it can be check through inspection. For $x = 2$, we find $2(2 - y)^2 - 6 + y(y+3) \geq 0$, which is again clear for $y \geq 2$, and for $y \in \{0,1\}$ it can be check through inspection. For $x = 1$, we find $2(1 - y)^2 - 4 + y(y+3) \geq 0$, which is clearly true for $y \geq 1$. For $y = 0$ the inequality does not hold, but this is the case $(x,y) = (1,0)$ that we do not consider. For $x = 0$, it is clearly true. \qed
\end{proof} \medskip

\begin{rtheorem}{Theorem}{\ref{thm:poa_upper}}
Let $s$ be a Nash equilibrium under the cost functions ${C}_i^\rho(s)$ and let $s^*$ be a minimizer of $C^\sigma(\cdot)$. Then
\begin{equation}\label{eq:main_poa_result_app}
\frac{C^\sigma(s)}{C^\sigma(s^*)} \leq \frac{2\rho(1 + \sigma) + 1}{\rho + 1} 
\end{equation}
\begin{enumerate}[i)]
\item if $\frac{1}{2} \leq \sigma \leq \rho \leq 2\sigma$,
\item if $\sigma = 1$ and $h(\sigma) \leq \rho \leq 2 \sigma$,
\end{enumerate}
where $h(\sigma) = g(1 + \sigma + \sqrt{\sigma(\sigma + 2)}, \sigma)$ is the optimum of the function
$$
g(a,\sigma) = \frac{\sigma(a^2 - 1)}{(1 + \sigma)a^2 - (2\sigma + 1)a + 2\sigma(\sigma+1)}.
$$
Furthermore, there exists a function $\Delta = \Delta(\sigma)$ which the property that, for any fixed $\sigma_0 \geq 1/2$: if $\Delta(\sigma_0) \geq 0$, then (\ref{eq:main_poa_result_app}) is true for all $h(\sigma_0) \leq \rho \leq 2\sigma_0$ (the function $\Delta$ can be found in the proof below).
\end{rtheorem}
\begin{proof}
We show the following inequality,
\begin{equation}\label{eq:main_ineq_text}
(1 + \rho\cdot x)y - \rho(x - 1)x - x \leq - \frac{1 + \rho}{1 + 2\sigma}(1 + \sigma(x - 1))x + \frac{2\rho(1 + \sigma) + 1}{1 + 2\sigma}(1 + \sigma(y - 1))y.
\end{equation}
Multiplying with $(1 + 2\sigma)$ we obtain the equivalent formulation
$$
(1 + 2\sigma)\left[(1 + \rho\cdot x)y - \rho(x - 1)x - x\right] \leq -(1 + \rho)(1 + \sigma(x - 1))x + (2\rho(1 + \sigma) + 1)(1 + \sigma(y - 1))y
$$
which we can rewrite to $f_1(x,y,\sigma) \rho + f_2(x,y,\sigma) \geq 0$ where
\begin{eqnarray}
f_1(x,y,\sigma) &=& -(1 + \sigma(x-1))x + 2(1 + \sigma)(1 + \sigma(y-1))y + (1 + 2\sigma)((x-1)x - xy) \nonumber \\
&=& 2y(y-1)\sigma^2 + (-(x-1)x + 2(y-1)y + 2y + 2x(x-1) - 2xy)\sigma \nonumber \\
&+& ( -x + 2y + (x-1)x - xy) \nonumber \\
& = & 2y(y-1)\sigma^2 + [x^2 + 2y^2 - 2xy -x] \sigma + [x^2 - xy +2(y - x)] \nonumber
\end{eqnarray}
and
\begin{eqnarray}
f_2(x,y,\sigma)& =& -(1 + \sigma(x-1))x + (1 + \sigma(y-1))y + (1+2\sigma)(x - y) \nonumber \\
 & = & \sigma y(y-1) - \sigma x(x-1) + 2\sigma (x-y) \nonumber \\
  & = & \big(y^2 - x^2 + 3(x-y)\big)\sigma  \nonumber 
\end{eqnarray}
We first consider the case $(x,y) = (1,0)$, since then we do not have $f_1(x,y,\sigma) \geq 0$. Substituting the values for $x$ and $y$, we obtain
$$
-\rho + 2\sigma \geq 0
$$
which is true if and only if $\rho \leq 2\sigma$. 

\textbf{Case i).}
For the pair $(x,y) = (1,0)$, the inequality is true if and only if $\rho \leq 2\sigma$. For all other pairs, we have $f_1(x,y,\sigma) \geq 0$, and hence 
$$
f_1(x,y,\sigma) \rho + f_2(x,y,\sigma) \geq f_1(x,y,\sigma) \sigma + f_2(x,y,\sigma) 
$$
meaning that is suffices to show that $f_1(x,y,\sigma) \sigma + f_2(x,y,\sigma) \geq 0$.
After dividing by $\sigma$, we see that this is equivalent to
$$
2y(y-1)\sigma^2 + [x^2 + 2y^2 - 2xy -x] \sigma + [x^2 - xy +2(y - x)] + \big(y^2 - x^2 + 3(x-y)\big) \geq 0
$$
which is equivalent to
$$
2y(y-1)\sigma^2 + [x^2 + 2y^2 - 2xy -x] \sigma + [y^2 - xy + (x - y)] \geq 0
$$
Again, we see that the terms before $\sigma^2$ and $\sigma$ are non-negative for all $x,y \in \N$ (see proof of Proposition \ref{prop:f1}), meaning that if the inequality holds for some $\sigma^*$, then it holds for all $\sigma \geq \sigma^*$. We take $\sigma^* = 1/2$. Multiplying the resulting inequality with $2$, we find
$$
y(y-1)+ [x^2 + 2y^2 - 2xy -x] + 2[y^2 - xy + (x - y)] \geq 0
$$
which is equivalent to
$$
x^2 + 5y^2 - 4xy -3y +x \geq 0.
$$
This can be rewritten as
$$
(x - 2y)^2 + y(y-3) + x \geq 0
$$
which is clearly true for all $y \notin \{1,2\}$. For $y = 1$, we find $(x - 2)^2  - 2 + x \geq 0$. This is clearly true for all $x \geq 2$. For $x \in \{0,1\}$, it can be checked through inspection. For $y = 2$, we find $(x - 4)^2 - 2 + x \geq 0$. This is again clearly true for $x \geq 2$, and can be check through inspection for $x \in \{0,1\}$. 

\textbf{Case ii).} Now let $(x,y) \in \N^2 \setminus \{(1,0)\}$, then  $f_1(x,y,\sigma) \geq 0$ by Proposition \ref{prop:f1}, meaning that $f_1(x,y,\sigma) \rho + f_2(x,y,\sigma)$ is non-decreasing in $\rho$. From the proof of Proposition \ref{prop:f1}, it follows that $f_1(x,y,\sigma) = 0$ if and only if $(x,y) \in \{(1,1), (2,1)\}$ (which can be seen by checking all the cases). Note that this observation is independent of $\sigma$. For $(x,y) \in \{(1,1), (2,1)\}$ it also holds that $f_2(x,y,\sigma) = 0$, which implies that $f_1(x,y,\sigma) \rho + f_2(x,y,\sigma) = 0$ for every $\rho$. Therefore, we can focus on pairs $(x,y)$ for which $f_1(x,y,\sigma) > 0$. 
It follows that any $\rho^*$ for which 
$$
\rho^* \geq \sup_{x,y \in \N: f_1(x,y,\sigma) > 0} - \frac{f_2(x,y,\sigma)}{f_1(x,y,\sigma)}.
$$
yields the inequality for all $\rho \geq \rho^*$. It is not hard to see that this supremum is indeed finite, for every fixed $\sigma$. It can be proved that $f_1(x,y,\sigma) \rho' + f_2(x,y,\sigma) \geq 0$ holds for some large constant $\rho'$, which then serves as an upper bound on the supremum. For the pair $(x,y) = (0,1)$, we find $-f_2/f_1 = \sigma/(1+\sigma)$, but we will see later that the supremum on the other pairs obtained is larger than $\sigma/(1+\sigma)$.

Note that by now, we can focus on pairs in $\{(x,y) : x \geq 1, y \geq 2\}$, since for all other pairs we have either proven the inequality or given $-f_2/f_1$, that is, we are interested in 
\begin{equation} \label{eq:sup}
\sup_{\{(x,y) : x \geq 1, y \geq 2\}} - \frac{f_2(x,y,\sigma)}{f_1(x,y,\sigma)}.
\end{equation}
Note that $f_2(x,y,\sigma) = \big(y^2 - x^2 + 3(x-y)\big)\sigma = (x+y - 3)(y - x) \geq 0$ if $y \geq x$ (using that $x+y \geq 3$ for $(x,y) \in \{(x,y) : x \geq 1, y \geq 2\}$). Hence, if $y \geq x$, we have $-f_2/f_1 \leq 0$, so those pairs are not relevant for the supremum (if it follows that the upper bound on the supremum for all other pairs is positive, which we will indeed see later). Therefore, we can focus on pairs with $y < x$.

We substitute $x = ay$ for some (rational) $a > 1$. Note that
\begin{equation}\label{eq:upperbound}
\sup_{a  \in \R_{> 1}} \sup_{ y \geq 2} - \frac{f_2(ay,y,\sigma)}{f_1(ay,y,\sigma)}
\end{equation}
provides an upper bound on (\ref{eq:sup}). We have
$$
f_1(ay,y,\sigma) = [(1+\sigma)a^2 - (2\sigma + 1)a + 2\sigma(\sigma+1)]y^2 - [(2+\sigma)a + 2\sigma^2 - 2]y
$$
and
$$
- f_2(ay,y,\sigma) = [(a^2 - 1)\sigma]y^2 + [3(1-a)\sigma]y
$$
We determine an upper bound on the expression
\begin{eqnarray}
- \frac{f_2(ay,y,\sigma)}{f_1(ay,y,\sigma)} &=& \frac{[(a^2 - 1)\sigma]y^2 + [3(1-a)\sigma]y}{[(1+\sigma)a^2 - (2\sigma + 1)a + 2\sigma(\sigma+1)]y^2 - [(2+\sigma)a + 2\sigma^2 - 2]y} \nonumber \\
 & = &\frac{[(a^2 - 1)\sigma]y + [3(1-a)\sigma]}{[(1+\sigma)a^2 - (2\sigma + 1)a + 2\sigma(\sigma+1)]y - [(2+\sigma)a + 2\sigma^2 - 2]} \nonumber  \\
 &=& \frac{\alpha y + \beta}{\gamma y - \delta} \label{eq:alpha}
\end{eqnarray}
for $y \geq 2$. Elementary calculus shows that the derivative with respect to $y$ of (\ref{eq:alpha}) is given by $-(\alpha \delta + \gamma \beta)/(\gamma y - \delta)^2$, which means the expression in (\ref{eq:alpha}) is  non-decreasing or non-increasing in $y$. 
We have
\begin{eqnarray}
\alpha \delta + \gamma \beta & =& (a^2 - 1)\sigma[(2+\sigma)a + 2\sigma^2 - 2] + 3(1-a)[(1+\sigma)a^2 - (2\sigma+1)a +2\sigma(1+\sigma)] \nonumber \\
&=& (1-a)\sigma \left[-(1+a)((2+\sigma)a + 2\sigma^2 - 2) + 3((1+\sigma)a^2 - (2\sigma+1)a +2\sigma(1+\sigma))\right] \nonumber \\
&=& (1-a)\sigma \left[(3(1+\sigma) - (2+\sigma))a^2 + (2 - (2+\sigma) - 2\sigma^2 - 3(2\sigma + 1))a + 2 - 2\sigma^2 + 6\sigma(1 + \sigma)\right] \nonumber \\
&=& (1-a)\sigma \left[(2\sigma + 1)a^2 - (2\sigma^2 + 7\sigma + 3)a + (4\sigma^2 + 6\sigma + 2)\right] \nonumber \\
&=& (1-a)\sigma \left[(2\sigma + 1)a^2 - (2\sigma + 1)(\sigma+3)a + (2\sigma + 1)(2\sigma + 2)\right] \nonumber \\
&=& (1-a)\sigma(1+2\sigma) \left[a^2 - (\sigma+3)a + (2\sigma + 2)\right] \nonumber \\
&=& (1-a)\sigma(1+2\sigma) \left[\left(a - \frac{\sigma+3}{2}\right)^2 - \frac{1}{4} (1- \sigma)^2\right] \label{eq:derivative}
\end{eqnarray}

\textit{Intermezzo. If we consider the function $x_2 = (\alpha x_1 + \beta)/(\gamma x_1 - \delta)$, we see it has vertical asymptote at $x_1^* = \delta/\gamma$. 
We claim that $x_1^* < 2$. Note that, since $a > 1$, we have $\delta > 0$ for all $\sigma \geq 0$. If $\gamma < 0$ then $x_1^* < 0$. If $\gamma > 0$, we claim that $x_1^* < 2$. This is equivalent to showing that
$$
(2+\sigma)a + 2\sigma^2 - 2 < 2(1+\sigma)a^2 - 2(2\sigma +1)a + 4\sigma(\sigma+1),
$$
which holds if and only if
$$
2(1+\sigma)a^2 - (5\sigma + 4)a + 2(1+\sigma)^2 = 2(1+\sigma)\left( \left[a - \frac{5\sigma + 4}{4(1+\sigma)} \right]^2 - \frac{1}{4}\left[\frac{5\sigma + 4}{2(1+\sigma)} \right]^2 + (1+\sigma)\right) > 0.
$$
If now suffices to show that $- \frac{1}{4}\left[\frac{5\sigma + 4}{2(1+\sigma)} \right]^2 + (1+\sigma) > 0$, but this is true for all $\sigma > 0$, hence, the claim is proven.} \\

\textbf{The situation $\sigma = 1$.} It follows that the expression in (\ref{eq:derivative}) is non-positive for all $a > 1$, which implies that $-(\alpha \delta + \gamma \beta)/(\gamma y - \delta)^2 \geq 0$ and hence $-f_2/f_1$ is non-decreasing in $y \geq 2$ for every $a > 1$ (using the intermezzo). We then have
$$
\lim_{y \rightarrow \infty} -\frac{f_2(ay,y,\sigma)}{f_1(ay,y,\sigma)} =  \frac{\sigma(a^2 - 1)}{(1 + \sigma)a^2 - (2\sigma + 1)a + 2\sigma(\sigma+1)} =:h_1(a,\sigma)
$$
and maximizing this function over $a \in \R_{>1}$, we find the optimum
\begin{equation}\label{eq:a*}
a^*(\sigma) = 1 + \sigma + \sqrt{\sigma(\sigma + 2)}.
\end{equation}

\textbf{The situation $1/2 \leq \sigma < 1$.}
More generally, for any $\sigma < 1$ it holds that $\alpha \delta + \gamma \beta \leq 0$ if and only if $a \notin (1+\sigma,2)$. In particular for every $a \notin (1+\sigma,2)$, we can then show that 
\begin{equation}\label{eq:general_sigma}
\sup_{ y \geq 2} - \frac{f_2(ay,y,\sigma)}{f_1(ay,y,\sigma)} \leq \lim_{y \rightarrow \infty} -\frac{f_2(a^*y,y,\sigma)}{f_1(a^*y,y,\sigma)}
\end{equation}
with $a^*$ as in (\ref{eq:a*}) using the same argument as in the case $\sigma = 1$. The intermezzo implies that if the expression (\ref{eq:alpha}) is non-increasing in $y$, which is the case when $a \in (1 + \sigma,2)$, then the maximum value is attained in $y = 2$. That is, we are interested in the expression $-f_2(2a,2,\sigma)/f_1(2a,2,\sigma)$, and in particular, we want to show that the supremum over $a \in (1+\sigma, 2)$ does not exceed the right hand side of (\ref{eq:general_sigma}), i.e., the supremum over all $a \notin (1+\sigma,2)$.

 Because of the discussion in the above, it suffices to study 
\begin{eqnarray}
-\frac{f_2(2a,2,\sigma)}{f_1(2a,2,\sigma)} &=& \frac{[(a^2 - 1)\sigma]2 + [3(1-a)\sigma]}{[(1+\sigma)a^2 - (2\sigma + 1)a + 2\sigma(\sigma+1)]2 - [(2+\sigma)a + 2\sigma^2 - 2]}  \nonumber \\
&=&  \frac{\sigma(2a^2-3a+1)}{2(1+\sigma)a^2 - (5\sigma + 4)a + 2(1+\sigma)^2} =:h_2(a,\sigma)
\end{eqnarray}
for $a \in (1+\sigma,2)$. This expression, for $a > 1$, is maximized for $b^*(\sigma) = 1 + \sigma + \sqrt{\sigma(\sigma+1/2)}$, which in particular gives an upper bound for $a \in (1+\sigma,2)$). 

It now suffices to show that 
$$
\Delta(\sigma) := h_1(a^*(\sigma),\sigma) - h_2(b^*(\sigma),\sigma) \geq 0,
$$
since this implies that the supremum over $a > 1$ in (\ref{eq:upperbound}) is attained at some $a \notin (1+\sigma,2)$. We have checked this numerically (see Figure \ref{fig:sigma}).

\textbf{The situation  $\sigma > 1$. } We can use similar reasoning as in the previous case, but now the expression in (\ref{eq:alpha}) is non-increasing for $a \in (2,1+\sigma)$. Note that this does not affect the reasoning in the previous case, since we maximize over all $a > 1$ when obtaining $b^*(\sigma)$. \qed
%
\begin{figure}[h!]
\centering
\includegraphics[scale=0.9]{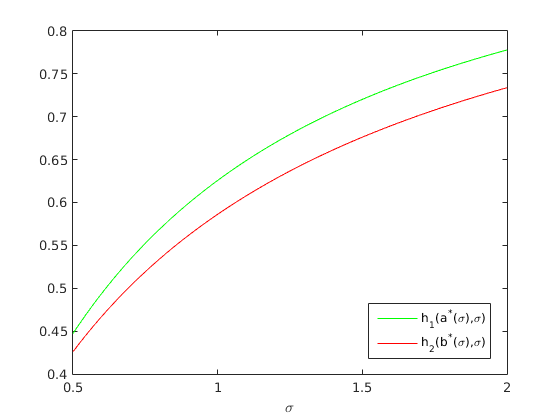}
\caption{Numerical verification that $h_2(b^*(\sigma),\sigma) \leq h_1(a^*(\sigma),\sigma)$ for $1/2 \leq \sigma \leq 2$ with step size $\Delta = 10^{-6}$.}
\label{fig:sigma}
\end{figure} 
\end{proof}

\noindent The following theorem shows that the bound in Theorem \ref{thm:poa_upper} is also a lower bound for general congestion games and arbitrary $\rho, \sigma \geq 0$. We generalize the construction of Christodoulou and Koutsoupias \cite{Christodoulou2005}, who showed the lower bound for the classical case $\rho = \sigma = 1$.  This construction is also used in the risk-uncertainty model of Nikolova et al. \cite{Piliouras2013}, and the altruism model of Chen et al. \cite{Chen2014}.
\begin{theorem}
For $\rho,\ \sigma \in \R_{>0}$ fixed, and players with cost functions ${C}_i^\rho(s)$, there exist linear congestion games such 
$$
\frac{C^\sigma(s)}{C^\sigma(s^*)} \geq \frac{2\rho(1 + \sigma) + 1}{\rho + 1}
$$
where $s$ is a Nash equilibrium, and $s^*$ a socially optimal strategy profile.
\label{thm:poa_lower}
\end{theorem}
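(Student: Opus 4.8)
The plan is to generalize the lower-bound construction of Christodoulou and Koutsoupias~\cite{Christodoulou2005} for the classical case $\rho=\sigma=1$, much as Theorem~\ref{thm:poa_network_lower} generalizes the construction of Correa et al.~\cite{Correa2015}. That construction has $n$ players, each with exactly two strategies; one strategy profile $s$ is the designated Nash equilibrium and a second profile $s^*$ witnesses a low social cost. Its resources fall into two families: a family $E_2$ carrying load $2$ under $s$ and load $1$ under $s^*$, and a family $E_1$ carrying load $1$ under both. These are precisely the pairs $(x,y)\in\{(2,1),(1,1)\}$ at which the inequality of Lemma~\ref{lem:main_ineq} is tight for the smoothness parameters $\alpha(\rho,\sigma)=(2\rho(1+\sigma)+1)/(1+2\sigma)$, $\beta(\rho,\sigma)=(1+\rho)/(1+2\sigma)$ used in Theorem~\ref{thm:poa_upper}, so no resource of any other load pattern is needed. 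First I would take over this combinatorial skeleton verbatim, but make every resource linear of the form $c_e(x)=a_e x$ with the coefficients left free; by homogeneity it suffices to put one value $a$ on all of $E_2$ and one value $a'$ on all of $E_1$.

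Second, I would fix the ratio $a/a'$ from the condition that $s$ be a Nash equilibrium under the perceived cost $C_i^\rho$. This is where $\rho$ enters: when a player contemplates switching from her strategy in $s$ to her other strategy, each resource she newly occupies has its load raised from $x_e$ to $x_e+1$ and is thus perceived at cost $a_e(1+\rho x_e)$, whereas on a resource she keeps she perceives $a_e(1+\rho(x_e-1))$. Writing down the no-profitable-deviation condition for the one relevant deviation of each player yields a single linear relation between $a$ and $a'$ depending only on $\rho$, whose solution fixes $a/a'$ and recovers the instance of~\cite{Christodoulou2005} at $\rho=1$. One then checks, as in~\cite{Christodoulou2005}, that this choice also precludes every other deviation, because the alternative strategies are arranged so that the analysed deviation is the cheapest available.

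Third, I would evaluate the social costs under $\sigma$. On a resource of load $x_e$ the contribution to $C^\sigma$ is $x_e\,a_e\,(1+\sigma(x_e-1))$, so (up to boundary terms of lower order in $n$) $C^\sigma(s)=2(1+\sigma)\,a\,|E_2|+a'\,|E_1|$ while $C^\sigma(s^*)=a\,|E_2|+a'\,|E_1|$, all loads in $s^*$ being $1$. Substituting the value of $a/a'$ from the previous step together with the cardinality ratio $|E_2|/|E_1|$ dictated by the skeleton, and letting $n\to\infty$, the quotient $C^\sigma(s)/C^\sigma(s^*)$ should collapse to $(2\rho(1+\sigma)+1)/(\rho+1)$; since the true optimum is no more expensive than $s^*$, the claimed lower bound follows.

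The main obstacle I foresee is the interplay between the last two steps: the Nash condition forces $a/a'$, the combinatorial skeleton forces $|E_2|/|E_1|$, and one must verify that the product of these two quantities is exactly what turns the cost quotient into $(2\rho(1+\sigma)+1)/(\rho+1)$ and not some nearby value. Should they fail to match, the fix is to enrich the skeleton with an extra tunable parameter --- for instance two groups of players whose size ratio is optimised, as in the price-of-stability lower bound --- and to choose that parameter so that the ratio comes out right. A secondary point is that the whole argument must remain valid for all $\rho,\sigma>0$, including $\rho>2\sigma$ where $(2\rho(1+\sigma)+1)/(\rho+1)$ is no longer the true price of anarchy; correspondingly, no case distinction between $\rho$ and $2\sigma$ should be required in the construction, unlike in the matching upper bound.
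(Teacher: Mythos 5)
Your proposal follows essentially the same route as the paper's proof: both generalize the Christodoulou--Koutsoupias construction by keeping its combinatorial skeleton (two resource families, loads $(2,1)$ and $(1,1)$ under equilibrium versus optimum), fixing the coefficient ratio $a/a'=\rho$ from the Nash condition under $C_i^\rho$, and evaluating the social cost under $\sigma$ to obtain $(2\rho(1+\sigma)+1)/(\rho+1)$. The only cosmetic difference is that the paper's cyclic construction makes the bound exact for every $n\ge 3$, so no limit $n\to\infty$ or boundary-term bookkeeping is needed, and your worry about the interplay between $a/a'$ and $|E_2|/|E_1|$ resolves itself since both families have the same cardinality.
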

\begin{proof}
We construct a congestion game of $n \geq 3$ players and $|E| = 2n$ resources with price of anarchy greater than or equal to $(2\rho(1 + \sigma) + 1)/(\rho + 1)$. The set $E$ is divided in the sets $E_1 = \{h_1,\dots,h_n\}$ and $E_2 = \{g_1,\dots,g_n\}$. Player $i$ has two pure strategies: $\{h_i,g_i\}$ and $\{h_{i+1},g_{i-1},g_{i+1}\}$, where the indices appear as $i$ mod $n$. The latency functions of the elements in $E_1$ are $c_e(x) = x$, whereas the latency functions of the elements in $E_2$ are $c_e(x) = \rho x$.\\
\indent  Regardless which strategy player $i$ plays, he always uses at least one resource from both $E_1$ and $E_2$, implying that $C^\sigma_i(s) \geq \rho + 1$. This implies that
\begin{equation}\label{eq:general_t}
C^\sigma(t)  = \sum_{i \in N} C^\sigma_i(s) \geq (\rho + 1)n
\end{equation}
for every strategy profile $t$, and in particular for a social optimum $s^*$. 

We will now show that the strategy profile $s$ where every agent $i$ plays its second strategy $\{h_{i+1},g_{i-1},g_{i+1}\}$ is a Nash equilibrium. We have
$
{C}_i^\rho(s) = 2 \rho [1 + \rho(2 - 1)] + 1 = 2\rho^2 + 2\rho + 1.
$
If some agent $i$ deviates to its first strategy $s_i'$, we have
$
{C}_i^\rho(s_i',s_{-i}) = \rho[1 + \rho(3 - 1)] + (1 + \rho(2-1)) = 2\rho^2 + 2\rho + 1,
$
since there are then three agents using $g_i$ and two agents using $h_i$. This shows that $s$ is a Nash equilibrium. The social cost of this strategy $s$ is
\begin{equation}\label{eq:general_s}
C^\sigma(s) = n(1 + 2\rho [1 + \sigma(2-1)]) = \left(1 + 2\rho(1+\sigma) \right) n.
\end{equation}
Combining (\ref{eq:general_s}) with (\ref{eq:general_t}) then gives the desired result.\qed \medskip
\end{proof}

\begin{rtheorem}{Theorem}{\ref{thm:above_twosigma}}
Let $s$ be a Nash equilibrium under the cost functions ${C}_i^\rho(s)$ and let $s^*$ be a minimizer of $C^\sigma(\cdot)$. Then $C^\sigma(s)/C^\sigma(s^*) \leq \rho + 1$ for $1 \leq 2\sigma \leq \rho$. Furthermore, this bound is tight.
\end{rtheorem}
\begin{proof}
We show that
$$
(1 + \rho\cdot x)y - \rho(x - 1)x - x \leq - (1 + \sigma(x - 1))x + (1+\rho)(1 + \sigma(y - 1))y
$$
and then the result follows from Lemma \ref{lem:main_ineq}, with $h = 1$ and $g = 1+\rho$. Rewriting gives the equivalent statement
\begin{equation}\label{eq:twosigma}
\left[ y + \sigma y(y-1) - xy + x(x-1)\right]\rho + \sigma \left[y(y-1) - x(x-1)\right] \geq 0.
\end{equation}
We first show that $\left[ y + \sigma y(y-1) - xy + x(x-1)\right] \geq 0$ for all $\sigma \geq 1/2$. If suffices to show this claim for $\sigma = 1/2$, since $y(y-1) \geq 0$ for all $y \in \N$. We have
$$
y + \frac{1}{2} y(y-1) - xy + x(x-1) = \frac{1}{2}\left[ \left(x - y - \frac{1}{2}\right)^2 - \frac{1}{4} + x(x-1)\right]
$$
and this last expression is clearly non-negative for all $x,y \in \N$ (since the quadratic term is always at least $1/4$). 

It now suffices to show (\ref{eq:twosigma}) for $\rho = 2\sigma$, since we have shown that the expression is a non-decreasing affine function of $\rho$, for every fixed $\sigma \geq 1/2$. Substituting $\rho = 2\sigma$ and dividing (\ref{eq:twosigma}) by $\sigma$, we get the equivalent statement
\begin{equation}\label{eq:twosigma_2}
2 \left[ y + \sigma y(y-1) - xy + x(x-1)\right] + \left[y(y-1) - x(x-1)\right] \geq 0
\end{equation}
which we will show to be non-negative for all non-negative integers $x$ and $y$ and $\sigma \geq 1/2$. Again, it suffices to show the statement for $\sigma = 1/2$. The statement in (\ref{eq:twosigma_2}) is then equivalent to
$$
\left(x - y - \frac{1}{2}\right)^2 - \frac{1}{4} + y\left(y-1\right)
$$
which is clearly non-negative for all $x, y \in \N$. 

The tightness can be obtained by considering the following game on four resources with two players. Player $A$ has strategies $\{\{1\},\{2,4\}\}$ and player $B$ has strategies $\{\{2\},\{1,3\}\}$. Resources $e = 1,2$ have cost function $c_e(x) = x$ and resources $e = 3,4$ have cost function $c_e(x) = \rho x$. The optimum $s^* = (\{1\},\{2\})$ has cost $2$, whereas the Nash equilibrium $s = (\{2,4\},\{1,3\}$ has cost $2(1+\rho)$. \qed
\end{proof} \medskip

\newpage
\section{Omitted Material of Section \ref{sec:pos}}
\noindent \textbf{Lemma \ref{lem:main_ineq_pos}.}
For all non-negative integers $x$ and $y$, and $\sigma \geq 0$, we have
$$
\left(x - y + \frac{1}{2}\right)^2- \frac{1}{4} + 2\sigma x(x - 1) + (\sqrt{\sigma(\sigma+2)} + \sigma)[y(y-1) - x(x-1)] \geq 0.
$$
\begin{proof}
The inequality is clearly true for all $y \geq x$, so we assume that $y < x$.
Rewriting the expression gives
$$
(1 + \sigma + \sqrt{\sigma(\sigma+2)})y^2 - 2xy + (1 + \sigma - \sqrt{\sigma(\sigma+2)})x^2 - (1 + \sigma + \sqrt{\sigma(\sigma+2)})y + (1 - \sigma + \sqrt{\sigma(\sigma+2)})x \geq 0.
$$
Multiplying with $1 + \sigma - \sqrt{\sigma(\sigma+2)}$, which is non-negative for all $\sigma \geq 0$, gives
$$
y^2 - 2\left(1 + \sigma - \sqrt{\sigma(\sigma+2)}\right)xy + \left(1 + \sigma - \sqrt{\sigma(\sigma+2)}\right)^2x^2 - y
$$
$$
+  \ (1 + \sigma - \sqrt{\sigma(\sigma+2)})(1 - \sigma + \sqrt{\sigma(\sigma+2)})x \ \geq \ 0,
$$
using the fact that $(1+\sigma + \sqrt{\sigma(\sigma+2)})(1 + \sigma - \sqrt{\sigma(\sigma+2)}) = 1$. This is equivalent to
$$
\left(\left(1+\sigma - \sqrt{\sigma(\sigma+2)}\right)x - y + \frac{1}{2} \right)^2 + \left(1+\sigma - \sqrt{\sigma(\sigma+2)}\right)\left(\right[1+\sigma - \sqrt{\sigma(\sigma+2)}\left] - 1\right)x - \frac{1}{4} \geq 0.
$$
We now substitute $c = 1 + \sigma - \sqrt{\sigma(\sigma+2)} \geq 0$ in order to obtain the equivalent formulation
\begin{equation}\label{eq:c}
\left(c \cdot x - y + \frac{1}{2} \right)^2 + c(1-c)x - \frac{1}{4} \geq 0
\end{equation}
for $0 \leq c < 1$, since the function $c(\sigma) = 1 + \sigma - \sqrt{\sigma(\sigma+2)}$ is  bijective from $\R$ to $[0,1)$. For $x = 0$, the inequality reduces to $(\frac{1}{2} - y)^2 - 1/4 \geq 0$ which is true for all $y \in \N$. For $x = 1$, we get the equivalent formulation
$$
(y-1)\left(y - 2c\right) \geq 0,
$$
which is clearly true for $y = 1$. For $y = 0$, it follows from the fact that $c \geq 0$. For $y \geq 2$ if follows from the fact that $y - 2c \geq 0$ for all $y \geq 2$, since $0 \leq c < 1$. This completes the case $x = 1$. 

For $x \geq 2$, we rewrite the expression $(\ref{eq:c})$ to
\begin{equation}\label{eq:c_rewritten}
x(x - 1)c^2 + 2x(1 - y)c + y(y-1) \geq 0
\end{equation}
If $y = 0$, the expression in (\ref{eq:c_rewritten}) is clearly non-negative for all $x \geq 2$ and $0 \leq c < 1$. For $y \geq 1$, note that $g(c) = x(x - 1)c^2 + 2x(1 - y)c + y(y-1)$ is a quadratic and convex function for all fixed $x$ and $y$. Therefore, in particular, for any $x$ and $y$ fixed, it suffices to show that the inequality holds for the minimizer of $g$, which is $c^* = (y-1)/(x-1)$ (which can be found by differentiating with respect to $c$). Note that $0 \leq c^* < 1$ by our assumption that $y \geq 1$ and $y < x$ (made in the beginning of the proof). Substituting implies that it suffices to show that
$$
\frac{x(x-1)(y-1)^2}{(x-1)^2} + \frac{2x(1-y)(y-1)}{x-1} + y(y-1) \geq 0.
$$
Multiplying the expression with $(x-1)$ implies that it now suffices to show that
$$
x(y-1)^2 - 2x(y-1)^2 + y(y-1)(x-1)  \geq 0
$$
for all $1 \leq y < x$. This is always true since
\begin{eqnarray}
x(y-1)^2 - 2x(y-1)^2 + y(y-1)(x-1) & = & -x(y-1)^2 + y(y-1)(x-1) \nonumber \\
& = & (y-1) \left[-x(y-1) + y(x-1)\right] \nonumber \\
& = & (y-1) (x - y) \nonumber \\
& \geq & 0 \nonumber
\end{eqnarray}
whenever $1 \leq y < x$. This completes the proof. \qed \medskip
\end{proof}

\medskip
The bound of $2$ on the price of stability for generalized affine congestion games requires some additional arguments: 
The bound in  Theorem \ref{thm:pos_upper} with $\rho = \sigma$ is only valid for $\sigma \geq 1/4$ (because otherwise the lower bound on $\rho$ is not satisfied). Nevertheless for $0 \leq \sigma \leq 1/4$, the corresponding cost functions $c_e(x) = \sigma x + (1-\sigma)$ have non-negative constants and thus the price of stability for classical congestion games applies here. That is, we have
$$
\text{PoS}(\mathcal{A}') = \max\left\{1.577, \  \sup_{\sigma \geq 1/4} \left\{1 + \sqrt{\sigma/(\sigma+2)}\right\}\right\} = 2.
$$

\newpage 
\section{Omitted Material of Section \ref{sec:misc}}

\begin{rtheorem}{Theorem}{\ref{thm:poa_se}}
Let $\mathcal{H}$ be a collection of congestion games. If $\text{PoA}(\mathcal{H},1,2) = 1$, then $\text{PoA}(\mathcal{H},1,\rho)$ is non-increasing function for $1 \leq \rho \leq 2$.
\end{rtheorem}
\begin{proof}
Let $\alpha = \rho - 1$. Suppose that $\text{PoA}(\mathcal{H},1,\alpha) =: \text{PoA}(\alpha)$ is not non-increasing, then there exist $x < y \in [0,1]$ such that $\text{PoA}(y) > \text{PoA}(x)$. We also know, by assumption, that $\text{PoA}(x) \geq \text{PoA}(1) = 1$, since the price of anarchy is always lower bounded by $1$ (note that this also implies that $y \neq 1$). This means that
$$
\max \{\text{PoA}(x), \text{PoA}(1)\} < \text{PoA}(y).
$$ 
However, if we write $y = \gamma\cdot x + (1- \gamma)\cdot 1$ for some $\gamma \in [0,1]$, then Theorem 10.2 \cite{Chen2014} implies that 
$$
\text{PoA}(y) \leq \max \{\text{PoA}(x),\text{PoA}(1)\}
$$
which is a contradiction.\qed
\end{proof}

\section{Omitted Material of Section \ref{sec:pos_network}}

\begin{rtheorem}{Theorem}{\ref{thm:pos_upper_network}}
Let $\Gamma$ be a symmetric network congestion game with linear cost functions, then 
$$
\text{PoS}(\Gamma, \rho, 1) \leq \left\{ \begin{array}{ll} 
4/(\rho(4-\rho)) & \ \ \ \text{ if } 0  \leq \rho  \leq 1 \\
4/(2+\rho) &  \ \ \ \text{ if } 1  \leq \rho \leq 2 \\ 
(2+\rho)/4 & \ \ \ \text{ if } 2  \leq \rho  < \infty
\end{array} \right.
$$
In particular, if $\Gamma$ is a symmetric congestion game on an extenstion-parallel \footnote{A graph $G$ is extension parallel if it consists of either (i) a single edge, (ii) a single edge and an extension-parallel graph composed in series, (iii) two extension-parallel graphs composed in parallel.} graph $G$, then the upper bounds even hold for the price of anarchy. All bounds are tight.
\end{rtheorem} \medskip

\noindent The remainder of this section is dedicated to the proof of Theorem \ref{thm:pos_upper_network}. 
We will refer to strategy profiles as flows, since we can interpret symmetric network congestion games as a flow problem in which players each have to route one unit of unsplittable flow from a given source to a given target. To be precise, for a graph $G = (V,E)$ and given $s,t \in V$, we write $\mathcal{P}$ for the set of all simple $s,t$-paths (the common strategy set of the players). We denote $f_P$ as the number of players using path $P \in \mathcal{P}$. We call $f$ a feasible (unsplittable) flow if $\sum_{P \in \mathcal{P}} f_P = N$, and with $f_e$ we denote the number of players using edge $e \in E$, that is, $f_e = \sum_{P \in \mathcal{P} : e \in P} f_P$.

We use the following result due to Fotakis \cite{Fotakis2010}.
\begin{lemma}[Fotakis \cite{Fotakis2010}] Let $\Gamma$ be a congestion game with cost functions $d_e$, and let $\Phi$ be an exact potential for $\Gamma$. An acyclic flow $f$ minimizes the potential function $\Phi$ if and only if
$$
\sum_{e : f_e > g_e} (f_e - g_e)d_e(f_e) - \sum_{e:f_e < g_e} (g_e - f_e)d_e(f_e + 1) \leq 0
$$
for every feasible flow $g$. \label{lem:fotakis_sym_net}
\end{lemma}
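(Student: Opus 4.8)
The plan is to establish the two directions of the equivalence separately, using the Rosenthal form $\Phi(f)=\sum_{e\in E}\Phi_e(f_e)$ with $\Phi_e(k)=\sum_{j=1}^{k}d_e(j)$ (every exact potential of a congestion game agrees with this one up to an additive constant, hence has the same minimizers) and the standing assumption that each $d_e$ is non-decreasing. The ``if'' direction is routine: assuming the displayed inequality holds for every feasible flow $g$, I would fix $g$ and bound the contribution of each edge to $\Phi(g)-\Phi(f)$ by monotonicity of $d_e$, namely $\Phi_e(g_e)-\Phi_e(f_e)=\sum_{j=f_e+1}^{g_e}d_e(j)\ge(g_e-f_e)\,d_e(f_e+1)$ when $g_e\ge f_e$, and $\Phi_e(g_e)-\Phi_e(f_e)=-\sum_{j=g_e+1}^{f_e}d_e(j)\ge-(f_e-g_e)\,d_e(f_e)$ when $g_e<f_e$. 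Summing over $e$ gives
$$
\Phi(g)-\Phi(f)\ \ge\ \sum_{e:\,g_e>f_e}(g_e-f_e)\,d_e(f_e+1)\ -\ \sum_{e:\,f_e>g_e}(f_e-g_e)\,d_e(f_e),
$$
whose right-hand side is exactly the negative of the expression in the statement, hence non-negative by hypothesis, so $\Phi(g)\ge\Phi(f)$.

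For the ``only if'' direction I would use cycle cancellation. Assume the acyclic flow $f$ minimizes $\Phi$ and fix a feasible flow $g$. Since $f$ and $g$ each route $N$ units from $s$ to $t$, the integral vector $h:=g-f$ is a circulation; orienting every edge $e$ with $h_e>0$ in its direction in $G$ and every edge with $h_e<0$ in the reverse direction, each with capacity $|h_e|$, the circulation $h$ decomposes into directed cycles $C_1,\dots,C_m$ of this auxiliary digraph (counted with multiplicity), where a cycle $C$ traverses forward a set $C^+\subseteq\{e:g_e>f_e\}$ of edges and backward a set $C^-\subseteq\{e:g_e<f_e\}$ of edges. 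The crucial observation is that, for each such $C$, the vector $f+\mathbf{1}_C$ obtained by adding one unit on $C^+$ and removing one unit on $C^-$ is again a feasible flow: it routes $N$ units because $\mathbf{1}_C$ is a circulation, it is non-negative because $f_e\ge g_e+1\ge1$ on every $e\in C^-$, and after deleting any directed cycles from its support (which does not increase $\Phi$) it is an integral $s,t$-flow of value $N$ and hence decomposes into $N$ simple $s,t$-paths. Minimality of $f$ therefore gives, for every $C$ in the decomposition,
$$
0\ \le\ \Phi(f+\mathbf{1}_C)-\Phi(f)\ =\ \sum_{e\in C^+}d_e(f_e+1)\ -\ \sum_{e\in C^-}d_e(f_e).
$$
Summing over $C_1,\dots,C_m$, and using that an edge with $g_e>f_e$ is traversed forward by exactly $g_e-f_e$ of the cycles while an edge with $g_e<f_e$ is traversed backward by exactly $f_e-g_e$ of them, yields
$$
\sum_{e:\,g_e>f_e}(g_e-f_e)\,d_e(f_e+1)\ -\ \sum_{e:\,f_e>g_e}(f_e-g_e)\,d_e(f_e)\ \ge\ 0,
$$
which is exactly the inequality in the statement after rearrangement.

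The step I expect to be the main obstacle is the feasibility claim inside the ``only if'' direction: one must argue that every one-unit cycle augmentation $f+\mathbf{1}_C$ really lies in the set of unsplittable $s,t$-flows over which $f$ is assumed to be a minimizer, and this is precisely where acyclicity of $f$ (together with the cycle-stripping remark) enters. One must also be careful that the decomposition of $h$ respects the signs of its entries, so that forward-traversed edges are only those with $g_e>f_e$ and backward-traversed edges only those with $g_e<f_e$ (in particular $f_e\ge1$ there, which is what makes removing a unit legitimate). Once these points are settled, the rest is the monotonicity bookkeeping above; and it is this circle of ideas -- a suitable exact potential together with Lemma~\ref{lem:fotakis_sym_net} -- that I would then use to prove the bounds of Theorem~\ref{thm:pos_upper_network}.
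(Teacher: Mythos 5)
The paper does not prove this lemma: it is imported verbatim from Fotakis \cite{Fotakis2010} and used as a black box, so there is no in-paper argument to compare yours against. Your proof is correct and is the standard potential/cycle-cancellation argument: termwise monotonicity bounds on $\Phi_e(g_e)-\Phi_e(f_e)$ give the ``if'' direction, and decomposing the circulation $g-f$ into unit cycles, testing each single-cycle augmentation $f+\mathbf{1}_C$ against the minimality of $f$, and summing the resulting inequalities (with the correct multiplicities $|g_e-f_e|$) gives the ``only if'' direction. Two small remarks. First, you rely on two properties of the cost functions that the lemma statement leaves implicit: each $d_e$ must be non-decreasing (for the ``if'' direction) and non-negative (so that stripping directed cycles from $f+\mathbf{1}_C$ does not increase $\Phi$). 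Both hold for the costs $d_e(x)=a_e(1+\rho(x-1))$ with $a_e,\rho\ge 0$ to which the paper applies the lemma, so this is harmless here, but it is worth flagging as a hypothesis. Second, you locate the role of acyclicity of $f$ in the feasibility of the augmented flow, yet your cycle-stripping step already covers the general case: after removing directed cycles, $f+\mathbf{1}_C$ dominates in $\Phi$ a load vector realizable by $N$ simple $s,t$-paths whether or not $f$ itself was acyclic, and nothing else in your argument touches the acyclicity hypothesis. So you have in fact proved a marginally more general statement than the one quoted; this is not a gap, just a sign that the hypothesis is carried along from Fotakis's broader development rather than being needed for this equivalence.
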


The following lemma gives inefficiency results for global minima of the potential function $\Phi$ (compared to any feasible flow). Since the local minima of $\Phi$ correspond to the Nash equilibria of the game $\Gamma$, it follows that the global minima of $\Phi$ are Nash equilibria. Furthermore,  Fotakis \cite{Fotakis2010} shows that every Nash equilibrium of a symmetric congestion game on an extension-parallel graph is a global minimum of the potential function $\Phi$. In particular, this means that the ineffiency results in Lemma \ref{lem:fotakis_implication} hold for the price of stability of symmetric network congestion games, and the price of anarchy of symmetric extension-parallel congestion games.

\begin{lemma}  \label{lem:fotakis_implication}
Let $\Gamma$ be a congestion game with cost functions $d_e(x) = a_e (1 + \rho(x - 1))$, and let $\Phi$ be an exact potential for $\Gamma$. Let $f$ be an acyclic flow minimizing the potential function $\Phi$, then
\begin{eqnarray}
C^\sigma(f) &\leq &\sum_{f_e > g_e} a_e\left[ (f_e - 1)(\rho g_e + (\sigma - \rho)f_e) + g_e\right] \nonumber \\
& &  + \sum_{f_e \leq g_e} a_e \left[ (f_e - 1)(\rho g_e + (\sigma - \rho)f_e) + (1+\rho)g_e - \rho f_e \right]. \nonumber 
\end{eqnarray}
Furthermore, if $h(\rho,\sigma) < 1$ satisfies, 
\begin{equation}\label{eq:pos1a}
(x - 1)(\rho y + (\sigma - \rho)x) + y \leq h(\rho,\sigma) \cdot x[1 + \sigma(x-1)] + g(\rho,\sigma) \cdot y[1 + \sigma(y-1)]
\end{equation}
for all non-negative integers $x > y$, and
\begin{equation}\label{eq:pos1b}
(x - 1)(\rho y + (\sigma - \rho)x) + (1+\rho)y - \rho x \leq h(\rho,\sigma) \cdot  x[1 + \sigma(x-1)] + g(\rho,\sigma) \cdot y[1 + \sigma(y-1)]
\end{equation}
for all non-negative integers $x \leq y$, then $C^\sigma(f)/C^\sigma(g) \leq g(\rho,\sigma) / (1 - h(\rho,\sigma))$. 
\end{lemma}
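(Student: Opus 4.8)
The plan is to derive the first (unconditional) inequality directly from Fotakis's optimality characterisation (Lemma~\ref{lem:fotakis_sym_net}) by means of a per-edge algebraic identity, and then to feed the pointwise estimates (\ref{eq:pos1a})--(\ref{eq:pos1b}) into it edge by edge. Throughout, the underlying linear cost of $e$ is $c_e(x)=a_ex$, so that $C^\sigma(f)=\sum_{e\in E}a_e f_e(1+\sigma(f_e-1))$, and $d_e(f_e)=a_e(1+\rho(f_e-1))$, $d_e(f_e+1)=a_e(1+\rho f_e)$.

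Partition $E$ into $E^{>}=\{e:f_e>g_e\}$, $E^{<}=\{e:f_e<g_e\}$ and $E^{=}=\{e:f_e=g_e\}$. The key step is to verify, by a short direct expansion, the two per-edge identities: for $e\in E^{>}$,
$$a_e\big[(f_e-1)(\rho g_e+(\sigma-\rho)f_e)+g_e\big]=a_ef_e\big(1+\sigma(f_e-1)\big)-(f_e-g_e)\,d_e(f_e),$$
and for $e\in E^{<}\cup E^{=}$,
$$a_e\big[(f_e-1)(\rho g_e+(\sigma-\rho)f_e)+(1+\rho)g_e-\rho f_e\big]=a_ef_e\big(1+\sigma(f_e-1)\big)+(g_e-f_e)\,d_e(f_e+1).$$
Summing over all $e$, the right-hand side of the claimed inequality equals $C^\sigma(f)-\sum_{e\in E^{>}}(f_e-g_e)d_e(f_e)+\sum_{e\in E^{<}}(g_e-f_e)d_e(f_e+1)$. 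Since $f$ is an acyclic minimiser of $\Phi$ and $g$ is a feasible flow, Lemma~\ref{lem:fotakis_sym_net} states precisely that the sum of the two trailing terms is non-negative, which establishes the first inequality.

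For the second part, I would apply (\ref{eq:pos1a}) with $(x,y)=(f_e,g_e)$ on each $e\in E^{>}$ and (\ref{eq:pos1b}) with $(x,y)=(f_e,g_e)$ on each $e\in E^{<}\cup E^{=}$, multiply through by $a_e\ge0$, and sum. Combining with the first inequality gives $C^\sigma(f)\le h(\rho,\sigma)\,C^\sigma(f)+g(\rho,\sigma)\,C^\sigma(g)$, and since $h(\rho,\sigma)<1$ we may rearrange to $C^\sigma(f)/C^\sigma(g)\le g(\rho,\sigma)/(1-h(\rho,\sigma))$ for every feasible $g$, in particular the social optimum.

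I do not expect a genuine obstacle here; the only point requiring care is the two per-edge identities, where one must track that the $f_e^2$-coefficient $\sigma-\rho$ on the left combines with the $\sigma f_e^2$ term arising from $C^\sigma$, and that the signs leave a residual of exactly $\mp(f_e-g_e)d_e(\cdot)$ with argument $f_e$ on $E^{>}$ and $f_e+1$ on $E^{<}$ --- matching the two sums in Lemma~\ref{lem:fotakis_sym_net}. Everything else is routine substitution.
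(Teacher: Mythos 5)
Your proposal is correct and follows essentially the same route as the paper: both derive the first inequality by rewriting the Fotakis optimality condition of Lemma~\ref{lem:fotakis_sym_net} so that the per-edge terms reassemble into $C^\sigma(f)$ plus a non-negative remainder (your two per-edge identities check out, including the $E^=$ case where the $(g_e-f_e)d_e(f_e+1)$ term vanishes), and then feed the pointwise bounds \eqref{eq:pos1a}--\eqref{eq:pos1b} into the edge sums and rearrange using $h(\rho,\sigma)<1$. Organizing the algebra as explicit per-edge identities is a slightly cleaner presentation than the paper's, but the argument is the same.
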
 

\begin{proof}
We write $d_e(x) = a_e[1 + \sigma(x - 1)] +  a_e[(\rho - \sigma)(x - 1)]$ in the left summation, and obtain, using Lemma \ref{lem:fotakis_sym_net},
\begin{eqnarray}
\sum_{f_e > g_e} f_e a_e[1 + \sigma(f_e - 1)] &\leq& \sum_{f_e > g_e} a_e\left( g_e[1 + \rho(f_e - 1)] + f_e(\sigma - \rho)(f_e - 1)\right)  \nonumber \\
& & +\sum_{f_e < g_e} (g_e - f_e) a_e(1+\rho f_e). \nonumber 
\end{eqnarray}
Applying the inequality, we find
\begin{eqnarray}
C^\sigma(f) &=& \sum_{f_e > g_e}f_e a_e[1 + \sigma(f_e - 1)] + \sum_{f_e < g_e}f_e a_e[1 + \sigma(f_e - 1)] + \sum_{f_e = g_e}f_e a_e[1 + \sigma(f_e - 1)] \nonumber \\
& \leq & \sum_{f_e > g_e} a_e\left[ (f_e - 1)(\rho g_e + (\sigma - \rho)f_e) + g_e\right] \nonumber \\
& & +\sum_{f_e < g_e} a_e \left[ (f_e - 1)(\rho g_e + (\sigma - \rho)f_e) + (1+\rho)g_e - \rho f_e \right]+ \sum_{f_e = g_e}f_e a_e[1 + \sigma(f_e - 1)] \nonumber \\
& = & \sum_{f_e > g_e} a_e\left[ (f_e - 1)(\rho g_e + (\sigma - \rho)f_e) + g_e\right] \nonumber \\
& & +\sum_{f_e \leq g_e} a_e \left[ (f_e - 1)(\rho g_e + (\sigma - \rho)f_e) + (1+\rho)g_e - \rho f_e \right] \nonumber
\end{eqnarray}
This completes the proof. \qed
\end{proof}

We continue the proof of the upper bounds in Theorem \ref{thm:pos_upper_network} by showing the result in the statement for $0 < \rho \leq 1$, that is, we define 
$$
h(\rho,1) = 1 - \rho + \rho^2/4  \ \ \ \ \text{ and } \ \ \ \ g(\rho,1) = 1
$$
and prove the correctness of the resulting inequalities in (\ref{eq:pos1a}) and (\ref{eq:pos1b}) (see Lemma \ref{lem:rho01}). The cases $1 \leq \rho \leq 2$ and $2 \leq \rho \leq \infty$ follow (indirectly) from Caragiannis et al. \cite{Caragiannis2010Altruism}.\footnote{The model of Carigiannis et al. \cite{Caragiannis2010Altruism} is equivalent to our model under the transformation $\rho = 1/(1 - \zeta)$, where $\zeta$ is the model parameter of \cite{Caragiannis2010Altruism}. That is, the range $1 \leq \rho \leq 2$ corresponds to $\zeta \in [0,1/2]$, and the range $2 \leq \rho \leq \infty$ to $\zeta \in [1/2,1)$.} The authors use a similar approach as here, but only show the inequality in Lemma \ref{lem:fotakis_sym_net} for Nash equilibria of symmetric singleton congestion games. Nevertheless, the remainder of the analysis carries over  to our model.

\begin{lemma}\label{lem:rho01}
For any integers $x,y \geq 0$ and any $\rho \in (0,1]$ we have, when $x < y$,
$$
\rho\cdot xy + (y - x) \leq \frac{\rho^2}{4}x^2 + y^2,
$$
and, when $x \geq y$,
$$
\rho \cdot xy + (1 - \rho)(y - x) + (1- \rho)x^2 \leq \left(1 - \rho + \frac{\rho^2}{4}\right)x^2 + y^2.
$$
\end{lemma}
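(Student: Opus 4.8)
The plan is to prove both inequalities by completing the square in the variable $y$, reducing each to a single perfect square plus a harmless lower-order term.

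First I would treat the case $x < y$. Since $\frac{\rho^2}{4}x^2 - \rho xy + y^2 = \bigl(y - \tfrac{\rho}{2}x\bigr)^2$, the claimed inequality is equivalent to $\bigl(y - \tfrac{\rho}{2}x\bigr)^2 \geq y - x$. Setting $u = y - \tfrac{\rho}{2}x$ we have $y - x = u - \bigl(1 - \tfrac{\rho}{2}\bigr)x$, so the difference of the two sides equals $u^2 - u + \bigl(1 - \tfrac{\rho}{2}\bigr)x$. Now $u^2 - u \geq -\tfrac14$ for every real $u$, and $1 - \tfrac{\rho}{2} \geq \tfrac12$ because $\rho \leq 1$; hence the expression is at least $-\tfrac14 + \tfrac12 x$, which is non-negative once $x \geq 1$. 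The only remaining case is $x = 0$, which (together with $x<y$) forces $y \geq 1$, and the inequality reduces to $y^2 - y = y(y-1) \geq 0$.

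Next I would treat the case $x \geq y$. The key observation is that the coefficients of $x^2$ on the two sides differ by exactly $\tfrac{\rho^2}{4}$, since $1 - \rho + \tfrac{\rho^2}{4} - (1-\rho) = \tfrac{\rho^2}{4}$. After cancelling the $(1-\rho)x^2$ term and again using $\frac{\rho^2}{4}x^2 - \rho xy + y^2 = \bigl(y - \tfrac{\rho}{2}x\bigr)^2$, the inequality becomes $\bigl(y - \tfrac{\rho}{2}x\bigr)^2 + (1-\rho)(x - y) \geq 0$. Both summands are non-negative — the first is a square, the second because $\rho \leq 1$ and $x \geq y$ — so this case is immediate.

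There is essentially no hard step here; the only mild subtlety is that $u = y - \tfrac{\rho}{2}x$ need not be an integer, so one cannot invoke $u(u-1) \geq 0$. This is circumvented by using the real-variable bound $u^2 - u \geq -\tfrac14$ together with the integrality of $x$ (which yields $x \geq 1$ in the non-trivial subcase) and by dispatching $x = 0$ separately via the integrality of $y$.
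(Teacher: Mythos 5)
Your proof is correct and follows essentially the same route as the paper: both arguments reduce each inequality to the perfect square $\bigl(y - \tfrac{\rho}{2}x\bigr)^2$ plus a non-negative remainder, and your treatment of the case $x \ge y$ is identical to the paper's (which merely writes the same square via the substitution $a = x/y$). The only cosmetic difference is in the case $x < y$, where the paper substitutes $z = y - x$ and exhibits the difference as the manifestly non-negative sum $\bigl(1-\tfrac{\rho}{2}\bigr)^2 x^2 + (2-\rho)xz + z(z-1)$, while you instead use the real-variable bound $u^2 - u \ge -\tfrac14$ together with $x \ge 1$; both are valid.
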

\begin{proof}
Let $y = x + z$, where $z$ is a positive integer. Then we have
\begin{eqnarray}
f(x,y) &= &\frac{\rho^2}{4}x^2 + y^2 - \rho xy - (y - x) \nonumber \\ 
&=& \frac{\rho^2}{4}x^2 + (x^2 + 2xz + z^2) - \rho x(x+z) - z \nonumber \\ 
&=& \left(\frac{\rho^2}{4} + 1 - \rho\right)x^2 + \left(2 - \rho\right)xz + z(z-1) \nonumber \\
&\geq& 0, \nonumber
\end{eqnarray}
since $\rho \in (0,1]$, $x \geq 0$ and $z > 0$. \\
\indent For the second inequality, it suffices to show that
$$
g(x,y) = \frac{\rho^2}{4}x^2 + y^2 - \rho xy  - (\rho - 1)(x - y) \geq 0
$$
which can be seen by leaving out the term $(1-\rho)x^2$ on both sides of the inequality.
We first treat the case $y = 0$. Then 
$$
g(x,0) = \frac{\rho^2}{4}x^2 + (1-\rho)x \geq 0
$$
since $x \geq 0$ and $\rho \in (0,1]$. For $y \geq 1$, we write $a = x/y$ (for sake of notation). We have
\begin{eqnarray}
g(x,y)&= &\frac{\rho^2}{4}x^2 + y^2 - \rho xy - (\rho - 1)(x - y)  \nonumber \\ 
&=& \frac{\rho^2}{4}a^2y^2 + y^2 - \rho a y^2 + (1- \rho)(ay - y) \nonumber \\ 
&=& \left[\frac{(\rho a)^2}{4} - \rho a + 1\right]y^2 + (1 - \rho)(a - 1)y \nonumber \\
&=& \left[\frac{\rho a}{2} -1\right]^2y^2 +  (1-\rho)(a-1)y \nonumber \\
&=& \left[\frac{\rho x}{2y} -1\right]^2y^2 +  (1-\rho)\left(\frac{x}{y}-1\right)y \nonumber \\
&\geq& 0, \nonumber
\end{eqnarray}
since $\rho \in (0,1]$, $y \geq 1$ and $a \geq 1$. \qed
\end{proof}

It remains to show tightness of the resulting bounds. For $1 \leq \rho \leq 2$, consider an instance with two players and two resources with resp. cost functions $c_1(x) = x$ and $c_2(x) = (1 + \rho + \epsilon) x$ where $0 < \epsilon \ll \rho$. Then the unique Nash equilibrium is given by $(x_1,x_2) = (2,0)$, and the social optimum by $(x_1^*,x_2^*) = (1,1)$. Sending $\epsilon \rightarrow 0$ gives the desired bound of $4/(2+\rho)$.

For $2 \leq \rho \leq \infty$, we can use the same instance as the for $1 \leq \rho \leq 2$, with the only difference that $c_2(x) = (1 + \rho - \epsilon)x$. Then the social optimum is given by $(x_1^*,x_2^*) = (2,0)$ and the unique Nash equilibrium by $(x_1,x_2) = (1,1)$.

For the case $0 < \rho \leq 1$, the lower bound is technically more involved.

\begin{lemma}
For every fixed (rational) $0 < \rho \leq 1$, and $\epsilon > 0$, there exists a symmetric singleton congestion game for which the price of stability is greater than $4/(\rho(4 - \rho)) - \epsilon$.
\end{lemma}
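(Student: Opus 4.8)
The plan is to exhibit an explicit symmetric singleton linear congestion game, parameterised by a large integer $m$ together with a tiny perturbation $\epsilon'>0$, whose unique Nash equilibrium is far from optimal. Write $\rho=p/q$ in lowest terms (so $p\le q$ since $\rho\le1$), and set $F=2qm$ and $G=pm$, so that $\rho F=2pm$ and $G=\tfrac{\rho}{2}F$. The game $\Gamma$ will have $n=F$ players sharing the strategy set $\{A,B_1,\dots,B_L\}$, where $A$ is a ``base'' resource with cost $c_A(x)=x$, the $B_j$ are $L:=(2q-p)m$ ``expensive'' resources with common cost $c_B(x)=b\,x$ for $b:=1+\rho(F-1)+\epsilon'$, and (using $\rho<2$) we have $L\ge1$ and $b>1$. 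The intended bad profile $f$ has all $n$ players on $A$; the intended good profile $g$ has $G$ players on $A$ and exactly one player on each $B_j$, which is feasible because $G+L=pm+(2q-p)m=n$.

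First I would check that $f$ is a Nash equilibrium: a player on $A$ (load $F$) pays perceived cost $1+\rho(F-1)$, and her only possible deviation is onto some currently empty $B_j$, which costs $b=1+\rho(F-1)+\epsilon'>1+\rho(F-1)$, so she has no incentive to move. The key step is to show $f$ is the \emph{unique} equilibrium. Suppose $f'$ is a Nash equilibrium in which some $B_j$ carries load $x\ge1$; a player on it pays at least $b$, so for her not to profit by switching to $A$ (whose load $y$ then becomes $y+1$, perceived cost $1+\rho y$) we need $1+\rho y\ge b=1+\rho(F-1)+\epsilon'$, which forces the integer $y$ to satisfy $y\ge F=n$. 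But then $A$ already carries all $n$ players, contradicting $x\ge1$. Hence in every equilibrium all $B_j$ are empty, so all players lie on $A$, i.e.\ $f'=f$. (Since $\rho=1$ is already covered by the case $1\le\rho\le2$, one may even assume $\rho<1$; the bound is continuous at $\rho=1$ anyway.)

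It then remains to compute the ratio. With $\sigma=1$ and linear costs, $C^1(s)=\sum_e a_e x_e^2$, so $C^1(f)=n^2=4q^2m^2$ while $C^1(g)=G^2+L\,b=p^2m^2+(2q-p)m\bigl(1+2pm-\rho+\epsilon'\bigr)$. The dominant ($m^2$) part of the denominator is $\bigl(p^2+2p(2q-p)\bigr)m^2=(4pq-p^2)m^2$, hence $C^1(f)/C^1(g)\to 4q^2/(4pq-p^2)=4/(\rho(4-\rho))$ as $m\to\infty$ (for any fixed $\epsilon'>0$). Since $f$ is the unique Nash equilibrium and a social optimum $s^*$ satisfies $C^1(s^*)\le C^1(g)$, we get $\text{PoS}(\Gamma,\rho,1)=C^1(f)/C^1(s^*)\ge C^1(f)/C^1(g)$, so choosing $m$ large enough yields $\text{PoS}(\Gamma,\rho,1)>4/(\rho(4-\rho))-\epsilon$.

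The main obstacle is the uniqueness of the equilibrium, and this is exactly where the perturbation $\epsilon'>0$ matters: with $\epsilon'=0$ each $B_j$ is precisely a best response to $f$, so $f$ would only be a weak equilibrium and other equilibria (with players spread over the $B_j$) could survive, whereas $\epsilon'>0$ makes the relevant Nash inequalities strict and contributes only an $O(m)$ term to $C^1(g)$, negligible against the $\Theta(m^2)$ leading behaviour. As a consistency check one may also verify Fotakis' potential-minimisation criterion (Lemma~\ref{lem:fotakis_sym_net}) for $f$ against $g$: the two sums there differ by $-(2q-p)m\,\epsilon'\le0$, confirming that $f$ is the global minimiser of the Rosenthal-type potential $\Phi^\rho$.
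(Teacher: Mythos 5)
Your proposal is correct and follows essentially the same route as the paper: one cheap base resource with cost $x$ and many expensive resources with cost $(1+\rho(n-1)+\epsilon')x$, so that the all-on-the-base profile is the unique Nash equilibrium while the optimum spreads roughly $\rho n/2$ players on the base and one player on each expensive resource, giving the ratio $4/(\rho(4-\rho))$ in the limit. Your parameterization ($\rho=p/q$, $n=2qm$) differs only cosmetically from the paper's ($\rho=i/(n-1)$ with $i$ even), and your explicit uniqueness argument is in fact slightly more careful than the paper's.
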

\begin{proof}
 We choose values of $n$ and $i$ so that
$$
\rho = \frac{i}{n-1}
$$
where, without loss of generality, we may assume that $i$ is even. We will construct a congestion game with $n$ agents and $1 + (n - i/2)$ resources. We let $c_0(x) = x$ and for $e \in \{1,\dots,n-i/2\}$ we define 
$$
c_e(x) = [(1-\rho) + \rho n + \epsilon]x = [1 + i + \epsilon]x, \ \ \ \ \text{ with } 0 < \epsilon \ll \rho.
$$
A socially optimal profile $s^*$ is given by $x_0^* = i/2$ and $x_e^* = 1$ for $e \in \{1,\dots,n-i/2\}$, resulting in
\begin{eqnarray}
\text{SC}(s^*) &=& \left(\frac{i}{2}\right)^2 + \left(n - \frac{i}{2}\right)(1 + i + \epsilon) \nonumber \\
& = & \frac{\rho^2}{4}(n - 1)^2 + \left(n - \frac{1}{2}\rho(n-1)\right)[(1-\rho) + \rho n + \epsilon] \nonumber \\
& = & \frac{\rho^2}{4}(n^2 - 2n + 1) + \left( (1 - \frac{\rho}{2} )n + \frac{\rho}{2}\right)[(1-\rho) + \rho n + \epsilon] \nonumber \\
& = & \left[\frac{\rho^2}{4} + \rho(1 - \frac{\rho}{2})\right]n^2 + \left[-\frac{\rho^2}{2} + (1-\rho + \epsilon)(1 - \frac{\rho}{2}) + \frac{\rho^2}{2} \right]n + \frac{\rho^2}{4} + \frac{\rho}{2}(1-\rho + \epsilon) \nonumber \\
&=&\left[\rho\left(1 - \frac{\rho}{4}\right)\right]n^2 + \left[(1-\rho + \epsilon)(1 - \frac{\rho}{2})\right]n + \frac{\rho}{2}\left(1 - \frac{\rho}{2} + \epsilon \right) \nonumber
\end{eqnarray}

The unique Nash equilibrium is given by the strategy profile $s$ for which $x_0 = n$ and $x_e = 0$ for $e \in \{1,\dots,n-i/2\}$, since the perceived cost on resource $e = 0$ is then precisely $(1-\rho) + \rho n$, so no player can strictly improve its (perceived) cost by deviating to one of the other resources, which have cost $(1-\rho) + \rho n + \epsilon$. The social cost of this equilibrium is $n^2$, which implies that
$$
\frac{C^1(s)}{C^1(s^*)} = \frac{n^2}{\left[\rho\left(1 - \frac{\rho}{4}\right)\right]n^2 + \left[(1-\rho)(1 - \frac{\rho}{2} + \epsilon)\right]n + \frac{\rho}{2}\left(1 - \frac{\rho}{2} + \epsilon \right)} \rightarrow \frac{1}{\rho\left(1 - \rho/4\right)} 
$$
as $n \rightarrow \infty$ (note that this also means that $i \rightarrow \infty$ since $\rho$ is fixed). \qed
\end{proof}

\end{document}